\documentclass[11pt]{article}

\usepackage[a4paper, margin=2.54cm]{geometry}
\usepackage[colorlinks, backref]{hyperref}
\usepackage{tikz}
\usepackage[outdir=figure/]{epstopdf}
\usepackage{amsthm, amsmath, amssymb}
\usepackage{bm}
\usepackage{appendix}
\usepackage[numbers]{natbib}
\bibliographystyle{abbrvnat}
\usepackage{booktabs}
\theoremstyle{definition}
\newtheorem{exmp}{Example}[section]

\newtheorem{proposition}{Proposition}[section]

\usepackage{authblk}
\usepackage{subcaption}
\usepackage[font={small, it}, labelfont=bf]{caption}
\RequirePackage[capitalize,nameinlink]{cleveref}[0.19]

\crefname{section}{Section}{sections}
\crefname{subsection}{Subsection}{subsections}
\Crefname{section}{Section}{Sections}
\Crefname{subsection}{Subsection}{Subsections}
\crefname{exmp}{Example}{Examples}

\Crefname{figure}{Figure}{Figures}

\crefformat{equation}{\textup{#2(#1)#3}}
\crefrangeformat{equation}{\textup{#3(#1)#4--#5(#2)#6}}
\crefmultiformat{equation}{\textup{#2(#1)#3}}{ and \textup{#2(#1)#3}}
{, \textup{#2(#1)#3}}{, and \textup{#2(#1)#3}}
\crefrangemultiformat{equation}{\textup{#3(#1)#4--#5(#2)#6}}%
{ and \textup{#3(#1)#4--#5(#2)#6}}{, \textup{#3(#1)#4--#5(#2)#6}}{, and \textup{#3(#1)#4--#5(#2)#6}}

\Crefformat{equation}{#2Equation~\textup{(#1)}#3}
\Crefrangeformat{equation}{Equations~\textup{#3(#1)#4--#5(#2)#6}}
\Crefmultiformat{equation}{Equations~\textup{#2(#1)#3}}{ and \textup{#2(#1)#3}}
{, \textup{#2(#1)#3}}{, and \textup{#2(#1)#3}}
\Crefrangemultiformat{equation}{Equations~\textup{#3(#1)#4--#5(#2)#6}}%
{ and \textup{#3(#1)#4--#5(#2)#6}}{, \textup{#3(#1)#4--#5(#2)#6}}{, and \textup{#3(#1)#4--#5(#2)#6}}

\title{Limits of accuracy for parameter estimation and localisation in Single-Molecule Microscopy via sequential Monte Carlo methods}
\author[$1$, $2$]{Alix Marie d'Avigneau}
\author[$2$]{Sumeetpal S. Singh}
\author[$1$]{Raimund J. Ober}
\affil[$1$]{\small Center  for  Cancer  Immunology,  Faculty  of  Medicine,  University  of  Southampton,  Southampton,  UK}
\affil[$2$]{\small Signal  Processing  and  Communications  Group,  Department  of  Engineering,  University  of  Cambridge, Cambridge, UK}

\date{}

\usepackage{enumitem}
\newlist{steps}{enumerate}{1}
\setlist[steps, 1]{label = \textit{Step} \arabic*}

\usepackage{algorithm}
\usepackage{algpseudocode}

\usepackage[only,llbracket,rrbracket,llparenthesis,rrparenthesis]{stmaryrd}
\usepackage{bm}
\usepackage{bbm}
\newcommand{\norm}[1]{\left\lVert#1\right\rVert}
\DeclareMathOperator*{\argmax}{argmax} 
\numberwithin{equation}{section}

\usepackage{subcaption}
\usepackage[graphicx]{realboxes}

\begin{document}\sloppy

\maketitle

\begin{abstract}
	 Assessing the quality of parameter estimates for models describing the motion of single molecules in cellular environments is an important problem in fluorescence microscopy. In this work, we consider the fundamental data model, where molecules emit photons at random time instances and these photons arrive at random locations on the detector according to complex point spread functions (PSFs). The randomness and non-Gaussian PSF of the detection process, and the random trajectory of the molecule, makes inference challenging. Moreover, the presence of other closely spaced molecules causes further uncertainty in the origin of the measurements, which impacts the statistical precision of the estimates. We quantify the limits of accuracy of model parameter estimates and separation distance between closely spaced molecules (known as the resolution problem) by computing the Cram\'{e}r-Rao lower bound (CRLB), or equivalently the inverse of the Fisher information matrix (FIM), for the variance of estimates. Results on the CRLB obtained from the fundamental model are crucial, in that they provide a lower bound for more practical scenarios. While analytic expressions for the FIM can be derived for static and deterministically moving molecules, the analytical tools to evaluate the FIM for molecules whose trajectories follow stochastic differential equations (SDEs) are still for the most part missing. 
	We address this by presenting a general sequential Monte Carlo (SMC) based methodology for both parameter inference and computing the desired accuracy limits for non-static molecules and a non-Gaussian fundamental detection model. For the first time, we are able to estimate the FIM for stochastically moving molecules observed through the Airy and Born and Wolf detection models. This is achieved by estimating the score and observed information matrix via SMC. We summarise the outcome of our numerical work by delineating the qualitative behaviours for the accuracy limits as functions of various experimental settings like collected photon count, molecule diffusion, etc. We also verify that we can recover known results from the static molecule case.
\, \par
\emph{Keywords} : Single-molecule microscopy, Fluorescence microscopy, Particle filtering, Particle smoothing, Sequential Monte Carlo (SMC), Fisher information matrix (FIM), Stochastic differential equations (SDEs)
\end{abstract}

\section{Introduction}

\subsection{Motivation}
In recent years, \textit{single-molecule microscopy} has become a powerful tool in cell biology \cite{saxton1997single, saxton1997single2}. It has allowed significant insight to be gained into the behaviour of single molecules in cellular environments using \textit{fluorescence microscopy}. Single-molecule fluorescence microscopy (see \cite{shashkova2017single, moerner2003methods} for reviews) consists of using a suitable fluorophore to label the molecule(s) of interest, exciting said fluorophore with a specific light source and capturing the fluorescence or photons emitted by the molecule(s) through an optical microscope system onto a detector during a fixed acquisition time. 
Many biological applications rely on being able to accurately track moving molecules (or localise them in the static case) and also estimate their model parameters.
Molecule location estimates, which are themselves useful, are also used to estimate the separation distance between two closely spaced molecules \cite{saxton1997single, michalet2003power}, which is needed to quantify the microscopy technique's resolution (discussed below). By model parameters, we mean the drift and diffusion coefficients that describe the motion of randomly moving molecules, but also more generally other parameters for any assumed statistical elements/model for the image acquisition pipeline (see \cref{ex:mol:1}, \cref{ex:smm:fimest}, \cref{sec:mol:exp}). In addition to solving these estimation problems by devising appropriate numerical techniques to compute them, it is also essential to quantify their accuracy, and tools from statistical estimation theory such as the  \textit{Cram\'{e}r-Rao lower bound} (CRLB) \cite{cramer1999mathematical, rao1992information, frechet1943extension, darmois1945limites} are popular in the microscopy literature \cite{ober2004localization, chao2016fisher, ram2006beyond}. Not only is the CRLB able to quantify the accuracy of the estimates, it can also provide the qualitative relationship between estimation accuracy and
various experimental settings, such as the average number of photons captured by the detector, the speed of one or multiple diffusing molecules, or the distance between molecules, which is particularly important in experimental design. For example, one might aim to evaluate how an increase in the speed (or diffusion) of a stochastically diffusing molecule might reduce the accuracy of estimates for its mean location, and whether this loss in accuracy can be mitigated by increasing the mean number of photons captured by the detector.
\subsection{Methods for assessing the accuracy of parameter estimates}
In the past, in the context of the resolution problem,  \textit{Rayleigh's criterion} \cite{born2013principles} has been used to define the minimum distance between two point sources such that they can be distinguished in the image. However, a drawback of employing Rayleigh's criterion is that it ignores the statistical aspect of the separation distance estimation problem. For example, it doesn't account for the fact that each new observation (taking the form of a captured emitted photon) brings new information on the separation distance. In contrast, in estimation theory, the CRLB establishes a lower bound on the variance of unbiased estimates, and is therefore often used as a benchmark for the quality of a given estimator. As a result, the CRLB plays an important part in experimental design for single-molecule microscopy \cite{ober2004localization, ram2006stochastic}. For example, in \cite{ram2006beyond, ram2013stochastic}, the authors present an improved microscope resolution measure in the form of the square root of the CRLB for the separation distance between two molecules, which is referred to as the \textit{limit of accuracy} with which the separation distance between the two objects can be estimated based on the observed data. A particular advantage of this new resolution measure is that it predicts that increasing the photon count makes it possible to estimate a separation distance between two molecules that is shorter than Rayleigh's criterion. In the context of localisation and estimation of parameters for models describing the motion of a single molecule, we also quantify the limits of accuracy for these model parameter estimates by computing the CRLB. \par 
Evaluating these limits of accuracy is a challenging task. In this paper, we consider the \textit{fundamental data model} \cite{ober2004localization, ram2006stochastic}, which is crucial in that it provides more easily computed lower bounds for the limits of accuracy of more realistic practical models, where factors such as pixelisation and readout noise come into play and make inference more challenging \cite{vahid2020effect}. Indeed, the limits of accuracy derived for the fundamental model are often known as the \textit{fundamental limits of accuracy}. In this model, the detection process of the emitted fluorescence already presents its own challenges, as it is intrinsically random both in time and location. While many methods \cite{calderon2015inferring, calderon2016motion, calderon2013quantifying} have assumed that the arrival times of the photons on the detector were uniformly distributed, \cite{ober2004localization, ram2006stochastic} suggest that the arrival times of photons follow a Poisson process. As for the arrival location of these photons on the detector, a wide range of measurement models exist $-$ corresponding to the various types of detector. The typical measurement model used for an in-focus source is the \textit{Airy profile} \cite{vahid2020fisher, chao2016fisher}. If the molecule is out of focus, 3D models are generally used instead, such as the \textit{Born and Wolf model} \cite{born2013principles}. Often, these models make parameter inference difficult, and researchers have often opted for a Gaussian approximation to these models, such as in \cite{berglund2010statistics, relich2016estimation, michalet2012optimal}. However, \cite{vahid2020fisher} argue that in practice, assuming Gaussian distributed photon locations on the detector is not an accurate approximation of the underlying model. \par
While it is important to be able to accurately study the behaviours and interactions of single molecules within a cell, it is especially challenging when those molecules have stochastic trajectories.
The motion of an object in a cellular environment is affected by a multitude of deterministic, as well as random factors \cite{briane2018statistical}, and in many applications \cite{vahid2020fisher, calderon2016motion}, the trajectories of single molecules are modelled by \textit{stochastic differential equations} (SDEs) \cite{oksendal2013stochastic}.
The CRLB is obtained by taking the inverse of the \textit{Fisher information matrix} (FIM), and analytical expressions for the FIM, and thus the limit of accuracy (given by the square root of the CRLB) for the location of an in-focus \textit{static} (or unmoving) molecule have been derived in \cite{ober2004localization, chao2016fisher}. Similar results for an out-of-focus static molecule are available in \cite{ober2020quantitativech19}, and analytical expressions have also been derived in the context of molecules with deterministic linear or circular trajectories in \cite{wong2010limit}. As for the resolution problem, it is addressed in \cite{ram2006beyond, ram2006stochastic} in a static molecule context and in \cite{lin2015limit} for two dynamic molecules with deterministic trajectories. However, when molecules have stochastic trajectories, the analytical tools to obtain the CRLB and tackle many of these problems are still for the most part missing. In this paper, we propose a numerical approach to address these problems. \par
In the context of stochastically moving molecules, \cite{vahid2020fisher} developed a method to obtain the FIM for a molecule whose trajectory is described by a linear SDE. For a 2D Gaussian approximation of the photon detection process, the authors take advantage of the Kalman filter formulae to obtain an analytical form  for the FIM for a specific set of photon detection times. However, if the Airy profile is used instead, the computational cost of performing numerical integration becomes prohibitive for more than a single photon. Among other things, we build on \cite{vahid2020fisher} and provide effective methodological advances which enable the estimation of the FIM for the hyperparameters of models with Airy and Born and Wolf distributed photon locations.\par 
\subsection{Contributions}
In this paper, we develop an effective and general numerical framework to obtain \textit{sequential Monte Carlo} (SMC) approximations of expectations of interest, including for stochastically moving molecules. The ability to approximate these expectations is important for estimating the \textit{score} and \textit{observed information matrix} (OIM) for the hyperparameters of interest, and can also be employed to obtain maximum likelihood (ML) estimates of said hyperparameters. Access to the score and/or OIM is vital in order to be able to estimate the FIM. To achieve this, the observation interval is first discretised and the problem reformulated as a discrete-time state space model, which takes into account the random arrival times of photons on the detector in the form of missing observations. Then, a particle filter is employed in conjunction with \textit{forward smoothing} methods \cite{del2010forward, olsson2017efficient} to obtain particle approximations of the expectations of interest. Our work complements \cite{ashley2015method}, in which the authors similarly employed time discretisation of the observation interval, but they did not attempt to estimate the CRLB for hyperparameters.  
With our approach, we are for the first time able to obtain the limits of accuracy for parameters of a single molecule whose trajectory follows an SDE, thus providing new insights beyond existing results for molecules that are static or following a deterministic trajectory. Our SMC-based methodology is also more general than the Kalman filter-based approach of \cite{vahid2020fisher}, and has no systemic limitations (i.e. variance in estimates of the limits of accuracy can always be reduced by increasing the number of Monte Carlo samples). We are also able to generalise results for the optical microscope resolution problem from considering the separation distance between two static molecules to that between two stochastically diffusing molecules.\par
The numerical experiments in this paper consist first of applying the methodology to estimate the limit of accuracy for a single stochastically moving molecule with 2D Gaussian, Airy, and Born and Wolf photon detection models by using estimates of the score and OIM obtained by forward smoothing. This is repeated for various expected mean photon counts to verify that for molecules with stochastic trajectories, the limit of accuracy exhibits an inverse square root decay with respect to mean photon count, i.e. the uncertainty of the hyperparameter estimates decreases as the expected number of photons increases. This has already been proven for static molecules \cite{ober2020quantitativech18, ober2020quantitativech19, ram2006beyond}. The methodology is also applied in the context of the optical microscope resolution problem to obtain estimates of the limit of accuracy for the mean separation distance between two closely spaced diffusing molecules. Thanks to our numerical approach, insights can be obtained into the generalisation to diffusing molecules of results proven in \cite{ram2013stochastic} on this resolution problem for two static molecules. For instance, in \cite{ober2004localization}, it was shown that the limit of accuracy for the location of a static molecule has a linear relationship with the standard deviation of the photon detection profile. From our numerical results, we show that when molecules are diffusing, the appropriate relationship behaves qualitatively with 
the diffusion coefficient standard deviation in a similar way, i.e. it can be translated into additional observation uncertainty. The qualitative relationships observed through our numerical experiments for stochastically moving molecules are summarised in \cref{tab:smm:dynamic}. \par
This paper is structured as follows. In \cref{sec:mol:model}, the model is presented, including the molecule trajectory, described by a stochastic differential equation (SDE), and the photon detection time and location processes. In \cref{sec:mol:hmm}, the model is formulated as a discrete-time state space model with a discretised observation interval. Then, \cref{sec:mol:parinf} establishes the main parameter inference aims and methods, which consist of particle filtering and smoothing of additive functionals in order to estimate the score and OIM for hyperparameters, and methods to estimate the FIM from the score and OIM. Numerical experiments are run in \cref{sec:mol:exp} to first estimate the limit of accuracy for the drift and diffusion coefficients of the SDE for all photon detection profiles and then estimate the limit of accuracy for the separation distance between two dynamic molecules. Finally, \cref{sec:smm:conc} provides concluding remarks.
\section{Model specification}
\label{sec:mol:model}
For the purpose of this paper, a basic optical system is considered, also known in \cite{vahid2020fisher, chao2016fisher} as the \textit{fundamental data model}. See \cref{fig:opt} for an overview of the optical system. Under the fundamental model, we assume that the photons are observed under ideal conditions, in which the detector $\mathcal{Y}=\mathbb{R}^2$ is non-pixelated. This model does not describe image data obtained from actual microscopy experiments the way more realistic, or \textit{practical} models do. However, the fundamental model is crucial, in that it offers an obtainable lower bound to the CRLB of parameters of the more realistic practical model, which is much more difficult to obtain. In this section, the various aspects of the model are described. These include the true molecule trajectory, occurring in the object space, the photon detection locations in the image space, and the times at which photons arrive on the detector.

\tikzset{every picture/.style={line width=0.75pt}} 
\begin{figure}[htbp!]
	\centering
	\begin{tikzpicture}[x=0.65pt,y=0.55pt,yscale=-1,xscale=1]
		
		\draw  [color={rgb, 255:red, 128; green, 128; blue, 128 }  ,draw opacity=1 ][fill={rgb, 255:red, 230; green, 230; blue, 230 }  ,fill opacity=1 ] (145.67,42.94) -- (145.67,216.6) -- (69.91,252.28) -- (69.91,78.63) -- cycle ;
		\draw  [color={rgb, 255:red, 128; green, 128; blue, 128 }  ,draw opacity=1 ][fill={rgb, 255:red, 126; green, 211; blue, 33 }  ,fill opacity=1 ] (95.8,143.21) .. controls (95.8,125.69) and (109.64,126.39) .. (112.41,143.21) .. controls (115.17,160.04) and (122.55,152.33) .. (120.71,160.04) .. controls (118.86,167.75) and (111.48,176.16) .. (99.49,166.35) .. controls (87.5,156.53) and (95.8,160.74) .. (95.8,143.21) -- cycle ;
		\draw  [fill={rgb, 255:red, 218; green, 217; blue, 217 }  ,fill opacity=1 ] (536.67,41.94) -- (536.67,215.6) -- (460.91,251.28) -- (460.91,77.63) -- cycle ;
		\draw    (310.33,13.83) -- (312.67,289) ;
		\draw  [draw opacity=0] (311.68,194.97) .. controls (299.35,181.23) and (291.67,161.79) .. (291.67,140.25) .. controls (291.67,118.97) and (299.17,99.74) .. (311.23,86.03) -- (355.83,140.25) -- cycle ; \draw   (311.68,194.97) .. controls (299.35,181.23) and (291.67,161.79) .. (291.67,140.25) .. controls (291.67,118.97) and (299.17,99.74) .. (311.23,86.03) ;
		\draw  [draw opacity=0] (311.84,86.23) .. controls (324.25,99.89) and (332,119.27) .. (332,140.76) .. controls (331.99,161.99) and (324.43,181.16) .. (312.27,194.79) -- (267.83,140.75) -- cycle ; \draw   (311.84,86.23) .. controls (324.25,99.89) and (332,119.27) .. (332,140.76) .. controls (331.99,161.99) and (324.43,181.16) .. (312.27,194.79) ;
		\draw    (228.33,20) -- (307.04,20) ;
		\draw [shift={(310.04,20)}, rotate = 180] [fill={rgb, 255:red, 0; green, 0; blue, 0 }  ][line width=0.08]  [draw opacity=0] (10.72,-5.15) -- (0,0) -- (10.72,5.15) -- (7.12,0) -- cycle    ;
		\draw    (24,20) -- (102.71,20) ;
		\draw [shift={(21,20)}, rotate = 0] [fill={rgb, 255:red, 0; green, 0; blue, 0 }  ][line width=0.08]  [draw opacity=0] (10.72,-5.15) -- (0,0) -- (10.72,5.15) -- (7.12,0) -- cycle    ;
		\draw    (519,20) -- (597.71,20) ;
		\draw [shift={(600.71,20)}, rotate = 180] [fill={rgb, 255:red, 0; green, 0; blue, 0 }  ][line width=0.08]  [draw opacity=0] (10.72,-5.15) -- (0,0) -- (10.72,5.15) -- (7.12,0) -- cycle    ;
		\draw    (314,20) -- (392.71,20) ;
		\draw [shift={(311,20)}, rotate = 0] [fill={rgb, 255:red, 0; green, 0; blue, 0 }  ][line width=0.08]  [draw opacity=0] (10.72,-5.15) -- (0,0) -- (10.72,5.15) -- (7.12,0) -- cycle    ;
		\draw  [color={rgb, 255:red, 0; green, 0; blue, 0 }  ,draw opacity=1 ][fill={rgb, 255:red, 65; green, 117; blue, 5 }  ,fill opacity=1 ] (476.6,120.97) .. controls (486.71,75.56) and (506.25,68.42) .. (519,113) .. controls (531.75,157.58) and (531.45,142.4) .. (531.19,166.33) .. controls (530.93,190.25) and (510.97,209.79) .. (484.69,183.33) .. controls (458.41,156.88) and (466.49,166.39) .. (476.6,120.97) -- cycle ;
		\draw [color={rgb, 255:red, 0; green, 0; blue, 0 }  ,draw opacity=1 ][fill={rgb, 255:red, 255; green, 255; blue, 255 }  ,fill opacity=1 ]   (152.33,105) -- (114.6,137.7) ;
		\draw [shift={(112.33,139.67)}, rotate = 319.09000000000003] [fill={rgb, 255:red, 0; green, 0; blue, 0 }  ,fill opacity=1 ][line width=0.08]  [draw opacity=0] (8.93,-4.29) -- (0,0) -- (8.93,4.29) -- cycle    ;
		\draw    (21,150) -- (598.67,150) ;
		\draw [color={rgb, 255:red, 0; green, 0; blue, 0 }  ,draw opacity=1 ][fill={rgb, 255:red, 255; green, 255; blue, 255 }  ,fill opacity=1 ]   (557,61) -- (512.2,89.39) ;
		\draw [shift={(509.67,91)}, rotate = 327.63] [fill={rgb, 255:red, 0; green, 0; blue, 0 }  ,fill opacity=1 ][line width=0.08]  [draw opacity=0] (8.93,-4.29) -- (0,0) -- (8.93,4.29) -- cycle    ;
		\draw  (109.24,158.95) -- (109.24,67.84)(32.33,184.71) -- (117.79,145.96) (104.24,77.11) -- (109.24,67.84) -- (114.24,72.57) (39.33,186.53) -- (32.33,184.71) -- (39.33,176.53)  ;
		\draw  (499.24,158.61) -- (499.24,67.51)(422.33,184.37) -- (507.79,145.63) (494.24,76.77) -- (499.24,67.51) -- (504.24,72.24) (429.33,186.2) -- (422.33,184.37) -- (429.33,176.2)  ;
		\draw  [fill={rgb, 255:red, 0; green, 0; blue, 0 }  ,fill opacity=1 ] (105.2,163.27) .. controls (105.2,163.05) and (105.38,162.87) .. (105.6,162.87) .. controls (105.82,162.87) and (106,163.05) .. (106,163.27) .. controls (106,163.49) and (105.82,163.67) .. (105.6,163.67) .. controls (105.38,163.67) and (105.2,163.49) .. (105.2,163.27) -- cycle ;
		\draw    (150.33,233) -- (108.76,169.34) ;
		\draw [shift={(107.67,167.67)}, rotate = 416.85] [color={rgb, 255:red, 0; green, 0; blue, 0 }  ][line width=0.75]    (4.37,-1.32) .. controls (2.78,-0.56) and (1.32,-0.12) .. (0,0) .. controls (1.32,0.12) and (2.78,0.56) .. (4.37,1.32)   ;
		\draw  [fill={rgb, 255:red, 0; green, 0; blue, 0 }  ,fill opacity=1 ] (504.6,177.5) .. controls (504.6,176.23) and (503.57,175.2) .. (502.3,175.2) .. controls (501.03,175.2) and (500,176.23) .. (500,177.5) .. controls (500,178.77) and (501.03,179.8) .. (502.3,179.8) .. controls (503.57,179.8) and (504.6,178.77) .. (504.6,177.5) -- cycle ;
		\draw    (424.33,230.33) -- (500.62,180.89) ;
		\draw [shift={(502.3,179.8)}, rotate = 507.05] [color={rgb, 255:red, 0; green, 0; blue, 0 }  ][line width=0.75]    (4.37,-1.32) .. controls (2.78,-0.56) and (1.32,-0.12) .. (0,0) .. controls (1.32,0.12) and (2.78,0.56) .. (4.37,1.32)   ;
		\draw [color={rgb, 255:red, 128; green, 128; blue, 128 }  ,draw opacity=1 ] [dash pattern={on 0.84pt off 2.51pt}]  (102,136.33) -- (301.39,97.57) ;
		\draw [shift={(304.33,97)}, rotate = 529] [fill={rgb, 255:red, 128; green, 128; blue, 128 }  ,fill opacity=1 ][line width=0.08]  [draw opacity=0] (5.36,-2.57) -- (0,0) -- (5.36,2.57) -- cycle    ;
		\draw [color={rgb, 255:red, 128; green, 128; blue, 128 }  ,draw opacity=1 ] [dash pattern={on 0.84pt off 2.51pt}]  (107.67,167.67) -- (303.35,188.68) ;
		\draw [shift={(306.33,189)}, rotate = 186.13] [fill={rgb, 255:red, 128; green, 128; blue, 128 }  ,fill opacity=1 ][line width=0.08]  [draw opacity=0] (5.36,-2.57) -- (0,0) -- (5.36,2.57) -- cycle    ;
		\draw [color={rgb, 255:red, 128; green, 128; blue, 128 }  ,draw opacity=1 ] [dash pattern={on 0.84pt off 2.51pt}]  (320,97) -- (489.34,111.41) ;
		\draw [shift={(492.33,111.67)}, rotate = 184.86] [fill={rgb, 255:red, 128; green, 128; blue, 128 }  ,fill opacity=1 ][line width=0.08]  [draw opacity=0] (5.36,-2.57) -- (0,0) -- (5.36,2.57) -- cycle    ;
		\draw [color={rgb, 255:red, 128; green, 128; blue, 128 }  ,draw opacity=1 ] [dash pattern={on 0.84pt off 2.51pt}]  (319,186.33) -- (493.36,164.05) ;
		\draw [shift={(496.33,163.67)}, rotate = 532.72] [fill={rgb, 255:red, 128; green, 128; blue, 128 }  ,fill opacity=1 ][line width=0.08]  [draw opacity=0] (5.36,-2.57) -- (0,0) -- (5.36,2.57) -- cycle    ;
		
		\draw (320.33,71.5) node [anchor=north west][inner sep=0.75pt]  [font=\small] [align=left] {\begin{minipage}[lt]{51.17pt}\setlength\topsep{0pt}
				\begin{center}
					lens system
				\end{center}	
		\end{minipage}};
		\draw (121,11.83) node [anchor=north west][inner sep=0.75pt]   [align=left] {object space};
		\draw (414.67,12.5) node [anchor=north west][inner sep=0.75pt]   [align=left] {image space};
		\draw (92.02,220.47) node  [font=\footnotesize,xslant=0.08] [align=left] {\begin{minipage}[lt]{26.746644000000003pt}\setlength\topsep{0pt}
				\begin{center}
					object \\plane
				\end{center}	
		\end{minipage}};
		\draw (152,88.67) node [anchor=north west][inner sep=0.75pt]   [align=left] {object};
		\draw (481.69,219.47) node  [font=\footnotesize,xslant=0.08] [align=left] {\begin{minipage}[lt]{27.200000000000003pt}\setlength\topsep{0pt}
				\begin{center}
					image \\plane
				\end{center}	
		\end{minipage}};
		\draw (560,49.67) node [anchor=north west][inner sep=0.75pt]   [align=left] {detector};
		\draw (37.33,184.07) node [anchor=north west][inner sep=0.75pt]    {$e_{o,1}$};
		\draw (114.67,64.73) node [anchor=north west][inner sep=0.75pt]    {$e_{o,2}$};
		\draw (555.67,129.67) node [anchor=north west][inner sep=0.75pt]   [align=left] {\begin{minipage}[lt]{35.054pt}\setlength\topsep{0pt}
				\begin{center}
					optical \\axis
				\end{center}	
		\end{minipage}};
		\draw (428.67,182.07) node [anchor=north west][inner sep=0.75pt]    {$e_{i,1}$};
		\draw (504.67,56.73) node [anchor=north west][inner sep=0.75pt]    {$e_{i,2}$};
		\draw (147.33,231.57) node [anchor=north west][inner sep=0.75pt]    {$X( t )$};
		\draw (124.33,254) node [anchor=north west][inner sep=0.75pt]  [font=\footnotesize] [align=left] {\begin{minipage}[lt]{57.120000000000005pt}\setlength\topsep{0pt}
				\begin{center}
					object location at time $\displaystyle t $ 
				\end{center}	
		\end{minipage}};
		\draw (400,230.23) node [anchor=north west][inner sep=0.75pt]    {$Y( t )$};
		\draw (350.33,249.33) node [anchor=north west][inner sep=0.75pt]  [font=\footnotesize] [align=left] {\begin{minipage}[lt]{75.70644pt}\setlength\topsep{0pt}
				\begin{center}
					location of detected photon at time $\displaystyle t $ 
				\end{center}	
		\end{minipage}};
	\end{tikzpicture}
	\caption[Illustration of an optical microscope.]{Illustration of an optical microscope. At time $t\geq t_0$, the molecule is located at $X(t)$ in the object space and might be moving along the object plane. If the molecule is out of focus, it will instead move along a plane parallel to the object plane but displaced along the optical axis. The molecule emits photons through the lens system into the image space and its image is acquired on the planar detector $\mathcal{Y}$ located on the image plane. The location of the detected photons at time $t$ is denoted by $Y(t)$.} \label{fig:opt}
\end{figure}
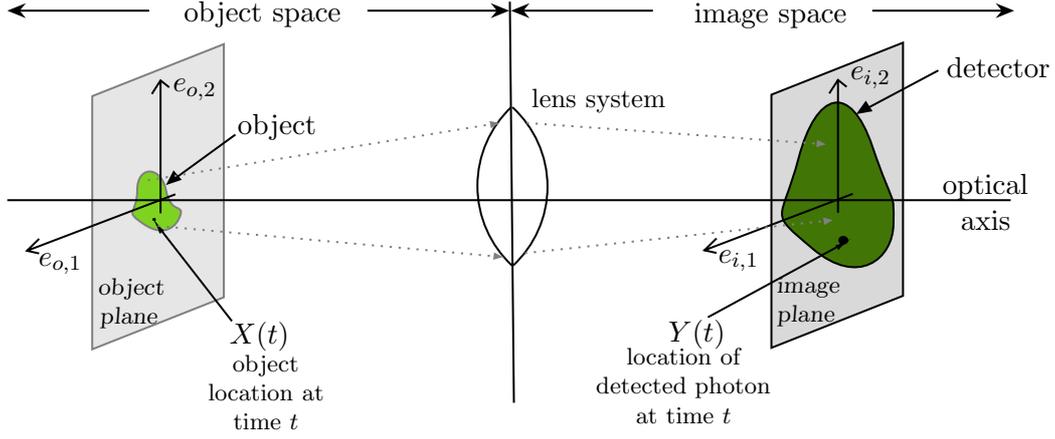

\subsection{Molecule trajectory}
\label{subsec:x}
For notational simplicity, let $X_t := X(t)\in \mathbb{R}^d$ denote the true, $d$-dimensional location of the molecule at time $t$. Given hyperparameters $\theta$, let $f_{s, t}^{\theta}(x_t | x_s)$ denote the probability density function of $X_t$ given the previous location $X_s$. Assume that the molecule trajectory $(X_t)_{t_0\leq t\leq T}$ follows a linear \textit{stochastic differential equation} (SDE)
\begin{equation}
	\label{eq:sde}
	\text{d}X_t = b(t, X_t)\text{d}t+\sigma(t, X_t)\text{d}B_t,
\end{equation}
where $b(t, X_t):=b_0+b_1(t)X_t$ and $\sigma(t, X_t) := \sigma(t)$ represent the \textit{drift} and \textit{diffusion} coefficients, respectively, $b_0$ is the zero order drift coefficient, and $(\text{d}B_t)_{t_0\leq t\leq T}$ is a Wiener process with $\mathbb{E}\left[\text{d}B_t \text{d}B_t^{\intercal}\right]=\mathbb{I}_{d\times d}$. According to \cite{jazwinski2007stochastic, evans2012introduction} the solution to the SDE in \eqref{eq:sde} at discrete time points $t_0 < t_1 < \ldots$ is given by
\begin{equation}
	\label{eq:x:sol}
	X_{t_{i+1}} = \Phi(t_i, t_{i+1})X_{t_i} + a(t_i, t_{i+1}) + W_g(t_i, t_{i+1}),
\end{equation}
where the \textit{fundamental matrix function} $\Phi \in \mathbb{R}^{d\times d}$ satisfies the following for all $s, t, u \geq t_0$
\begin{align}
	\label{eq:phi}
	\frac{\text{d}\Phi(s, t)}{\text{d}t}&=b_1(t)\Phi(s, t), \\
	\Phi(t, t) = \mathbb{I}_{d\times d},&\qquad \Phi(s, t)\Phi(t, u) = \Phi(s, u) \nonumber,
\end{align}
the vector $a(t_i, t_{i+1}) \in \mathbb{R}^d$ is given by
\begin{equation*}
	a(t_i, t_{i+1}) = \int_{t_i}^{t_{i+1}}b_0\Phi(t_i, t)\text{d}t,
\end{equation*}
and finally the process $\left(W_g(t_i, t_{i+1}) = \int_{t_i}^{t_{i+1}}\Phi(t_i, t)\sigma(t)\text{d}B_t \right)_{i=1}^{\infty}$ is a white noise sequence with mean zero and covariance
\begin{equation}
	\label{eq:r}
	R(t_i, t_{i+1}) = \int_{t_{i}}^{t_{i+1}}\Phi(t_i, t)\sigma(t)\sigma^{\intercal}(t)\Phi^{\intercal}(t_i, t)\text{d}t.
\end{equation}
Therefore, the transition density $f_{t_{i+1},t_{i}}^{\theta}(x'|x)$ can be expressed as a Gaussian with mean $\mu(x, t_i, t_{i+1}) = \Phi(t_i, t_{i+1})x+a(t_i, t_{i+1})$ and covariance $R(t_i, t_{i+1})$:
\begin{equation}
	\label{eq:x}
	X_{t_{i+1}} | (X_{t_i}=x) \sim \mathcal{N}\left(\mu(x, t_i, t_{i+1}), R(t_i, t_{i+1})\right).
\end{equation}

\begin{exmp}
	\label{ex:mol:1}
	Let the trajectory of a molecule be given by the following SDE
	\begin{equation} \label{eq:sde:ex1}
		\text{d}X_t = b\mathbb{I}_{d\times d}X_t\text{d}t + \sqrt{2}\sigma\text{d}B_t,
	\end{equation}
	where in the drift term $b \in \mathbb{R}$, in the diffusion term $\sigma>0$, and $(\text{d}B_t)_{t_0\leq t\leq T}$ is a Wiener process and let $\theta = (\sigma^2, b)$. Assuming the time points $t_0, t_1, \ldots$ are equidistant, i.e. $t_{i+1} - t_{i} = \Delta$ for all $i = 0, 1, \ldots$, let the fundamental matrix $\Phi_{\Delta} := \varphi^{\theta}_{\Delta}\mathbb{I}_{d\times d}$ where $\varphi^{\theta}_{\Delta} \in \mathbb{R}$ and the covariance matrix $R_{\Delta} := r^{\theta}_{\Delta}\mathbb{I}_{d\times d}$ where $r^{\theta}_{\Delta}>0$. Then, by solving \eqref{eq:phi} and plugging the result into \eqref{eq:r}, we obtain 
	\begin{equation*}
		\varphi^{\theta}_{\Delta} = 
		\begin{cases}
			e^{\Delta b}, & \mathrm{if}\;b\neq0,\\
			1, & \mathrm{if}\;b=0,
		\end{cases} 
		\qquad\text{and}\qquad r^{\theta}_{\Delta} = 
		\begin{cases}
			\frac{\sigma^2}{b}\left(e^{2\Delta b}-1\right), & \mathrm{if}\;b\neq0,\\
			2\sigma^2\Delta, & \mathrm{if}\;b=0.
		\end{cases}
	\end{equation*}
	The initial distribution $X_{t_0}\sim\mathcal{N}(x_{0}, P_0)$ has covariance matrix $P_0 = p_0^2\mathbb{I}_{d\times d}$ where $p_0 \in \mathbb{R}$. \par 
	In a 2D setting (i.e. $d=2$), let the drift $b=-10$ s$^{-1}$, the diffusion $\sigma^2 = 1$ $\mu$m$^2/$s and the initial covariance $p_0^2 = 10^{-2}$ $\mu$m$^2$ and mean $x_0 = (4.4, 4.4)^\intercal$ $\mu$m. Note that for the purpose of this example, the initial covariance matrix is diagonal, but there is no restriction to employing a more general, non-diagonal initial covariance matrix. By simulating the molecule trajectory for the time interval $[0, 0.1]$ seconds, we obtain the trajectory in \cref{fig:mol:true}.
\end{exmp}
\begin{figure*}[htbp!]
	\centering
	\includegraphics[width=.35\textwidth]{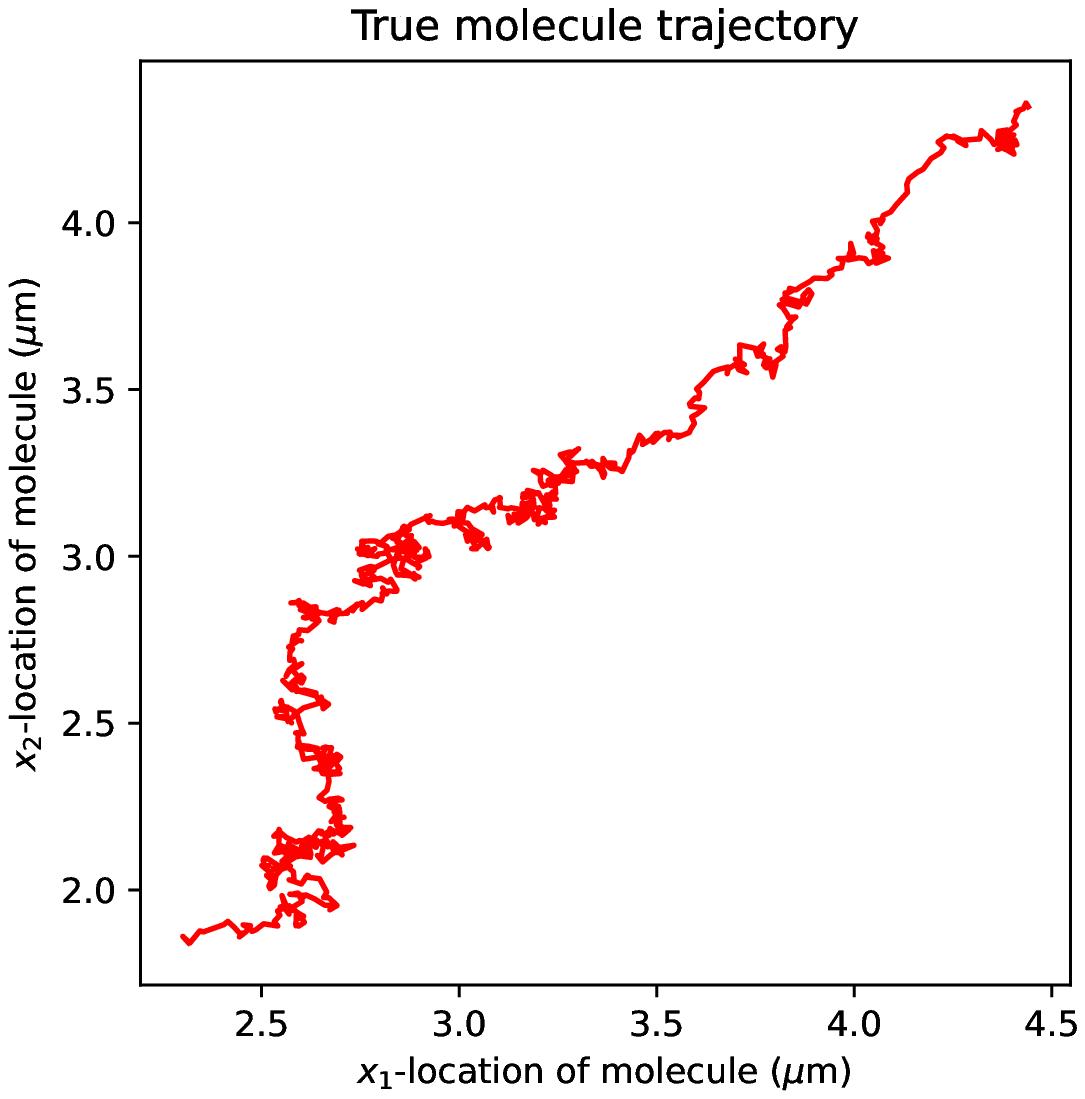}
	\caption[Trajectory of a molecule in the object space with stochastic trajectory.]{Trajectory of a molecule in the object space with stochastic trajectory described in \eqref{eq:sde:ex1} and with diffusion and drift coefficients $\sigma^2 = 1$ $\mu$m$^2/$s and $b=-10$ s$^{-1}$, respectively. The molecule moves during an interval of $[0, 0.1]$ seconds and its initial location is Gaussian distributed with mean $x_0 = (4.4, 4.4)^\intercal$ $\mu$m and covariance $P_0 = 10^{-2}\mathbb{I}_{2\times 2}$ $\mu$m$^2$.}
	\label{fig:mol:true}
\end{figure*}
\subsection{Photon detection locations}
The true molecule trajectory cannot be observed directly. Instead, a fluorescence microscope is used: the molecule of interest is labelled using a suitable fluorophore, magnified through a lens system and the photons it emits arrive on a detector $\mathcal{Y}:=\mathbb{R}^2$ for a fixed time period (see \cref{fig:opt}). The arrival location of a photon on the detector is random, and using the typical approximation of the optical microscope from \cite{goodman2005introduction}, it can be described as follows. Let $Y \in \mathcal{Y}$ denote the observed location of a detected photon. For an object located at $(x_{0, 1}, x_{0, 2}, z_{0}) \in \mathbb{R}^3$ in the object space, its \textit{photon distribution profile} \cite{ram2006stochastic} is given by the density
\begin{equation}
	\label{eq:image}
	g_{\theta}(y\vert x) := \frac{1}{\vert M \vert}q_{z_{0}}\left(M^{-1}y-(x_{0,1}, x_{0,2})^{\intercal}\right), \quad y \in \mathbb{R}^2,
\end{equation}
where $M \in \mathbb{R}^{2\times2}$ is an invertible \textit{lateral magnification matrix} and the \textit{image function} $q_{z_{0}}:\mathbb{R}^2 \rightarrow \mathbb{R}$ describes the image of an object in the detector space when that object is located at $(0, 0, z_{0})$ in the object space. Note that the subscript $\theta$ is used in the left-hand side of \eqref{eq:image} to include dependence on hyperparameters. Depending on the model considered and inference aims, the hyperparameter(s) of interest can be $(x_{0, 1}, x_{0, 2})$ if the object is static and/or $z_{0}$ if an out-of-focus molecule is considered. \par
Three types of image functions are considered. 
First of all, according to optical diffraction theory from \cite{born2013principles}, an in-focus point source (i.e. when $z_{0}=0$) will typically generate an image that follows the Airy profile, given by
\begin{equation}
	\label{eq:airy}
	q(x_{1}, x_{2}) = \frac{J_1^2\left(\frac{2\pi n_{\alpha}}{\lambda_e}\sqrt{x_1^2+x_2^2}\right)}{\pi(x_1^2+x_2^2)}, \quad (x_1, x_2)\in \mathbb{R}^{2},
\end{equation}
where $n_{\alpha}$ is the numerical aperture of the objective lens, $\lambda_e$ is the emission wavelength of the molecule and $J_1(\cdot)$ represents the first order Bessel function of the first kind.\par
Often, to simplify the problem, the 2D Gaussian approximation to the Airy profile has been used instead (see \cite{cheezum2001quantitative, thompson2002precise, zhang2007gaussian, stallinga2010accuracy}):
\begin{equation}
	\label{eq:gausapp}
	q(x_1, x_2) = \frac{1}{2\pi\sigma_a^2} \exp \left[-\frac{x_1^2+x_2^2}{2\sigma_a^2}\right], \quad (x_1, x_2)\in \mathbb{R}^{2}.
\end{equation}
If the point source of interest is out of focus, then a 3D Born and Wolf model  \cite{born2013principles} is used instead:
\begin{equation}
	\label{eq:bw}
	q_{z_{0}}(x_1, x_2) = \frac{4\pi n^2_{\alpha}}{\lambda_e^2}\left\lvert\int_{0}^1J_0\left(\frac{2\pi n_{\alpha}}{\lambda_e}\sqrt{x_1^2+x_2^2}\rho\right)\exp{\left(\frac{j\pi n^2_{\alpha}z_{0}}{n_o\lambda_e}\rho^2\right)}\rho d\rho\right\rvert^2, \quad (x_1, x_2)\in \mathbb{R}^{2},
\end{equation}
where $z_{0} \in \mathbb{R}$ is the location of the object on the optical axis, $n_o$ is the refractive index of the objective lens immersion medium and $J_0(\cdot)$ is the zero-th order Bessel function of the first kind. Note that the Airy profile is a special case of the Born and Wolf model. Indeed, if the object is in focus, then $z_{0}=0$ on the optical axis and \eqref{eq:airy} and \eqref{eq:bw} coincide.
\subsection{Photon detection times}
Just like the photon detection locations, the times at which the photons arrive on the detector $\mathcal{Y}$ are random. More specifically, in \cite{vahid2020fisher, chao2016fisher}, the arrival of the photons on the detector, or \textit{photon detection process}, can be modelled as a Poisson process. Let $N(t)$ be the number of photons detected at time $t \geq t_0$ for initial time $t_0 \in \mathbb{R}$ and let $\lambda(t)$ be the \textit{photon detection rate}, representing the rate at which the photons emitted by the object hit the detector at any given time $t$.
For example, the detection rate of an object that has high photostability will simply be constant, while an exponentially decaying $\lambda(t)$ can indicate that the object image is photobleaching, or fading over time. The arrival times of the photons on the detector $\mathcal{Y}$ are denoted $t_1, t_2, \ldots$ where $t_i$ denotes the arrival time of the $i$-th photon.
\subsection{The observed data}
Let $n_p = N(T)-N(t_0)$ be the number of photons detected in the interval $[t_0, T]$. We have now established the two aspects of the data that can be observed in a basic optical system during this interval, namely the detection times $t_1, t_2, \ldots, t_{n_p}$ of photons and the location of those detected photons $Y_{t_1}, Y_{t_2}, \ldots, Y_{t_{n_p}}$ on the detector $\mathcal{Y}$. Assume that, conditionally on the current object location $X_{t_i}$, the location of the $i$-th detected photon $Y_{t_i}$ at time $t_i$ is independent of the previous locations and time points of the detected photons, i.e. for $x_{t_i} \in \mathcal{X}$,
\begin{equation}
	\label{eq:assump:phot}
	p_{\theta}(y_{t_i}|x_{t_i}, y_{t_{i-1}}, \ldots, y_{t_0}) = p_{\theta}(y_{t_i}|x_{t_i}) =: g_{\theta}(y_{t_i}|x_{t_i}), \quad y_{t_i} \in \mathcal{Y},
\end{equation} 
where the density $g_{\theta}$ is the photon distribution profile from \eqref{eq:image}. This is a reasonable assumption, as at any given time, processes such as photon emission and image formation only depend on the state of the emitting fluorescent molecule at that time, and not on any prior event. 
\begin{exmp} \label{ex:smm:obs}
	Let the trajectory of a molecule be given by the SDE in \cref{ex:mol:1} and simulated using the same parameters and for the same time interval. Let $\mathcal{Y}$ be a non-pixelated detector. Then, let the photon detection rate be constant such that the mean number of photons is $500$, and the photon distribution profile be given by \eqref{eq:image}, where the magnification matrix $M = m\mathbb{I}_{2\times2}$ with $m=100$. The image functions for the Airy, 2D Gaussian and Born and Wolf profiles are given by \eqref{eq:airy}, \eqref{eq:gausapp} and \eqref{eq:bw} respectively, where $n_{\alpha}=1.4$, $\lambda_e = 0.52$ $\mu$m, $n_o = 1.515$, $\sigma_a^2=49 \times 10^{-4}$ $\mu$m$^2$ and $z_{0}=1$ $\mu$m. By simulating the detected photon locations based on the same molecule trajectory and according to these three models, we obtain the observed photon trajectories in \cref{fig:mol:est}. Note that the parameters of the Airy and 2D Gaussian profiles have been chosen so that the Gaussian profile approximates the Airy profile.
\end{exmp}
\begin{figure*}[htbp!]
	\centering
	\includegraphics[width=.9\textwidth]{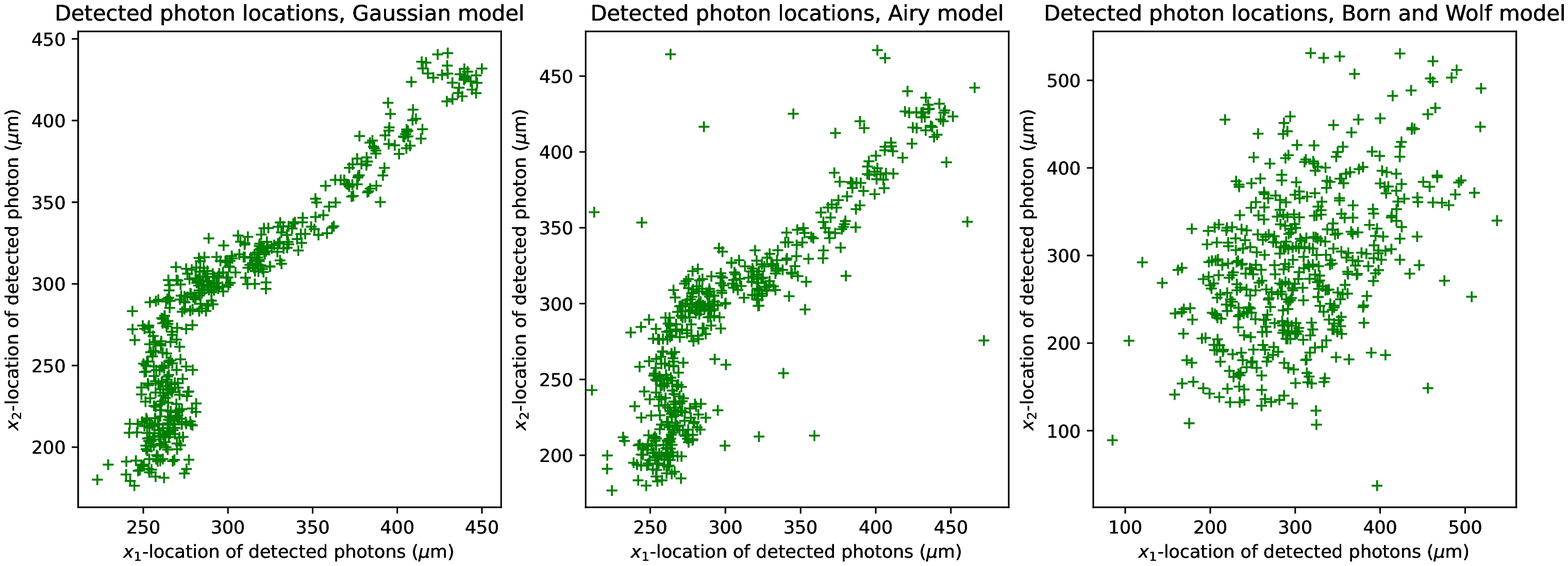}
	\caption[Detected photon locations of a moving molecule with stochastic trajectory for the 2D Gaussian, Airy profiles and Born and Wolf model.]{Detected photon locations of a moving molecule with stochastic trajectory for the 2D Gaussian (\textit{left}), Airy (\textit{middle}) profiles and Born and Wolf model (\textit{right}).}
	\label{fig:mol:est}
\end{figure*}
\section{The model as a state space model}
\label{sec:mol:hmm}
It is possible to reformulate this model as a state space model that takes into account the random arrival times of photons. This is achieved by discretising the time interval during which photons are recorded.

\subsection{Reformulation}
For simplicity, we assume for the rest of this paper (unless stated otherwise) that the photon detection rate is constant, i.e. $\lambda(t) = \lambda \in [0, 1]$ for all $t \geq t_0$. First of all, let $X_t = (x_{t, 1}, x_{t, 2}) \in \mathcal{X}$ where $\mathcal{X} := \mathbb{R}^2$ denotes the state of the molecule at time $t \geq t_0$, which includes its location $x_{t, 1:2}$ on the object plane. The location of the object on the optical axis is assumed to be constant and equal to the initial location parameter, i.e. $z_{0}$ for all $t \geq t_0$.
The probability of recording an observation, i.e. detecting a photon in the small interval $(t, t+h]$ is
\begin{equation*}
	\mathbb{P}\left[N(t+h)-N(t)>0\right] = \lambda h + o(t), \quad \lambda \in [0, 1], \, t \geq t_0.
\end{equation*}
Let $t_i$ denote the arrival time of the $i$-th photon on a detector $\mathcal{Y}$ for $i = 1, 2, \ldots$ and $Y_{t_i} \in \mathcal{Y}$ be the location of the captured photon on the detector. Assume the location of a detected photon is distributed according to the probability density function 
\begin{equation*}
	Y_{t_{i}}\vert\left(X_{t_{i}}=x\right)\sim g_{\theta}(\cdot\vert x), \quad i=1, 2, \ldots,
\end{equation*}
where $g_{\theta}$ is the photon distribution profile given in \eqref{eq:image}. The recorded data in the time interval $[t_0,T]$, $0 \leq t_0 < T$ comprises of $n_p$ observations with arrival times $t_0 < t_{1}<\ldots<t_{n_p}\leq T$ and photon locations $y_{t_{1}},\ldots,y_{t_{n_p}}$. The inference objective is to estimate the trajectory of the molecule $\left(X_{t}\right)_{t_0\leq t\leq T}$ given data $(t_{i},y_{t_{i}})$,
$i=1,\ldots,n_p$. As seen in \cref{subsec:x}, the molecule evolves according to the probability density function 
\begin{equation*}
	X_{t_{i+1}}\vert\left(X_{t_{i}}=x\right)\sim f_{t_{i}, t_{i+1}}^{\theta}(\,\cdot\,\vert x), \quad i=1, 2, \ldots,n_p,
\end{equation*}
where $\theta$ denotes the model parameters and $f_{s, t}^{\theta}$ for $t>s\geq t_0$ is the homogeneous continuous-time Markov transition density given by the the Gaussian distribution in \eqref{eq:x} for $d=2$.
\subsubsection{Non-constant photon detection rate} \label{sec:smm:nonconst}
If the photon detection rate $\lambda(t)$ is not assumed to be constant, then we redefine the state of an object at time $t \geq t_0$ as $X_t = (x_{t, 1}, x_{t, 2}, \lambda_t) \in \mathcal{X}$ where $\mathcal{X} := \mathbb{R}^2\times[0, 1]$. The state at time $t$ now includes the location of the molecule $(x_{t, 1}, x_{t, 2})$ as well as the probability $\lambda_t$ of detecting a photon it emits. The Markov transition density $p_{\theta}(x'|x)$ can be defined as follows
\begin{equation*}
	p_{\theta}(x_{t_{i+1}}|x_{t_i}) = f_{t_{i},t_{i+1}}^{\theta}(x_{t_{i+1}, 1:2}\vert x_{t_i, 1:2})l_{\theta}(\lambda_{t_{i+1}}|\lambda_{t_i}), \quad x_{t_{i+1}}, x_{t_i} \in \mathcal{X},
\end{equation*}
where $t_i$ and $t_{i+1}$ denote the arrival times of the $i$-th and $(i+1)$-th photons, respectively, $f_{t_{i},t_{i+1}}^{\theta}$ is the Markov transition density for the object location defined above and $l_{\theta}$ is the Markov transition density for the photon detection rate.
\subsection{Time discretisation}
\label{sec:tdiscret}
Let $(t_1, y_{t_1}), \ldots, (t_{n_p}, y_{t_{n_p}})$ be a realisation of the photon arrival times and locations observed in the time interval $[t_0,T]$. Setting $t_0 := 0$ for convenience, we adopt a discrete time formulation where $[0,T]$ is divided into segments of length $\Delta$. Let $x_{k} \in \mathcal{X}$ denote the state of the molecule at time $t=\left(k-1\right)\Delta$ where $k=1,\ldots, n$ for $n :=\left\lceil T/\Delta\right\rceil $. We assume the discretisation is fine enough so that an interval $(k\Delta,k\Delta+\Delta]$ contains
at most one arrival time $t_{i}$. Then, for $k=1,\ldots, n$, let
\[
y_{k}=\begin{cases}
	\emptyset, & \mathrm{if}\quad t_{i}\notin(k\Delta-\Delta,k\Delta], \quad \forall i=0, 1, \ldots, n_p,\\
	y_{t_{i}}, & \mathrm{if}\quad t_{i}\in(k\Delta-\Delta,k\Delta], 
\end{cases}
\]
where $y_{t_i} \in \mathcal{Y}$ denotes the location of the $i$-th detected photon on the detector $\mathcal{Y}$. The vector $y_{k}$ is assigned $\emptyset$ to indicate the absence of
an observation in the corresponding interval. See \cref{sec:discretisation} for details on why the time discretisation is a valid approximation of the Poisson process. If $x = (x_1, x_2, \lambda) \in \mathcal{X}$, let 
\begin{equation}
	\label{eq:pot:fun}
	G_{k}^{\theta}(x)=\begin{cases}
		1-\Delta\lambda, & \mathrm{if}\;y_{k}=\emptyset,\\
		\lambda g_{\theta}(y_{t_{i}}\vert x_{1:2}), & \mathrm{if}\;y_{k}=y_{t_{i}},
	\end{cases}
\end{equation}
where $g_\theta$ is the photon distribution profile \eqref{eq:image}, then $G_{k}^{\theta}(x)$ is the so called potential function. 
The potential $G_{k}^{\theta}(x)$ plays the role of the likelihood in Bayesian estimation problems. In the above context, the expression for $y_k=\emptyset$ corresponds to the probability of no photon being observed during that time interval. When a photon is observed in the interval, with observation time $t_i$ and observation location $y_{t_i}$ on the detector, the expression for $G_k^{\theta}$ is the product of the probability $\Delta\lambda$ of receiving one photon, with the uniform probability density $1/\Delta$ for the arrival time $t_i$ in that interval and the density of the location of the observation given that the molecule is situated at $x_{1:2}$ in the object space (the $\Delta$ terms then cancel out).\par 
For $k=1, \ldots, n$, the probability density function of $X_{k+1}$ given the previous state $X_k$ is $f_{\Delta}^{\theta}(x_{k+1}\vert x_{k}):=f_{k\Delta, k\Delta+\Delta}^{\theta}(x_{k+1}\vert x_{k})$ from \eqref{eq:x}, thus transforming \eqref{eq:x:sol} into
\begin{equation*}
	X_{k+1} = \Phi_\Delta X_{k} + a_\Delta + W_x, \quad W_x \sim \mathcal{N}\left(0, R_{\Delta}\right),
\end{equation*}
where $\Phi_\Delta = \Phi(k\Delta, k\Delta+\Delta)$ is now constant and similarly for $a_\Delta$ and $R_{\Delta}$.\par
To summarise, $\left(X_k\right)_{k=1}^{\infty}$ and $\left(Y_k\right)_{k=1}^{\infty}$ are $\mathcal{X}$- and $\mathcal{Y}\cup\emptyset$-valued stochastic processes where the molecule trajectory in the object space $\left(X_k\right)_{k=1}^{\infty}$ corresponds to the unobserved latent Markov process  with Markov transition density $f_{\Delta}^{\theta}(x'\vert x)$ and initial density $\nu_{\theta}(x)$, and the photon detection locations (or lack of) $\left(Y_k\right)_{k=1}^{\infty}$ represent the observed process with conditional density or potential function $G_{k}^{\theta}(x)$, i.e.
\begin{align}	
	X_1 \sim \nu_{\theta}(\cdot), &\quad X_{k+1}\vert\left(X_{k}=x\right) \sim f_{\Delta}^{\theta}(\cdot \vert x), \label{eq:mol:hmmx} \\
	Y_k\vert\left(X_{k}=x\right) \sim G_{k}^{\theta}(x), &\quad k=1, 2, \ldots.  
	\label{eq:mol:hmmy}
\end{align}
Note that if the object is static, so that the drift and diffusion coefficient in \eqref{eq:sde} are zero, the model simplifies from a state space model to a basic inference problem with independent observations. The observed process is still described by \eqref{eq:mol:hmmy} but the location of the object $x_0$ becomes part of the hyperparameters.
\section{Parameter inference}
\label{sec:mol:parinf}
\subsection{Inference aim}

Now that we have formulated the problem in \eqref{eq:mol:hmmx} and \eqref{eq:mol:hmmy} as a state space model, the first aim is going to be to estimate the posterior probability density function of $X_{1:n}:=\left\lbrace X_{1},\ldots, X_n\right\rbrace$, $n \in \mathbb{N}$, given the observations $Y_{1:n}$, also known as the \textit{joint smoothing distribution}, which is given by
\begin{equation}
	\label{eq:post}
	p_{\theta}(x_{1:n}|y_{1:n})=\dfrac{p_{\theta}(x_{1:n},y_{1:n})}{p_{\theta}(y_{1:n})},
\end{equation}
where the numerator represents the \textit{joint density}
\begin{equation}
	\label{eq:joint}
	p_{\theta}(x_{1:n},y_{1:n})=\nu_{\theta}(x_{1})\prod_{k=2}^n f_{\Delta}^{\theta}(x_{k}\vert x_{k-1}) \prod_{k=1}^nG_{k}^{\theta}(x_{k}),
\end{equation}
where $\nu_{\theta}(x_1)$ is the initial distribution of $X_1$, and the denominator represents the \textit{marginal likelihood} of the observed data
\begin{equation}
	\label{eq:lik1}
	p_{\theta}(y_{1:n})=\int_{\mathcal{X}^n} p_{\theta}(x_{1:n},y_{1:n}) \text{d}x_{1:n}.
\end{equation}
Estimating $p_{\theta}(x_{1:n}|y_{1:n})$ is what allows the molecule to be tracked and is done using a particle filter. The second aim is to obtain particle approximations of smoothed additive functionals, which in turn will allow for estimation of the score and OIM for of the hyperparameters $\theta$, as well as other applications such as ML estimation of said hyperparameters via gradient ascent and Expectation-Maximization (EM). Finally, the third aim is to use the estimates of the score and OIM of the hyperparameters to obtain an approximation of their FIM.

\subsection{Tracking the molecule using a particle filter}
The particle approximation of the marginal posterior of $X_{1},\ldots, X_n$ defined in \eqref{eq:post} is given by
\begin{equation*}
	\hat{p}(x_{1:n}|y_{1:n}) = \sum_{i=1}^N \omega_n^{(i)}\delta_{X_{1:n}^{(i)}}(x_{1:n}),
\end{equation*}
where $X_{1:n}^{(1:N)}$ are the particles, $\omega_n^{(1:N)}$ their corresponding normalised \textit{importance weights}, i.e. $\sum_{i=1}^N \omega_n^{(i)} = 1$ and $\delta_{v_0}(v)$ denotes the dirac delta mass located at $v_0$.
To obtain this particle approximation, we employ sequential Monte Carlo (SMC) methods in the form of a \textit{particle filter} (see \cite{cappe2006inference, douc2014nonlinear, doucet2001sequential, chopin2020introduction} for comprehensive reviews of SMC methods). There is flexibility in the specific choice of particle filter, but the general form they take follows three key steps, namely \textit{resample}$\rightarrow$\textit{propagate}$\rightarrow$\textit{weight}. For $k = 2, \ldots, n$, the \textit{resampling} step avoids weight degeneracy \cite{doucet2009tutorial, kong1994sequential} and consists of drawing indices $\iota_{k-1}^{(1:N)}$ with probabilities corresponding to the normalised weights $\omega_{k-1}^{(1:N)}$, then, depending on the resampling algorithm considered, resetting the weights accordingly, e.g. $\omega_{k-1}^{(1:N)} := \frac{1}{N}$.
The propagation and weighting steps consist of advancing the (resampled) particle population $(X_{k-1}^{\iota_{k-1}^{(1:N)}}, \omega_{k-1}^{(1:N)})$ forward in time via the \textit{proposal density} $\eta_k(x_k|x_{k-1})$ (\textit{propagate}) and updating the importance weights (\textit{weight}) as follows (see \cref{sec:supp:pf} for more details.):
\begin{equation*}
	\omega_k^{(i)} = \frac{\omega_{k-1}^{(i)}\tilde{w}\left(X^{(i)}_{k-1}, X^{(i)}_{k}\right)}
	{\sum_{j=1}^N \omega_{k-1}^{(j)}\tilde{w}\left(X^{(j)}_{k-1}, X^{(j)}_{k}\right)},
\end{equation*}
where $\tilde{w}(x_{k-1}, x_k)$ is known as the \textit{incremental weight} and is given by
\begin{equation*}
	\tilde{w}(x_{k-1}, x_k) = \frac{G^\theta_k(x_k)f^\theta_\Delta(x_{k}|x_{k-1})}{\eta_k(x_{k}|x_{k-1})}.
\end{equation*}
The proposal density is user-defined. For example, if $\eta_t(x_k|x_{k-1}) = f^\theta_\Delta(x_k|x_{k-1})$, the particle filter becomes the well-known \textit{bootstrap filter}, introduced in \cite{gordon1993novel} and the computation of the incremental weights simplifies to $\tilde{w}(x_{k}) = G_t^\theta(x_k)$. A generic particle filter is summarised in \cref{alg:smc:PF}.\par
Given weighted particle sample $(X_{k-1}^{(1:N)}, \omega_{k-1}^{(1:N)})$ at step $k$, we denote an iteration of running the particle filter (steps \ref{alg:step:resamp}-\ref{alg:step:weight} of \cref{alg:smc:PF}) as
$$\left(X_{k}^{(1:N)}, \omega_{k}^{(1:N)}\right) := \textsf{PF}_\Delta\left(X_{k-1}^{(1:N)}, \omega_{k-1}^{(1:N)}\right).$$ 
For this particular problem, we must also take into account the missing observations introduced by the time discretisation. Since a lack of observation does not bring any new information, it suffices to only run the particle filter at segments which contain an observation. A typical iteration of this approach is summarised in \cref{alg:smm:pf-missing}. The interval counter is initialised at $c_0 := 1$ and counts the number of discrete intervals since (and including) the last observation. An example of particle filtering for stochastically moving molecules observed through the 2D Gaussian, Airy and Born and Wolf models is available in \cref{ex:smm:pf}.
\begin{algorithm}[htbp]
	\caption{Particle filter for SDE with missing observations}\label{alg:smm:pf-missing}
	\begin{algorithmic}[1]	
		\Statex \textbf{Input}: weighted particle sample $\left(X_{k-1}^{(1:N)}, \omega_{k-1}^{(1:N)}\right)$ and interval counter $c_{k-1}$ at step $k-1$.
		\If{$y_k = \emptyset$}
		\State $c_k := c_{k-1} + 1$
		\State Do not run the particle filter
		$$\left(X_{k}^{(1:N)}, \omega_{k}^{(1:N)}\right):= \left(X_{k-1}^{(1:N)}, \omega_{k-1}^{(1:N)}\right).$$
		\Else
		\State Run the particle filter with updated interval length, i.e.
		$$\left(X_{k}^{(1:N)}, \omega_{k}^{(1:N)}\right) := \textsf{PF}_{c_k\Delta}\left(X_{k-1}^{(1:N)}, \omega_{k-1}^{(1:N)}\right).$$
		\State $c_k := 1$
		\EndIf
		\Statex \textbf{Output}: updated particle sample $\left(X_{k}^{(1:N)}, \omega_{k}^{(1:N)}\right)$.
	\end{algorithmic}
\end{algorithm}
\subsection{Particle approximations of expectations of additive functionals} \label{sec:smm:smcfs}
The second inference aim is to obtain estimates of the score and observed information matrix (OIM) for the hyperparameters $\theta$. To achieve these aims, we make use of smoothed additive functionals. Assume that there exists a real-valued function $S^{\theta}_k$, $k\geq 0$ such that it is an \textit{additive functional} given by
\begin{equation}
	\label{eq:add:fun}
	S_k^{\theta}(x_{1:k}) = \sum_{j=1}^{k} s^{\theta}_j(x_{j-1}, x_{j}),
\end{equation}
where $s^{\theta}_1(x_{0}, x_{1}):=s^{\theta}_1(x_{1})$ and $\left\{s^{\theta}_k\right\}_{k\geq 0}$ is a sequence of \textit{sufficient statistics} which may depend on the value of the observations $y_{0:k}$. The main aim is to compute the posterior or \textit{smoothing expectation}, given by
\begin{equation}
	\label{eq:add:exp}
	\mathcal{S}_k(\theta) := \mathbb{E}_{\theta}\left[S^{\theta}_k\left(X_{1:k}\right)|y_{1:k}\right] = \int_{\mathcal{X}} S^{\theta}_k\left(x_{1:k}\right) p_{\theta}(x_{1:k}|y_{1:k}) \text{d}x_{1:k}.
\end{equation} 
If the model in question is linear and Gaussian or the state space $\mathcal{X}$ is finite, then the expectation $\mathcal{S}_k(\theta)$ can be computed exactly by recursion. However, this is not the case if the Airy or Born and Wolf profiles are used to describe photon distribution. In this case, SMC methods can again be employed to approximate the expectation as follows
\begin{equation*}
	\hat{\mathcal{S}}_k(\theta) := \sum_{i=1}^N \omega_k^{(i)}S^{\theta}_k\left(X^{(i)}_{1:k}\right),
\end{equation*} 
where the weighted sample $(X^{(1:N)}_{1:k}, \omega_k^{(1:N)})$ is a particle approximation of the joint smoothing distribution $p_{\theta}(x_{1:k}|y_{1:k})$ obtained using a particle filter.\par 
A simple way of estimating the smoothing expectation $\mathcal{S}_n(\theta)$ for a set of $n$ observations $y_{1:n}$ is to run the desired particle filter in a `forward pass' through the whole data to obtain the particle approximation $(X^{(1:N)}_{n}, \omega_{n}^{(1:N)})$ at the final step $n$, followed then by a `backward smoothing' pass through the data, starting from the latest sample $y_n$. This is the case of algorithms such as the fixed-lag smoother by \cite{kitagawa2001monte, olsson2008sequential, olsson2011particle}, forward-filtering backward smoothing (FFBSm) by \cite{doucet2000sequential, hurzeler1998monte, kitagawa1996monte} and forward-filtering backward simulation (FFBSi) by \cite{godsill2004monte}.
However, if one wishes to avoid multiple passes through the data, it is also possible to take advantage of the form of the additive functional in \eqref{eq:add:fun} to estimate $\mathcal{S}_k(\theta)$ in an online or `forward-only' fashion, as proposed in \cite{del2010forward} and further developed in \cite{olsson2017efficient}. Introducing the \textit{auxiliary function}
\begin{equation*}
	T_k^{\theta}(x_{k}) := \int_{\mathcal{X}^{k-1}}S^{\theta}_k(x_{1:k})p_{\theta}(x_{1:k-1}|y_{1:k-1}, x_k)\text{d}x_{1:k-1},
\end{equation*}
the following recursion is then created:
\begin{equation}
	\label{eq:ex:rec}
	T_k^{\theta}(x_{k}) = \int_{\mathcal{X}} \left[T_{k-1}^{\theta}(x_{k-1}) + s_k^{\theta}(x_{k-1}, x_k)\right]p_{\theta}(x_{k-1}|y_{1:k-1}, x_k)\text{d}x_{k-1},
\end{equation}
where $T_0^{\theta}:=0$ and its particle approximation given the weighted sample $(X^{(1:N)}_{1:k}, \omega_k^{(1:N)})$ and previous state particle approximation $\hat{T}^{\theta}_{k-1}(X_{k-1}^{(1:N)})$ is given by
\begin{equation}
	\label{eq:smc:rec}
	\hat{T}^{\theta}_{k}\left(X_{k}^{(i)}\right)=\sum_{j=1}^N
	\Psi_k^{\theta}(i, j)
	\left[\hat{T}^{\theta}_{k-1}\left(X_{k-1}^{(j)}\right) + s_k^{\theta}\left(X_{k-1}^{(j)}, X_{k}^{(i)}\right)\right]
\end{equation}
for all $ i \in \{1, \ldots, N\}$, and where 
\begin{equation}
	\label{eq:psi}
	\Psi_k^{\theta}(i, j):=\frac{\omega_{k-1}^{(j)}f_{\Delta}^{\theta}\left(X_{k}^{(i)}|X_{k-1}^{(j)}\right)}
	{\sum_{j=1}^N\omega_{k-1}^{(j)}f_{\Delta}^{\theta}\left(X_{k}^{(i)}|X_{k-1}^{(j)}\right)}.
\end{equation}
Finally, using the recursion on the auxiliary function $T_k^{\theta}$, the smoothing expectation in \eqref{eq:add:exp} can be rewritten as
\begin{equation}
	\label{eq:fs:exp}
	\mathcal{S}_k(\theta) = \int_{\mathcal{X}} T_k^{\theta}(x_k)p_{\theta}(x_k| y_{1:k})\text{d}x_k,
\end{equation}
and its particle approximation is
\begin{equation}
	\label{eq:fs:exp:part}
	\hat{\mathcal{S}}_k(\theta) = \sum_{i=1}^N \omega_{k}^{(i)} \hat{T}_k^{\theta}\left(X_{k}^{(i)}\right).
\end{equation}
This algorithm is known as \textit{Forward smoothing SMC} (SMC-FS) and is summarised in the context of our experiments in \cref{alg:smc:smc-fs}. 
\begin{algorithm}[htbp]
	\caption{Forward smoothing SMC (SMC-FS)}
	\label{alg:smc:smc-fs}
	\begin{algorithmic}[1]
		\Statex \textit{Where $(i)$ or $(j)$ appears, the operation is performed for all $i, j \in \{1, \ldots, N\}$.}
		\Statex At $k=1$,
		\State Initialise the particle filter to obtain the weighted particle sample $\left(X^{(1:N)}_{1}, \omega_1^{(1:N)}\right)$.
		\State Initialise the interval counter $c_{0} := 1$.
		\State Set $\hat{T}^\theta_1\left(X_1^{(i)}\right) := 0$.
		\For{$k = 2, \ldots, n$}
		\If{$y_k = \emptyset$}
		\State $c_k := c_{k-1} + 1$
		\Else 
		\State Use the particle filter to update the weighted particle sample, i.e. 
		$$\left(X^{(1:N)}_{k}, \omega_{k}^{(1:N)}\right) := \textsf{PF}_{c_k\Delta}\left(X^{(1:N)}_{k-1}, \omega_{k-1}^{(1:N)}\right).$$
		\State Evaluate 
		$$\Psi^\theta_k(i, j):=\frac{\omega_{k-1}^{(j)}f^\theta_{c_k\Delta}\left(X_{k}^{(i)}|X_{k-1}^{(j)}\right)}
		{\sum_{j=1}^N\omega_{k-1}^{(j)}f^\theta_{c_k\Delta}\left(X_{k}^{(i)}|X_{k-1}^{(j)}\right)}.$$
		\State Update the auxiliary function estimate
		$$\hat{T}^\theta_{k}\left(X_{k}^{(i)}\right)=\sum_{j=1}^N
		\Psi_k^\theta(i, j)
		\left[\hat{T}^\theta_{k-1}\left(X_{k-1}^{(j)}\right) + s^\theta_k\left(X_{k-1}^{(j)}, X_{k}^{(i)}\right)\right].$$
		\State Update the smoothing expectation estimate
		$$\hat{\mathcal{S}}_k(\theta) = \sum_{i=1}^N \omega_k^{(i)}\hat{T}^\theta_{k}\left(X_{k}^{(i)}\right).$$
		\State Reset the interval counter $c_k := 1$.
		\EndIf
		\EndFor
		\Statex \textbf{Output}: smoothing expectation estimate $\hat{\mathcal{S}}_n$.
	\end{algorithmic}
\end{algorithm}
\subsection{Estimation of the score and observed information matrix (OIM)}
\label{sec:smm:scoreoim}
The score and OIM have important applications to ML estimation, e.g. see \cite{le1997recursive, poyiadjis2011particle}. They can also be instrumental in assessing the performance of such an estimator, either directly, as argued by \cite{efron1978assessing}, or as tools to estimate the FIM when the latter cannot be computed exactly, as we will see in this section. We aim to compute, recursively in time, the score vector $\mathcal{G}_k(\theta):=\nabla\log{p_{\theta}(y_{1:k})}$ and OIM $\mathcal{H}_k(\theta):=-\nabla^2\log{p_{\theta}(y_{1:k})}$ where $p_{\theta}(y_{1:k})$ denotes the marginal likelihood at step $1\leq k\leq n$ defined in \eqref{eq:lik1}, $\nabla$ denotes the gradient and $\nabla^2$ the Hessian. 
\subsubsection{Establishing the sufficient statistics}
The key to obtaining the particle approximation \eqref{eq:fs:exp:part} of a smoothing expectation \eqref{eq:fs:exp} of interest is to establish the relevant additive functionals and sufficient statistics. First of all, assume that the regularity conditions allowing for differentiation and integration to be switched around in expressions are satisfied. Let us establish the Fisher and Louis identities for the score and OIM, respectively, from \cite{cappe2006inference, douc2014nonlinear}:
\begin{align}
	\label{eq:FL}
	\mathcal{G}_k(\theta) &= \int_{\mathcal{X}}\nabla\log{p_{\theta}(x_k, y_{1:k})}p_{\theta}(x_k| y_{1:k})\text{d}x_k,\\
	\mathcal{H}_k(\theta) &= \nabla\log p_{\theta}(y_{1:k}) \nabla\log p_{\theta}(y_{1:k})^{\intercal}-
	\frac{\nabla^2 p_{\theta}(y_{1:k})}{p_{\theta}(y_{1:k})},\nonumber
\end{align}
where
\begin{align}
	\frac{\nabla^2 p_{\theta}(y_{1:k})}{p_{\theta}(y_{1:k})} &= 
	\int_{\mathcal{X}}\nabla\log p_{\theta}(x_k, y_{1:k}) \nabla\log p_{\theta}(x_k, y_{1:k})^{\intercal}p_{\theta}(x_k|y_{1:k})\text{d}x_k \nonumber \\
	&+ \int_{\mathcal{X}}\nabla^2\log p_{\theta}(x_k, y_{1:k})p_{\theta}(x_k|y_{1:k})\text{d}x_k,
	\label{eq:hess}
\end{align}
and note that \eqref{eq:FL} and \eqref{eq:hess} can be rewritten as
\begin{align}
	\label{eq:ps:smexp:score}
	\nabla\log{p_{\theta}(y_{1:k})} &= \mathbb{E}\left[\alpha_k^{\theta}(X_k)|y_{1:k}\right],\\
	\frac{\nabla^2 p_{\theta}(y_{1:k})}{p_{\theta}(y_{1:k})} &= 
	\mathbb{E}\left[\alpha_k^{\theta}(X_k)\alpha_k^{\theta}(X_k)^{\intercal}|y_{1:k}\right] 
	+ \mathbb{E}\left[\beta_k^{\theta}(X_k)|y_{1:k}\right],
	\label{eq:ps:smexp:oim}
\end{align}
where the expectations here are with respect to the density $p(x_k|y_{1:k})$, and correspond to the smoothing expectations in \eqref{eq:fs:exp}, with the functions $\alpha_k^{\theta}(x_k) := \nabla\log p_{\theta}(x_k, y_{1:k})$ and $\beta_k^{\theta} := \nabla^2\log p_{\theta}(x_k, y_{1:k})$ acting as the auxiliary functions of interest. A recursion for $\alpha_k^{\theta}$ and $\beta_k^{\theta}$ is straightforward to obtain, more details in \cite{poyiadjis2011particle}. For $\alpha_k^{\theta}$ and $\beta_k^{\theta}$, \eqref{eq:ex:rec} becomes 
\begin{align*}
	\alpha_k^{\theta}(x_k) &= \int_{\mathcal{X}}\left[\alpha^{\theta}_{k-1}(x_{k-1}) + s_k^{\alpha}(x_{k-1}, x_k)\right]p_{\theta}(x_{k-1}|y_{1:k-1}, x_{k})\text{d}x_{k-1},\\
	\beta_k^{\theta}(x_k) &= \int_{\mathcal{X}}\left[\beta^{\theta}_{k-1}(x_{k-1}) + s_k^{\beta}(x_{k-1}, x_k)\right]p_{\theta}(x_{k-1}|y_{1:k-1}, x_{k})\text{d}x_{k-1} - \alpha_k^{\theta}(x_k)\alpha_k^{\theta}(x_k)^{\intercal},
\end{align*}
where the sufficient statistics are given by
\begin{align}
	\label{eq:ex:rec:alpha}
	s_k^{\alpha}(x_{k-1}, x_k) &:= \nabla \log G^{\theta}_k(x_k) + \nabla \log f_{\Delta}^{\theta}(x_k|x_{k-1}),\\
	s_k^{\beta}(x_{k-1}, x_k) &:= \left[\alpha^{\theta}_{k-1}(x_{k-1}) + s_k^{\alpha}(x_{k-1}, x_k)\right]\left[\alpha^{\theta}_{k-1}(x_{k-1}) + s_k^{\alpha}(x_{k-1}, x_k)\right]^{\intercal} \nonumber \\
	&+ \nabla^2\log G^{\theta}_k(x_k) + \nabla^2 \log f_{\Delta}^{\theta}(x_k|x_{k-1}).
	\label{eq:ex:rec:beta}
\end{align}
Finally, to approximate the score and OIM, adapt the particle approximation in \eqref{eq:smc:rec} to the recursions in \eqref{eq:ex:rec:alpha} and \eqref{eq:ex:rec:beta} to obtain the score estimate, given by a weighted sum \eqref{eq:fs:exp:part} approximating the smoothing expectation \eqref{eq:ps:smexp:score}, i.e.
\begin{equation*}
	\hat{\mathcal{G}}_k({\theta}) = \sum_{i=1}^N \omega_k^{(i)}\hat{\alpha}_k^{\theta}\left(X_k^{(i)}\right)
\end{equation*} 
and OIM estimate
\begin{equation*}
	\hat{\mathcal{H}}_k({\theta}) = \hat{\mathcal{G}}_k({\theta})\hat{\mathcal{G}}_k(\theta)^{\intercal}-\sum_{i=1}^N \omega_k^{(i)}\left[ \hat{\alpha}_k^{\theta}\left(X_k^{(i)}\right)\hat{\alpha}_k^{\theta}\left(X_k^{(i)}\right)^{\intercal} + \hat{\beta}_k^{\theta}\left(X_k^{(i)}\right)\right],
\end{equation*}
where the weighted sum is the particle approximation \eqref{eq:fs:exp:part} of the smoothed expectation in \eqref{eq:ps:smexp:oim}. In \cref{ex:score}, we apply this framework to a possible application of the single-molecule tracking model. We focus for now on the case where the photon distribution is described by the Airy or 2D Gaussian profile.
\begin{exmp}
	\label{ex:score}
	Let the trajectory of a molecule be given by the following SDE
	\begin{equation*}
		\text{d}X_t = b\mathbb{I}_{2\times 2}X_t\text{d}t + \sqrt{2}\sigma\text{d}B_t,
	\end{equation*}
	where in the drift term, $b\neq 0$, in the diffusion term, $\sigma>0$, and $(\text{d}B_t)_{t_0\leq t\leq T}$ is a Wiener process. Let the photon detection process be described by the Airy or 2D Gaussian profile. Then, the parameters of interest are $\theta = (\sigma^2, b)$. Recall from \cref{sec:tdiscret} and \cref{ex:mol:1} that the solution to the SDE can be written as 
	\begin{equation}
		\label{eq:ex:score}
		X_k = e^{\Delta b} X_{k-1} + W_x, \quad W_x \sim \mathcal{N}\left(0, \frac{\sigma^2}{b}\left(e^{2\Delta b}-1\right)\mathbb{I}_{2\times2}\right),
	\end{equation}	
	and since the potential function $G_k$ does not depend on $\theta$ in this case, it can be dropped from \eqref{eq:ex:rec:alpha} and \eqref{eq:ex:rec:beta} and the components of the sufficient statistic $s^{\alpha}_k(x_{k-1}, x_k)$ for the additive functional $\alpha_k^{\theta}$ are
	\begin{align*}
		\frac{\partial}{\partial \sigma^2}\log{f_{\Delta}^{\theta}(x_k| x_{k-1})}&=
		-\frac{1}{\sigma^2}+\frac{b\norm{x_k-e^{\Delta b}x_{k-1}}^2}{2\sigma^4\left(e^{2\Delta b}-1\right)},\\
		\frac{\partial}{\partial b}\log{f_{\Delta}^{\theta}(x_k| x_{k-1})}&= 
		\frac{1}{b} -\frac{2\Delta e^{2\Delta b}}{\left(e^{2\Delta b}-1\right)} -\frac{\norm{x_{k}-e^{\Delta b}x_{k-1}}^2}{2\sigma^2(e^{2\Delta b}-1)}\\
		&+\frac{\Delta be^{\Delta b}(x_k- e^{\Delta b}x_{k-1})^{\intercal}x_{k-1}}{\sigma^2(e^{2\Delta b}-1)}
		+\frac{\norm{x_k - e^{\Delta b}x_{k-1}}^2\Delta be^{2\Delta b}}
		{\sigma^2(e^{2\Delta b}-1)^2}.
	\end{align*}
	The components of the sufficient statistic $s^{\beta}_k(x_{k-1}, x_k)$ for $\beta_k^{\theta}$ are given in \cref{app:smm:suffstat}. Note that these derivatives can be evaluated for any value of $\Delta$, and it is therefore possible to adapt them in order to only compute sufficient statistics when an observation is recorded as in \cref{alg:smm:pf-missing}. This is reflected in \cref{alg:smc:smc-fs}.
\end{exmp}
\subsection{Estimating the Fisher information matrix (FIM)}
\label{sec:fim}
The \textit{Fisher information matrix} (FIM) is widely used in estimation problems as an indicator of the performance of a given estimator. Indeed, it is a key element of the Cram\'{e}r-Rao inequality, or \textit{Cram\'{e}r-Rao Lower Bound} (CRLB) derived by \cite{cramer1999mathematical, rao1992information, frechet1943extension, darmois1945limites}, which states that for an unbiased estimate $\hat{\theta}$ of the parameter $\theta$, its covariance has lower bound
\begin{equation*}
	\text{Cov}(\hat{\theta}) \succeq \mathcal{I}_n(\theta)^{-1},
\end{equation*} 
where given matrices $A$ and $B$, the inequality $A \succeq B$ indicates that $A-B$ is a positive semi-definite matrix, and $\mathcal{I}_n(\theta)$ denotes the FIM in a random sample $Y_1, \ldots, Y_n$ of size $n$ \cite{degroot2012probability}, defined as
\begin{align}
	\label{eq:fim:sam}
	\mathcal{I}_n(\theta) &= \mathbb{E}_{\theta}\left[\nabla\log p_{\theta}(Y_{1:n}) \nabla\log p_{\theta}(Y_{1:n})^{\text{T}}\right]\\
	&= \mathbb{E}_{\theta}\left[-\nabla^2\log p_{\theta}(Y_{1:n})\right],
	\label{eq:fim:oim}
\end{align}
where the second equality is proven in \cite{duchi2016lecture}. When the expectations in \eqref{eq:fim:sam} and \eqref{eq:fim:oim} are intractable $-$ which is the case when the Airy profile is used to describe the photon detection locations in the single-molecule tracking model $-$ there are several ways one can go about estimating the FIM.
\subsubsection{Estimating the FIM for a single large sample using the OIM}
Firstly, note that from \eqref{eq:fim:oim}, the relationship between the FIM and OIM is simply
\begin{equation}
	\label{eq:fim:oim2}
	\mathcal{I}_n(\theta) = \mathbb{E}_{\theta}\left[\mathcal{H}_n(\theta)\right],
\end{equation}
where $\mathcal{H}_n(\theta) = -\nabla^2\log p_{\theta}(y_{1:n})$ denotes the OIM. Then, for a general state space model, in \cite{bickel1998asymptotic}, it was proven that under mild assumptions, 
\begin{equation*}
	\frac{1}{n}\mathcal{H}_n(\theta) \rightarrow \mathcal{I}(\theta)  \quad \text{as } n \rightarrow \infty,
\end{equation*}
where $\mathcal{I}(\theta)$ is the asymptotic FIM. See \cite{houssineau2019identification} for the corresponding result for multiple targets. So for a large enough sample size $n$, i.e. if the interval during which the molecule(s) of interest are observed is long enough, the OIM and FIM can be used interchangeably, i.e. for $n\gg1$,
\begin{equation} \label{eq:fim:oim1}
	\mathcal{H}_n(\theta) \approx \mathcal{I}_n(\theta).
\end{equation}
See \cref{fig:fim:stat-airy} for an illustration. Therefore, the first way of estimating the \textit{asymptotic} FIM in the single-molecule tracking model is simply to obtain the OIM for a large sample size. For more details on the OIM as an estimate of the FIM, see \cite{douc2004asymptotic}.
\subsubsection{Estimating the FIM using the mean outer product of the score}
If the molecule(s) of interest are only observed for a short interval, then the size $n$ of the sample of interest is not large enough to estimate the FIM using the OIM. It is then also possible to instead obtain a particle approximation of the expectation in \eqref{eq:fim:sam} using the score as follows: generate $D$ datasets $y^{(1:D)}_{1:n}$ of (smaller) size $n$ where $y^{(d)}_{1:n} := \{y^{(d)}_{1}, \ldots, y^{(d)}_{n}\}$, and according to the same parameters $\theta$. The outer product of the score can then be used in the estimate of the FIM as follows:
\begin{equation} \label{eq:fim:op}
	\hat{\mathcal{I}}_n(\theta) = \frac{1}{D} \sum_{j=1}^{D} \mathcal{G}^{(d)}_n(\theta)\mathcal{G}^{(d)}_n(\theta)^{\intercal},
\end{equation}
where for $d = 1, \ldots, D$, the vector $\mathcal{G}^{(d)}_n(\theta) := \nabla\log p_{\theta}(y^{(d)}_{1:n})$ is the score for the $d$-th dataset of size $n$. An advantage of this approach is that the OIM need not be computed. \par 
\subsubsection{Estimating the FIM using the mean OIM}
When multiple datasets are available, the OIM can also similarly be averaged over $D$ datasets to estimate the FIM as follows:
\begin{equation}
	\label{eq:fim:oim3}
	\hat{\mathcal{I}}_n(\theta) = \frac{1}{D}\sum_{j=1}^D \mathcal{H}^{(d)}_n(\theta).
\end{equation}
This third approach is the Monte Carlo estimator of the expectation in \eqref{eq:fim:oim2}, and can be seen as averaging the first estimation method in \eqref{eq:fim:oim1}.\par
Now that the various methods for estimating the FIM have been established, it can be used in an experimental design setting to plan experiments with the aim of returning the most accurate parameter estimates. See \cref{sec:smm:em} for details on how ML estimates can similarly be obtained via EM and gradient ascent methods with the use of  smoothed additive functionals and SMC-FS.
\begin{exmp} \label{ex:smm:fimest}
	To verify these approaches to estimate the FIM, consider the straightforward special case of estimating the FIM for the location $x_0=(x_{0,1}, x_{0,2})$ parameters of a static molecule emitting photons at a constant rate. In \cite{ober2004localization, chao2016fisher}, the analytical expression for the FIM is derived for 
	the Airy profile, and its diagonal components given observations $y_{1:n}$ are given by 
	\begin{align*}
		\mathcal{I}_n^{\text{Airy}}(x_{0,1}) = \mathcal{I}_n^{\text{Airy}}(x_{0,2}) &= N_{phot} \alpha^2,
	\end{align*}
	where $\alpha = \frac{2\pi n_a}{\lambda_e}$, $N_{phot}$ denotes the expected photon count, and $\mathcal{I}^{\text{Airy}}(x_{0,i})$ denotes the $(i, i)$-th element of the FIM, corresponding to parameter component $x_i$, for the Airy profile. 
	As mentioned in \cref{sec:tdiscret}, having a static molecule simplifies the model. Since we have independent data, the true values of score $\mathcal{G}$ and OIM $\mathcal{H}$ can be derived as follows. Given a set of observations $y_{1:n}$ distributed according to the Airy profile,
	\begin{align*}
		\mathcal{G}_n^{\text{Airy}}(x_0) &= \sum_{k=1}^n \gamma_k (M^{-1}y_k-x_0)\mathbbm{1}_{y_k\neq \emptyset}, \\
		\mathcal{H}_n^{\text{Airy}}(x_0) &= \sum_{k=1}^n \left(\chi_k (M^{-1}y_k-x_0)(M^{-1}y_k-x_0)^{\intercal} + \gamma_k \mathbb{I}_{2\times2}\right) \mathbbm{1}_{y_k\neq \emptyset},
	\end{align*}
	where 
	\begin{equation*}
		\gamma_k =  \frac{2\alpha}{r} \frac{J_2(\alpha r_k)}{J_1(\alpha r_k)}, \qquad
		\chi_k =  -\frac{2\alpha^2}{r_k^2}
		\left[\frac{J_3(\alpha r_k)}{J_1(\alpha r_k)} -
		\frac{J_2^2(\alpha r_k)}{J_1^2(\alpha r_k)}\right],
	\end{equation*}
	and $r_k = \sqrt{(M^{-1}y_k-x_0)^\intercal(M^{-1}y_k-x_0)}$. See \cref{app:smm:airy-static} for the full derivation.\par
	Using the same settings as in \cref{ex:smm:obs}, we simulate $D_l=40$ `large' datasets according to the Airy 
	profile consisting of observations obtained during the interval $[0, 0.2]$ seconds. We also simulate $D_s=400$ `short' datasets consisting of observations obtained during the shorter interval $[0, 0.02]$ seconds. The score and OIM are obtained for all datasets and the FIM for the large and short datasets is estimated in three ways: \textit{(i)} using the OIM returned from a single dataset selected at random \eqref{eq:fim:oim1}, \textit{(ii)} using the mean outer product of the score \eqref{eq:fim:op} over all datasets and \textit{(iii)} using the mean OIM across all datasets \eqref{eq:fim:oim3}. Finally, the square root of the CRLB, also known as the (fundamental) \textit{limit of accuracy} and defined as
	$$\delta_\vartheta = \sqrt{CRLB_\vartheta}$$
	for parameter $\vartheta$ is obtained. This is repeated for various expected photon counts in order to compare the evolution of the estimated limit of accuracy as the expected number of photons increases to the true limit of accuracy obtained using the true FIM. In \cref{fig:fim:stat-airy}, it is apparent that, apart from very low photon counts, all approaches are able to return accurate estimates of the limit of accuracy. Comparing \cref{fig:sub-first-airy} and \cref{fig:sub-second-airy}, it also becomes apparent that for long datasets, approach \textit{(i)} is slightly more accurate than \textit{(ii)}, and the opposite is true for short datasets. In both cases, approach \textit{(iii)} is the most accurate. Similar results can be obtained for the 2D Gaussian profile and Born and Wolf model, as analytical expressions for the FIM are also available for a static object \cite{ober2020quantitativech18, ober2020quantitativech19}.
	\begin{figure}
		\centering
		\begin{subfigure}{.49\textwidth}
			\centering
			\includegraphics[width=.9\textwidth]{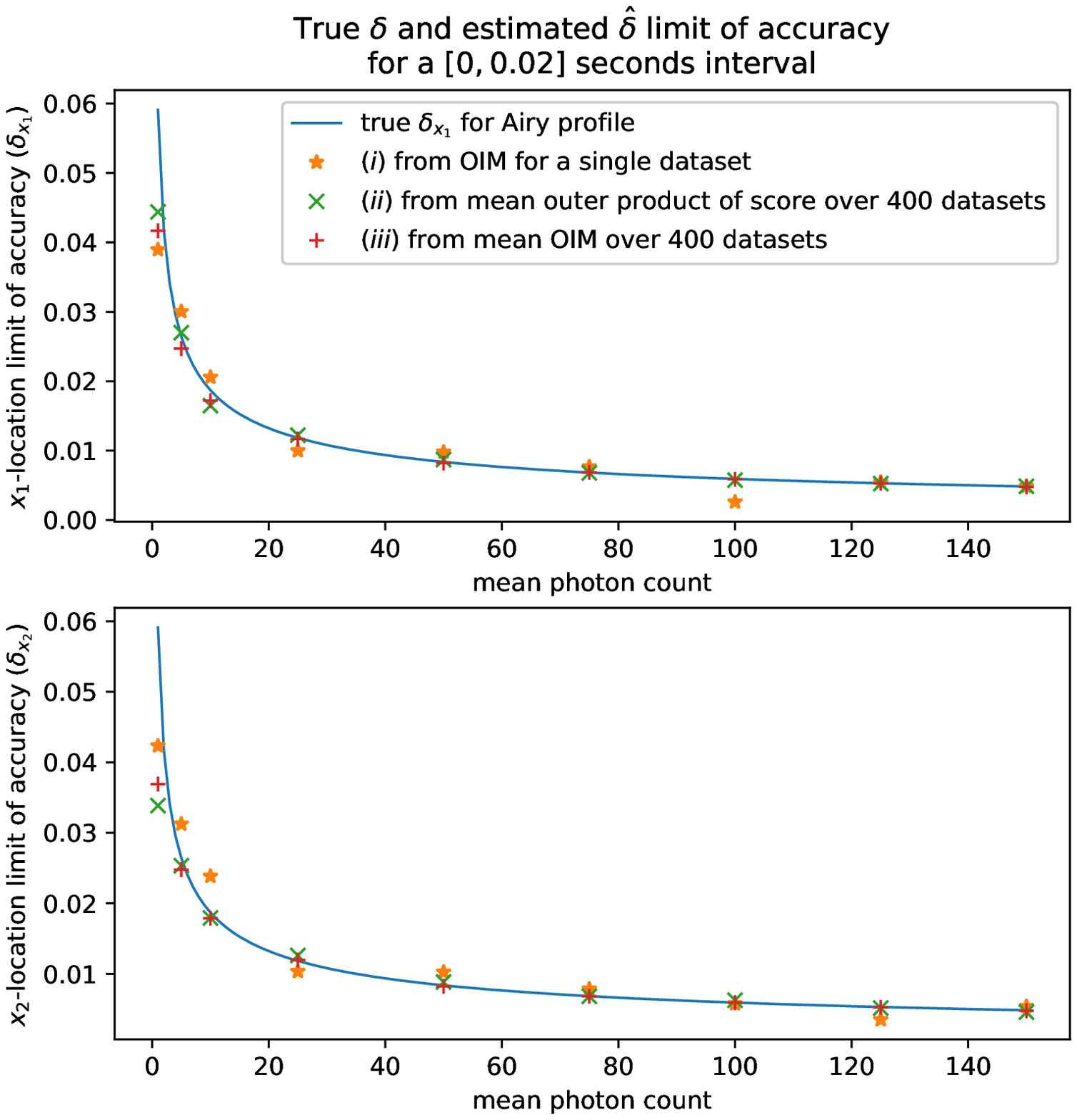} 
			\includegraphics[width=\textwidth]{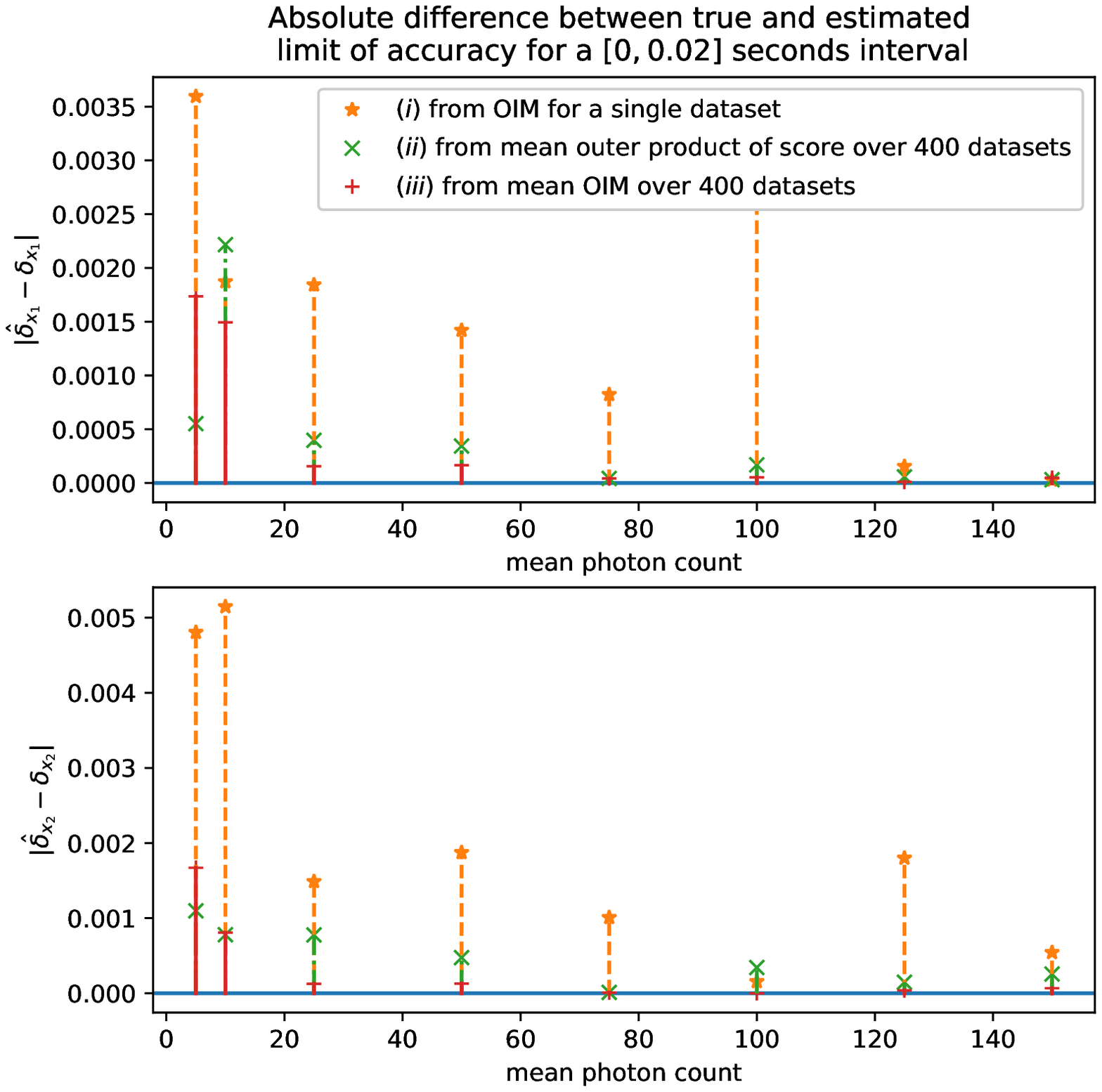} 
			\caption{Many short datasets, Airy profile}
			\label{fig:sub-first-airy}
		\end{subfigure}
		\begin{subfigure}{.49\textwidth}
			\centering
			\includegraphics[width=.95\textwidth]{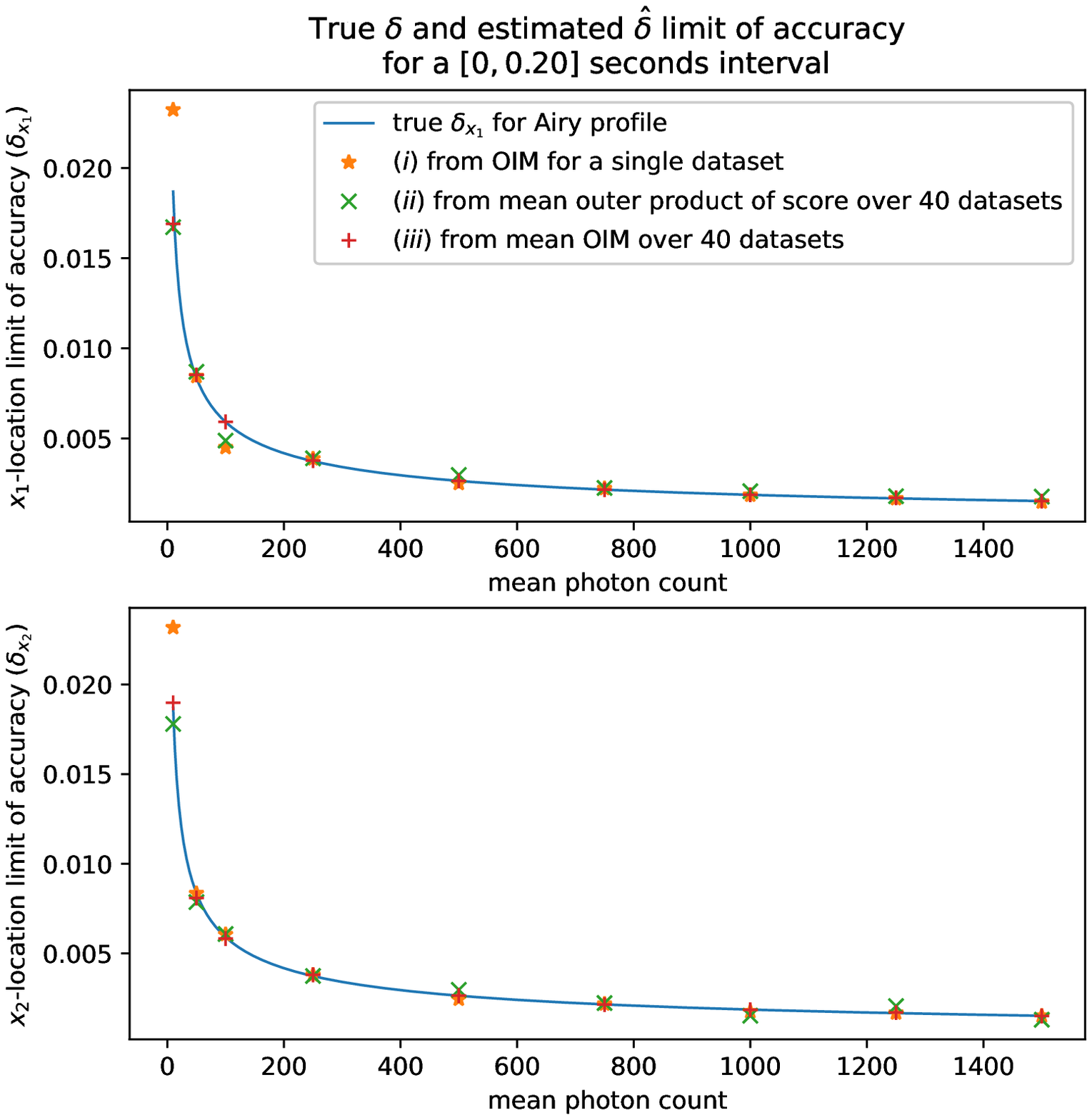} 
			\includegraphics[width=\textwidth]{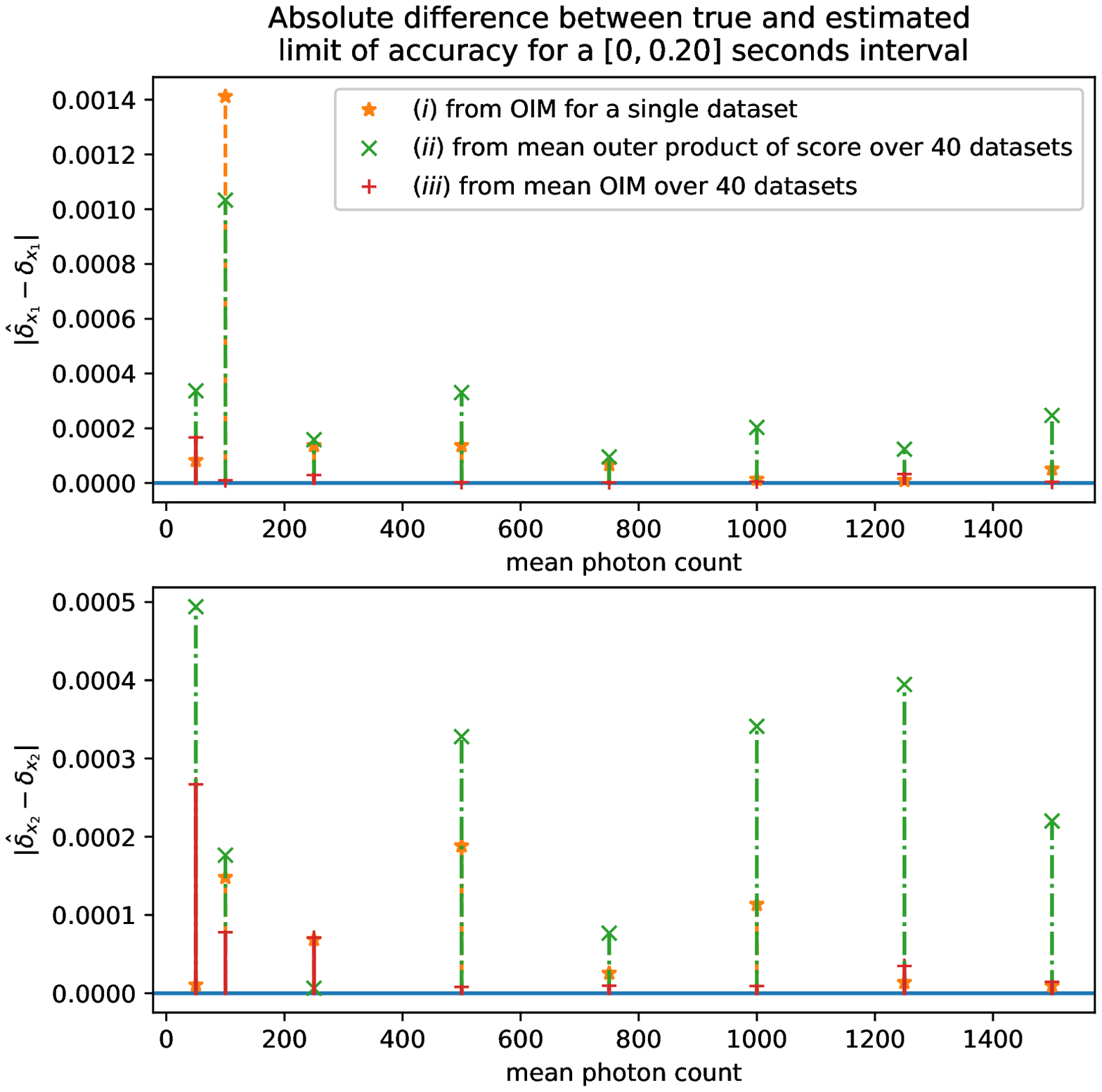} 
			\caption{Few long datasets, Airy profile}
			\label{fig:sub-second-airy}
		\end{subfigure}
		\caption[True and estimated limit of accuracy for short and long datasets distributed according to the Airy profile.]{True and estimated limit of accuracy for mean photon counts ranging from \textbf{(a)} 1 to 150 \textbf{(b)} 10 to 1500. The limit of accuracy is estimated for the location parameters $(x_1, x_2)$ of a static in-focus molecule. The estimates are obtained by taking the square root of the inverse of the FIM, obtained for \textbf{(a)} 400 `short' \textbf{(b)} 40 `long' simulated datasets using approaches \textit{(i)} \textcolor{orange}{$\star$}, \textit{(ii)} \textcolor{green}{$\times$} and \textit{(iii)} \textcolor{red}{$+$} for comparison purposes. To generate each dataset, the photon detection times are simulated according to a Poisson process with constant rate corresponding to the expected mean photon count for \textbf{(a)} $[0, 0.02]$  \textbf{(b)} $[0, 0.2]$ seconds and the intervals are discretised. The  photon detection locations are generated according to the Airy profile, with parameters as in \cref{ex:smm:obs}. The true limit of accuracy (\textit{blue solid line}) is also computed as it is available analytically \cite{ober2020quantitativech18}. Estimates of the limit of accuracy based on a single dataset (approach \textit{(i)}) are more accurate when the dataset is long, while taking the mean outer product of the score over all datasets (approach \textit{(ii)}) yields more accurate estimates for a large number of short datasets. Approach \textit{(iii)} provides a good balance between the two. In general, estimates of the limit of accuracy are relatively poor for very low mean photon counts but quickly improve as it increases.}
		\label{fig:fim:stat-airy}
	\end{figure}
\end{exmp}
\section{Numerical experiments}
\label{sec:mol:exp}
In this section, we apply the particle smoother known as SMC-FS to estimate the FIM, and thus the limit of accuracy, for various parameters in the context of one or multiple moving molecules with stochastic trajectories. Experiments are first run with photon detection locations described by the Gaussian and Airy profiles, and then the Born and Wolf model, where an additional hyperparameter, namely the optical axis location, must be considered as well. The methodology is then applied to the optical microscope resolution problem, where the limit of accuracy for the mean separation distance between two closely spaced diffusing molecules is assessed.\par 
Unless stated otherwise, the FIM for any given settings is estimated according to  \eqref{eq:fim:oim3}, i.e. by generating several datasets according to the same settings, estimating the OIM for each dataset using the SMC-FS algorithm (\cref{alg:smc:smc-fs}) and averaging the estimated OIM over all generated datasets. The particle filter employed in the SMC-FS algorithm is the bootstrap filter. A large number of datasets is needed to minimise Monte Carlo error in FIM estimates, so to speed up computations we adopt a distributed computing approach: the datasets and repeat runs of the SMC-FS algorithm to estimate the OIMs are divided evenly among 60 to 64 CPUs and run in parallel.  
We note that for our methodology, access to a large number of CPUs is beneficial to both the accuracy of estimates and the speed at which they can be obtained. The wall clock speed of the SMC-FS algorithm is also affected by the mean photon count $N_{phot}$ considered. Indeed, as described in \cref{alg:smc:smc-fs}, the filtering and smoothing steps only occur in segments where a photon is observed, so the expected complexity of a full run of the SMC-FS algorithm is $\mathcal{O}(N_{phot}N^2)$ where $N$ is the size of the SMC particle population (generally $N=500$).
\subsection{Limit of accuracy of drift and diffusion coefficients for the Gaussian and Airy profiles} \label{sec:fim:gaussian-airy}
Consider a molecule with trajectory described by the SDE in \cref{ex:mol:1}. In \cite{vahid2020fisher}, the authors took advantage of the Kalman filter formulae to evaluate the FIM for the diffusion ($\sigma^2$) and drift ($b$) coefficients. However, it was only possible to obtain an analytic solution for a particular set of detection times $t_1, t_2, \ldots$ and for the 2D Gaussian photon distribution profile. Otherwise, the computational cost of performing numerical integration was too high for more than one photon. \par 
In our particle filtering framework, it is also possible to take advantage of the Kalman filter formulae when considering the 2D Gaussian model in order to obtain an accurate approximation of the true score and OIM by numerical differentiation, and for any detection times schedule. An estimate of the FIM is therefore obtained by evaluating the true OIM for 3000 datasets and taking their mean, as described in \cref{sec:fim}. The molecule trajectories are simulated for $[0, 0.2]$ seconds, with diffusion coefficient $\sigma^2=1$ $\mu$m$^2/$s, drift coefficient $b=-10$ s$^{-1}$, and initial location Gaussian distributed with mean $x_0=(5.5, 5.5)^\intercal$ $\mu$m and covariance $P_0 = 10^{-2}\mathbb{I}_{2\times2}$ $\mu$m$^2$. The observations for the first experiment are generated according to the 2D Gaussian profile \eqref{eq:gausapp} with parameters as in \cref{ex:smm:obs}. It is not possible to employ the Kalman filter formulae for the Airy and Born and Wolf profiles, and we must resort to using the SMC-FS algorithm instead. First of all, to evaluate the performance of the SMC-FS algorithm, the algorithm is employed using $N=500$ particles to estimate the score and OIM for the same 3000 2D Gaussian profile datasets, and we similarly take the mean OIM over all datasets to estimate the FIM. Next, we move on to the Airy profile, for which it was too computationally costly in \cite{vahid2020fisher} to obtain the FIM for more than a single photon. We estimate the OIM for the diffusion and drift coefficients using the SMC-FS algorithm with $N=500$ particles for 2040 datasets, where the molecule trajectories are simulated using the same parameters as for the 2D Gaussian profile, and the observations are generated according to the Airy profile \eqref{eq:airy} with parameters as in \cref{ex:smm:obs}. This is repeated for various mean photon counts ranging from 10 to 1250. Then, the limit of accuracy estimate, denoted $\hat{\delta}_{\vartheta}$ for hyperparameter $\vartheta$, is computed, and the results are displayed in \cref{fig:fim:gaussian-airy}.\par
Both \cref{fig:sub-fim-gaussian} and \cref{fig:sub-fim-airy} display an inverse square root decay of the limit of accuracy with respect to the mean photon count. This is consistent with the results for a static molecule from \cref{ex:smm:fimest}, and means that the quality of diffusion and drift estimates improves as the mean photon count increases. In addition to that, comparing the limit of accuracy obtained from the estimated and true OIM for the 2D Gaussian profile in \cref{fig:sub-fim-gaussian} indicates that the SMC-FS algorithm is able to return accurate estimates of the score and FIM for a stochastically moving molecule. Indeed, apart from a very slight discrepancy for very low photon counts for the drift coefficient, the estimates of the limit of accuracy are almost indistinguishable. 
\begin{figure}[htbp!]
	\centering
	\begin{subfigure}{.49\textwidth}
		\centering
		\includegraphics[width=\textwidth]{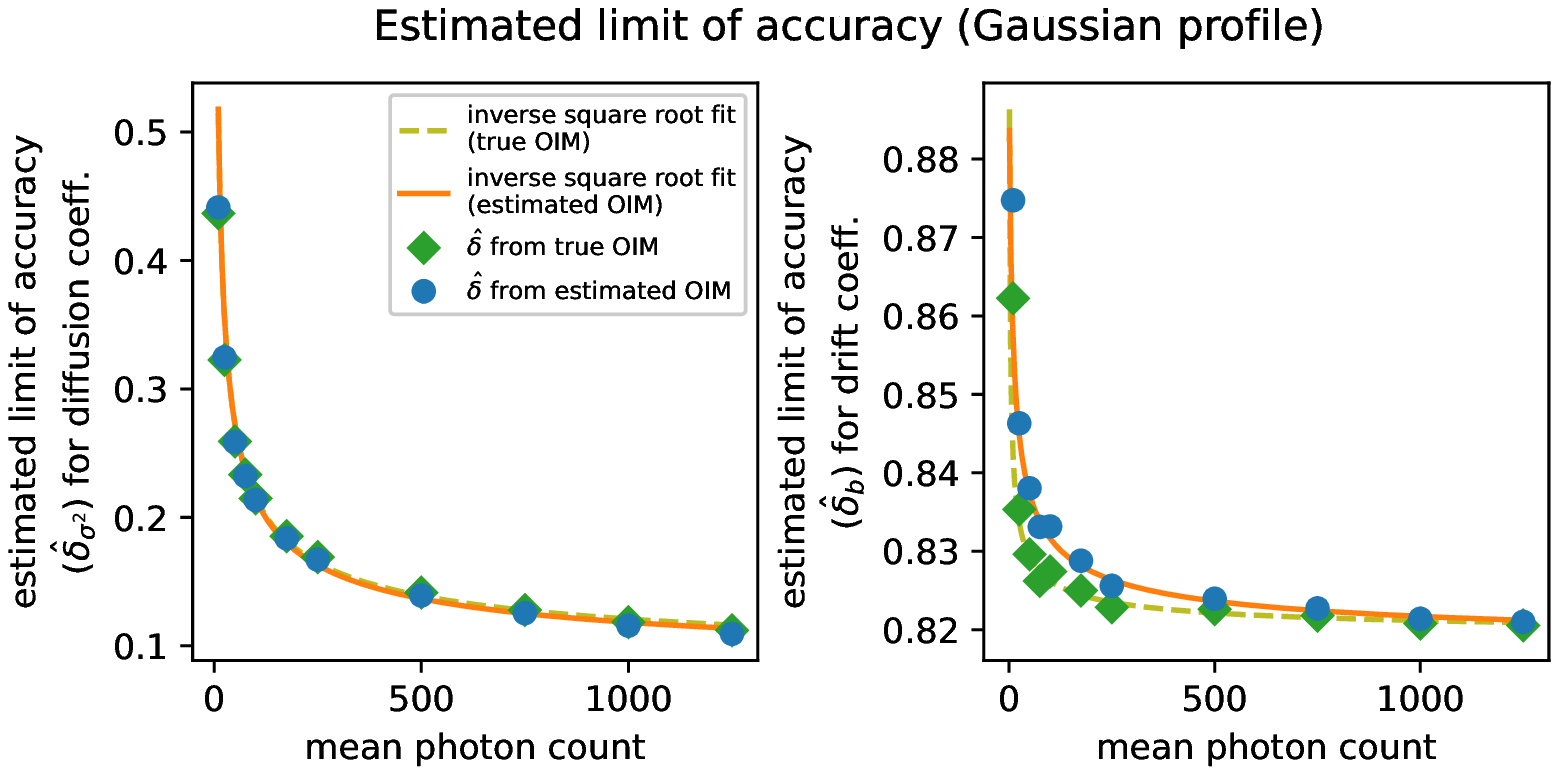} 
		\caption{2D Gaussian profile}
		\label{fig:sub-fim-gaussian}
	\end{subfigure}
	\begin{subfigure}{.49\textwidth}
		\centering
		\includegraphics[width=\textwidth]{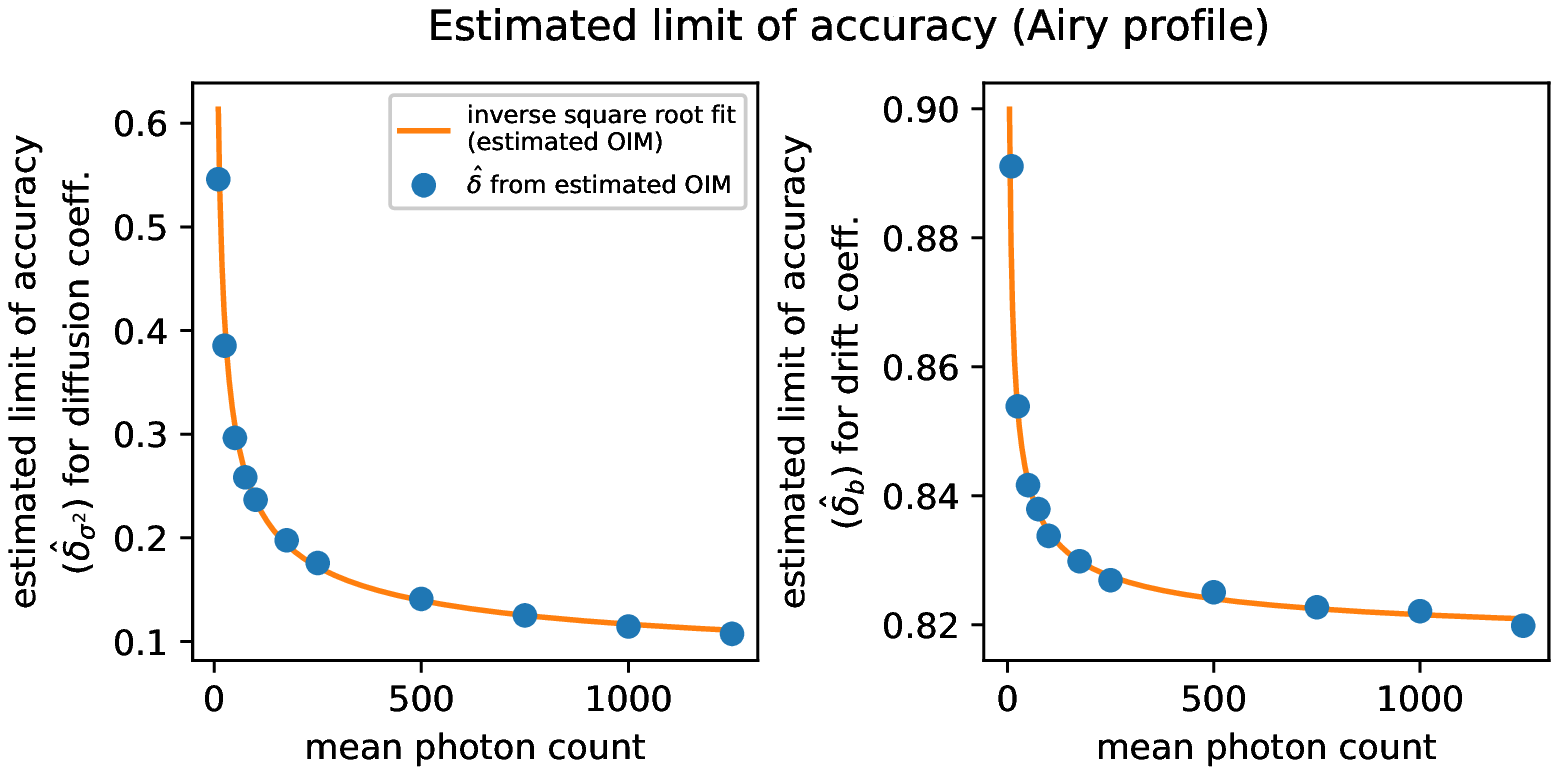} 
		\caption{Airy profile}
		\label{fig:sub-fim-airy}
	\end{subfigure}
	\caption[Estimated limit accuracy for the diffusion ($\sigma^2$) and drift ($b$) coefficients for an in-focus molecule with stochastic trajectory and photon detection locations described by the \textbf{(a)} 2D Gaussian and \textbf{(b)} Airy profiles.]{Evolution of the estimated limit accuracy for mean photon counts ranging from 10 to 1250. The limit of accuracy is estimated for the diffusion ($\sigma^2$) and drift ($b$) coefficients for an in-focus molecule with stochastic trajectory. The estimates are obtained by taking the square root of the inverse of the FIM, obtained by estimating the OIM using the SMC-FS algorithm with 500 particles for \textbf{(a)} 3000 and \textbf{(b)} 2040 simulated datasets. To generate each dataset, the molecule's trajectory was simulated according to the SDE in \cref{ex:mol:1} for the interval $[0, 0.2]$ seconds, with $\sigma^2=1$ $\mu$m$^2/$s, $b=-10$ s$^{-1}$, and initial location Gaussian distributed with mean $x_0=(5.5, 5.5)^\intercal$ $\mu$m and covariance $P_0 = 10^{-2}\mathbb{I}_{2\times2}$ $\mu$m$^2$. The observations are generated according to the \textbf{(a)} 2D Gaussian and \textbf{(b)} Airy profiles, with parameters as in \cref{ex:smm:obs}. For the \textbf{(a)} 2D Gaussian profile, the limit of accuracy is also estimated by using the true OIM obtained using numerical differentiation applied to the Kalman filter. An inverse square root curve (\textit{orange} and \textit{green dashed}) is fitted to the resulting estimated limits of accuracy for comparison.}
	\label{fig:fim:gaussian-airy}
\end{figure}
\subsection{Limit of accuracy of drift, diffusion and optical axis location for the Born and Wolf model} \label{sec:fim:bw}
When the molecule is out of focus, which means the photon detection locations are distributed according to the Born and Wolf model \eqref{eq:bw}, the FIM components for the diffusion and drift coefficients can be obtained as for the Airy and Gaussian profiles. However, a new hyperparameter must be considered, namely the optical axis location, denoted $z_0$. While previously, differentiating the log potential function was not needed, the vector of hyperparameters is now $\theta = (\sigma^2, b, z_0)$, and $G^{\theta}_k(x_k)$ depends on $z_0$ for $k = 1, \ldots, n$. \par 
While it requires numerical integration, differentiating $\log q_{z_0}(x_{1}, x_{2})$ for a given $x=(x_1, x_2) \in \mathbb{R}^2$ with respect to $z_0$ is not impossible. For notational simplicity, let $\alpha := \frac{2\pi n_{\alpha}}{\lambda_e}$, $r := \sqrt{x_1^2+x_2^2}$ and $W := \frac{\pi n^2_{\alpha}}{n_o\lambda_e}$ and rewrite \eqref{eq:bw} as 
\begin{equation*}
	q_{z_{0}}(x_1, x_2) = \frac{\alpha^2}{\pi}\left(U_{z_0}^2 + V_{z_0}^2\right),
\end{equation*}
where 
\begin{align*}
	U_{z_0} &:= \int_{0}^1J_0\left(\alpha r\rho\right)\cos{\left(Wz_0\rho^2\right)}\rho d\rho,&\qquad
	V_{z_0} &:= \int_{0}^1J_0\left(\alpha r\rho\right)\sin{\left(Wz_0\rho^2\right)}\rho d\rho.
\end{align*}
The first derivative was derived in \cite{ober2020quantitativech19} and is given by
\begin{equation*}
	\frac{\partial\log q_{z_{0}}(x_1, x_2)}{\partial z_0} = 2 \frac{U_{z_0}\dot{U}_{z_{0}} + V_{z_0}\dot{V}_{z_{0}}}
	{U_{z_0}^2 + V_{z_0}^2},
\end{equation*}
where
\begin{align*}
	\dot{U}_{z_{0}}  := \frac{\partial U_{z_{0}}}{\partial z_0} &= \int_{0}^1 
	J_0\left(\alpha r \rho\right)
	\cos{\left(Wz_0\rho^2\right)} W \rho^3d\rho, \\
	\dot{V}_{z_{0}} := \frac{\partial V_{z_{0}}}{\partial z_0} &= -\int_{0}^1
	J_0\left(\alpha r \rho\right)
	\sin{\left(Wz_0\rho^2\right)} W \rho^3d\rho.
\end{align*} 
The second derivative with respect to $z_0$ is given by
\begin{equation*}
	\frac{\partial ^2\log q_{z_{0}}(x_1, x_2)}{\partial z_0^2} =  2\frac{U_{z_0}\ddot{U}_{z_{0}} + \dot{U}_{z_{0}}^2 
		+ V_{z_0}\ddot{V}_{z_{0}} + \dot{V}_{z_{0}}^2}
	{U_{z_0}^2 + V_{z_0}^2} 
	- \left(\frac{\partial\log q_{z_{0}}(x_1, x_2)}{\partial z_0}\right)^2,
\end{equation*}
where
\begin{align*}
	\ddot{U}_{z_{0}} := \frac{\partial^2 U_{z_{0}}}{\partial z_0^2} &= -\int_{0}^1
	J_0\left(\alpha r \rho\right)
	\cos{\left(W z_0\rho^2\right)} W^2\rho^5 d\rho, \\
	\ddot{V}_{z_{0}} := \frac{\partial^2 V_{z_{0}}}{\partial z_0^2} &= - \int_{0}^1 
	J_0\left(\alpha r \rho\right)
	\sin{\left(W z_0\rho^2\right)} W^2 \rho^5 d\rho.
\end{align*}
The potential function only depends on $z_0$, so any cross terms in the FIM and OIM between $z_0$ and either $\sigma^2$ or $b$ will be zero. \par
The OIM is estimated for the diffusion ($\sigma^2$), drift ($b$) coefficients and optical axis location ($z_0$) using the SMC-FS algorithm with 500 particles for 2040 datasets, where the molecule trajectories are simulated using the same parameters as for the 2D Gaussian and Airy profiles, and the observations are generated according to the Born and Wolf model with parameters as in \cref{ex:smm:obs} (i.e. $z_0 = 1$ $\mu$m). Then, the limit of accuracy for mean photon counts ranging from 10 to 1250 is computed, and the results are displayed in \cref{fig:fim:bw}. Once again, there is an inverse square root decay of the limit of accuracy with respect to the mean photon count for all hyperparameters considered.
\begin{figure}[htbp!]
	\centering
	\includegraphics[width=\textwidth]{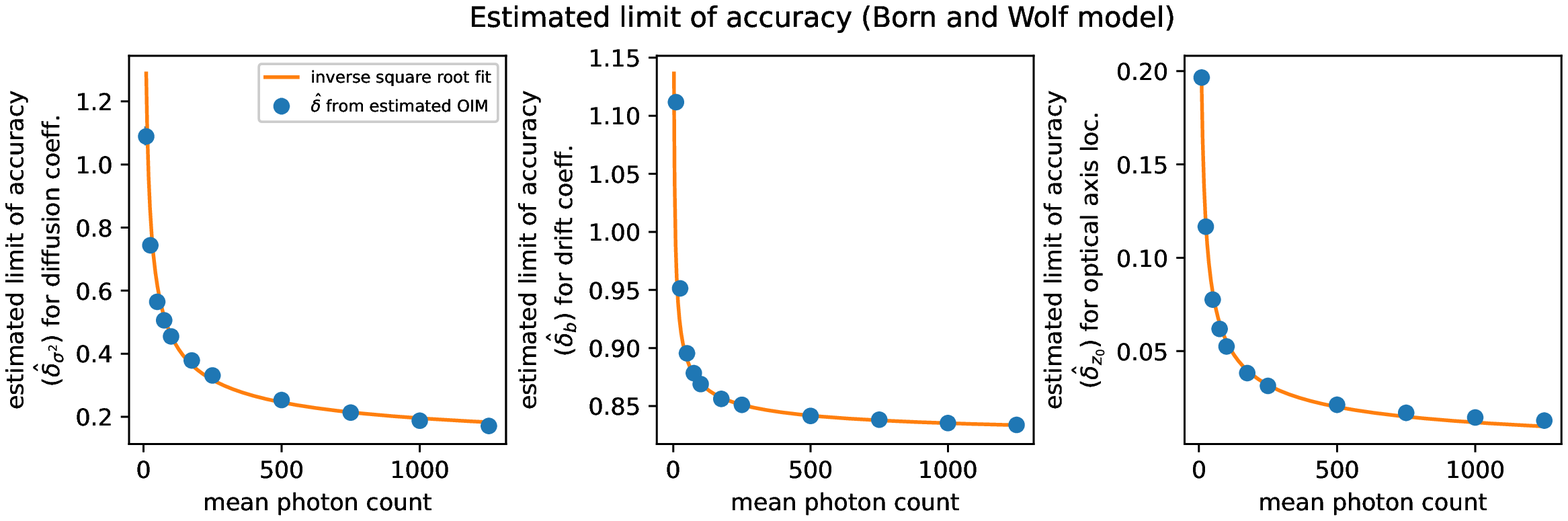} 
	\caption[Estimated limit accuracy for the diffusion ($\sigma^2$), drift ($b$) coefficients and optical axis location ($z_0$) for an out of focus molecule with stochastic trajectory and photon detection locations described by the Born and Wolf model.]{Evolution of the estimated limit accuracy for mean photon counts ranging from 10 to 1250. The limit of accuracy is estimated for the diffusion ($\sigma^2$), drift ($b$) coefficients and optical axis location ($z_0$) for an out-of-focus molecule with stochastic trajectory. The estimates are obtained by taking the square root of the inverse of the FIM, obtained by estimating the OIM using the SMC-FS algorithm with 500 particles for 2040 simulated datasets. To generate each dataset, the molecule trajectories are simulated according to the SDE in \cref{ex:mol:1} for the interval $[0, 0.2]$ seconds, with $\sigma^2=1$ $\mu$m$^2/$s, $b=-10$ s$^{-1}$, and initial location Gaussian distributed with mean $x_0=(5.5, 5.5)^\intercal$ $\mu$m and covariance $P_0 = 10^{-2}\mathbb{I}_{2\times2}$ $\mu$m$^2$. The observations are generated according to the Born and Wolf model with parameters as in \cref{ex:smm:obs}, where $z_0=1$ $\mu$m. An inverse square root curve (\textit{orange}) is fitted to the resulting estimated limits of accuracy for comparison.}
	\label{fig:fim:bw}
\end{figure}
\subsection{Limit of accuracy of the separation distance between two molecules for the Airy profile} \label{sec:sep_dist}

Being able to estimate the distance of separation between two closely spaced molecules is an important aspect of single-molecule microscopy. In the past, Rayleigh's criterion \cite{born2013principles} has been used to define the minimum distance between two point sources such that they can be distinguished in the image. However, \cite{ram2006beyond} treated the separation distance problem as a statistical estimation task and derived the CRLB (or inverse of the FIM) for the mean square error of the separation distance estimate. It was shown that Rayleigh's minimum distance can be surpassed by capturing more photons, e.g. by observing the molecules for a longer period. So far, the limit of accuracy has only been derived for static molecules. In this experiment, we apply our methodology to estimate the limit of accuracy for the locations and separation distance between two molecules that are not static, but diffusing independently at their respective stationary distributions, as illustrated in \cref{fig:smm:sepdist:ov}.\par
\begin{figure}[htbp!]
	\centering
	\begin{subfigure}{.35\textwidth}
		\centering
		\includegraphics[width=\textwidth]{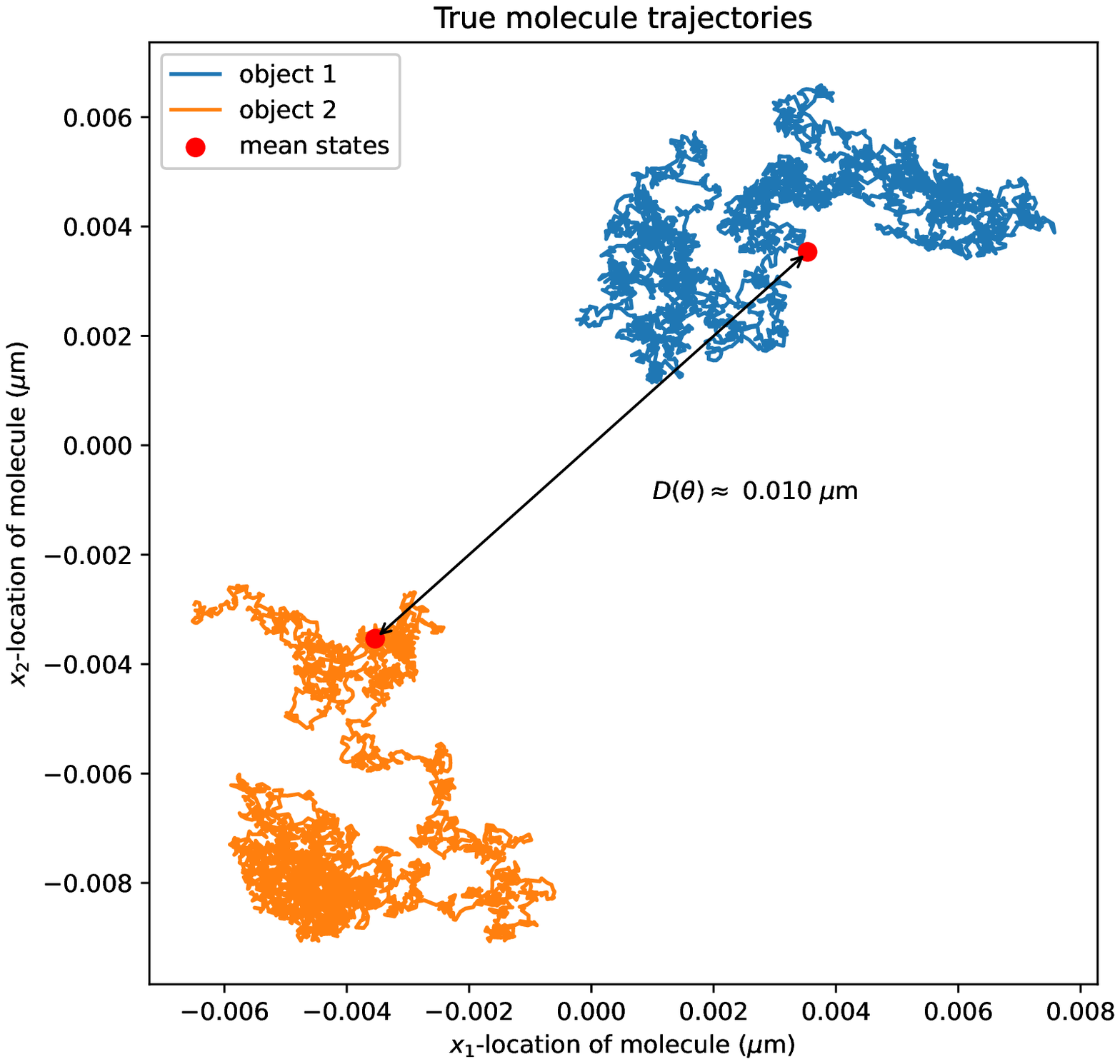} 
		\caption{}
	\end{subfigure}
	\begin{subfigure}{.3395\textwidth}
		\centering
		\includegraphics[width=\textwidth]{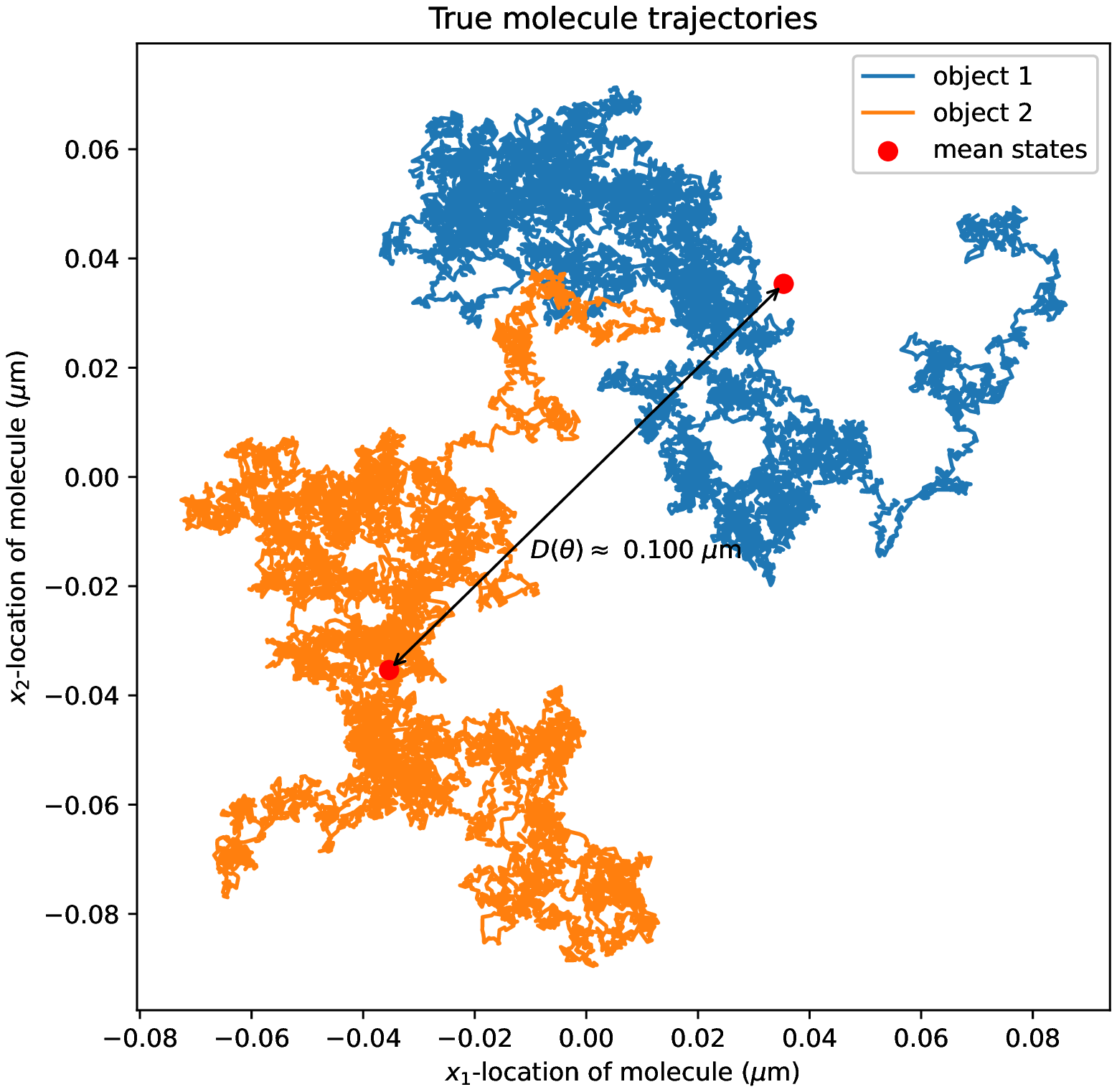} 
		\caption{}
	\end{subfigure}
	\caption[Two molecules diffusing independently.]{Examples of two molecules diffusing independently at a mean separation distance of \textbf{(a)} $0.01$ $\mu$m with diffusion coefficient $\sigma^2 = 10^{-4}$ $\mu$m$^2$/s \textbf{(b)} $0.1$ $\mu$m with $\sigma^2 = 10^{-3}$ $\mu$m$^2$/s. For an Airy distributed photon detection profile with $n_{\alpha}=1.4$ and $\lambda_e = 0.52$ $\mu$m, Rayleigh's resolution limit is $\approx$ $0.227$ $\mu$m. Increasing the value of the diffusion coefficient $\sigma^2$ will often lead to the molecule trajectories overlapping.}
	\label{fig:smm:sepdist:ov}
\end{figure}
Let $X_{t}=(X_{t, 1},X_{t, 2})^{\intercal}$ be the cartesian coordinates of a moving molecule with stationary distribution $\mathcal{N}(x_0,\sigma^2 \mathbb{I}_{2\times2})$
for all $t$, where $x_0$ is referred to as the \textit{mean state}. The continuous time dynamics are given by 
\begin{equation} \label{eq:smm:sepdist:x}
	\text{d}X_{t}  = (x_0-X_{t})\text{d}t + \sqrt{2}\sigma\text{d}B_{t}.
\end{equation}
From \cref{subsec:x}, it is straightforward to establish the solution to this SDE, which yields the conditional pdf $f^{x_0}_{\Delta}$ of $X_{k+1}$ at the $(k+1)$-th discrete segment, given $X_{k}=x$ at the $k$-th segment, as
\begin{equation*}
	X_{k+1} | (X_{k} = x) = \Phi_\Delta x + a_\Delta  + W_x, \quad W_x \sim \mathcal{N}(0, R_{\Delta}),
\end{equation*}
where $\Phi_\Delta = e^{-\Delta}$, $a_\Delta = x_0(1-e^{-\Delta})$ and $R_\Delta = \sigma^2 (1-e^{-2\Delta})\mathbb{I}_{2\times2}$. \par
In this experiment, consider two independently diffusing molecules whose states are $\left(X_{t},V_{t}\right)$, where $X_{t}$ is the state of the first molecule and $V_{t}$ is the state of the second. Assume that the initial state of each molecule is the same as its corresponding mean state, i.e.  $\left(x_{0},v_{0}\right)=:\theta=(\theta_{1},\theta_{2},\theta_{3},\theta_{4})^{\intercal}$, and is non-random but unknown and to be estimated. The conditional probability density function of $\left(X_{k+1}, V_{k+1}\right)$ given $\left(X_{k}, V_{k}\right)=\left(x_k,v_k\right)$ is $f^{x_0}_{\Delta}(x_{k+1}\vert x_k)f^{v_0}_{\Delta}(v_{k+1}\vert v_k)$ owing to their independent motions. \par
Let $\hat{\theta}=\left(\hat{\theta}_{1}(Y_{1:n}),\hat{\theta}_{2}(Y_{1:n}),\hat{\theta}_{3}(Y_{1:n}),\hat{\theta}_{4}(Y_{1:n})\right)^{\intercal}$
denote an estimate of $\theta$ given observations $Y_{1:n}$. Recall that the FIM, denoted $\mathcal{I}_{n}(\theta)$, is given by 
\begin{equation*}
	\mathcal{I}_{n}(\theta)=\mathbb{E}\left[ \nabla\log p_\theta(Y_{1:n})\;\nabla\log p_\theta(Y_{1:n})^{T}\right].
\end{equation*}
For any scalar-valued function $D(\theta)\in\mathbb{R}$, we can estimate $D(\theta)$ using $D(\hat{\theta})$ where $\hat{\theta}$ is the estimate of $\theta$. Assuming the estimate is unbiased, we have the following CRLB for the function $D$, 
\begin{equation} \label{eq:smm:sepdist}
	\mathbb{E}\left[ \left(D(\hat{\theta})-D(\theta)\right)^{2}\right] \geq\nabla D(\theta)^{\intercal}\mathcal{I}_{n}(\theta)^{-1}\nabla D(\theta),
\end{equation}
where $\nabla D(\theta):=(\partial D/\partial\theta_{1},\ldots,\partial D/\partial\theta_{4})^{\intercal}$. For
example, to estimate the separation between the two molecules we have
$D(\theta)=\sqrt{\left(\theta_{1}-\theta_{3}\right)^{2}+\left(\theta_{2}-\theta_{4}\right)^{2}}$, and as a result
\begin{equation*}
	\nabla D(\theta)= \frac{1}{D(\theta)}\left(\begin{array}{c}
		\theta_1 - \theta_3\\
		\theta_2 - \theta_4\\
		- (\theta_1 - \theta_3)\\
		- (\theta_2 - \theta_4)
	\end{array}\right).
\end{equation*}
This experiment is essentially the dynamic version of the experiments on estimating the separation of two static molecules by \cite{ram2013stochastic}. The key difference here is that the molecules are diffusing.  The observations $Y_{1:n}$ are generated as in \cite{ram2013stochastic}, i.e. according to the following mixture
\begin{align} %
	G_k(x_k, v_k)  
	&= \begin{cases} \label{eq:smm:mixtg:s}
		1-\Delta \lambda_\theta, & \mathrm{if}\;y_{k}=\emptyset,\\
		\lambda_xg(y_k | x_k) + \lambda_v g(y_k | v_k) , & \mathrm{otherwise},
	\end{cases}
\end{align}
where $g$ is the photon distribution profile given in \eqref{eq:image} and $\lambda_\theta = \lambda_x + \lambda_v$. The measurement model considered in this experiment is the Airy profile \eqref{eq:airy}, but it is straightforward to also apply the methodology to the 2D Gaussian profile and Born and Wolf model. \par
In the first part of the experiment, we analytically replicate results similar to those in \cite{ram2006beyond, ram2013stochastic} for two static molecules, then observe how introducing diffusion affects the progression of the limit of accuracy $\delta_{D(\theta)}$ for the separation distance (obtained using \eqref{eq:smm:sepdist}), as this separation distance between the two molecules increases. We set $\lambda_x = \lambda_v = \lambda$ for simplicity.
Evaluating $\delta_{D(\theta)}$ analytically for the static case is performed as in \cite{ram2006beyond}, with a mean photon count, denoted $N_{phot}$, of $3000$. For the dynamic case, the molecules are observed during an interval of $[0, 1]$ seconds with the same mean photon count, and for diffusion coefficients $\sigma^2$ varying from $5\times10^{-3}$ to $10^{-4}$ $\mu$m$^2$/s. The parameters of the Airy profile are unchanged (i.e. $n_{\alpha}=1.4$, $\lambda_e = 0.52$ $\mu$m), as is the lateral magnification matrix ($M = 100\mathbb{I}_{2\times2}$). The estimate of the limit of accuracy is obtained by estimating the OIM for the mean locations $x_0$ and $v_0$ via the SMC-FS algorithm for 640 to 1024 datasets then applying \eqref{eq:smm:sepdist}.  The resulting estimated limits of accuracy $\hat{\delta}_{D(\theta)}$ are given in \cref{fig:smm:sepdist:sig}. 
The second part of the experiment involves similarly estimating the limits of accuracy $\delta_{D(\theta)}$ for various separation distances, but this time the diffusion coefficient remains fixed, i.e. $\sigma^2=10^{-4}$ $\mu$m$^2$/s, and the mean photon count $N_{phot}$ is set to vary between 100 and 4500. The resulting estimated limits of accuracy are given in \cref{fig:smm:sepdist:nphot}. \par 
As the separation distance $D(\theta)$ gets closer to zero, the limit of accuracy increases, indicating that estimates would become less accurate. Additionally, an inverse square root curve was fit to each set of estimated limits of accuracy in \cref{fig:smm:sepdist:sig} and \cref{fig:smm:sepdist:nphot}. This is consistent with results in \cite{ram2006beyond} that showed an inverse square root relationship between separation distance and $\delta^{static}_{D(\theta)}$ for two static molecules, and indicates that these results can be generalised to dynamic molecules. 
Additionally, in \cite{ober2004localization}, it is suggested that the limit of accuracy for the location of a static molecule, known as \textit{localisation accuracy} and denoted $\delta^{loc}$, is of the form $\frac{\sigma_a}{\sqrt{N_{phot}}}$ where $N_{phot}$ is the mean photon count and $\sigma_a$ the standard deviation of the photon detection profile. 
The interpretation for this is that the quality of location estimates of a single static molecule deteriorates as the measurement uncertainty $\sigma_a$ increases. Now in \cite{ram2013stochastic}, it is proven that the limit of accuracy for the separation distance between two molecules $\delta^{static}_{D(\theta)}$ and the localisation accuracy for each of these molecules are related as follows:
\begin{equation}
	\label{eq:smm:locd}
	H^{sta}_{N_{phot}}:=\lim_{D(\theta)\rightarrow\infty}\delta^{static}_{D(\theta)} = \sqrt{\left(\delta^{sta, loc}_{x_0}\right)^2+\left(\delta^{sta, loc}_{v_0}\right)^2},
\end{equation} 
where $\delta^{sta, loc}_{x_0}$ and $\delta^{sta, loc}_{v_0}$ denote the localisation accuracy for the first and second (static) molecule observed independently with cumulative mean photon count $N_{phot}$, respectively. Even though the separation distance goes to infinity, its limit of accuracy $\delta_{D(\theta)}$ remains finite. This means that as $D(\theta)\rightarrow\infty$, evaluating the limit of accuracy for the separation distance between two (static) molecules becomes equivalent to two independent localisation accuracy problems. It also means that $\delta^{static}_{D(\theta)}$ is similarly affected by measurement uncertainty $\sigma_a$ as are the localisation accuracies for the two molecules. \par
In this experiment, the introduction of diffusion negatively affects the improvement in estimation accuracy as the mean distance of separation between the two molecules increases. This is evidenced in \cref{fig:smm:sepdist:sig} by the more and more slowly decaying limits of accuracy as the value of $\sigma^2$ increases, and in \cref{fig:smm:sig:comp} by the linearly increasing trend in $\hat{\delta}_{D(\theta)}$ for all values of $D(\theta)$ as $\sigma$ increases. As a result, the diffusion coefficient in the dynamic model can be translated into additional observation uncertainty which affects $\delta_{D(\theta)}$ in a way reminiscent of how $\sigma_a$ affects $\delta^{static}_{D(\theta)}$. More generally, from our numerical results, we observe the relationship for our dynamic application behaves qualitatively as 
\begin{equation*}
	\sqrt{\frac{\sigma_a^2+ \sigma^2}{N_{phot}}},
\end{equation*}
where, as above, $\sigma_a$ is the standard deviation of the photon detection process, also known as measurement uncertainty. \par
We now investigate the relationship between $\delta_{D(\theta)}$ and the dynamic equivalent to the localisation accuracy, namely the limit of accuracy for the mean locations $x_0$ and $v_0$ of each individual, stochastically moving molecule, denoted $\delta^{sto, loc}_{x_0}$ and $\delta^{sto, loc}_{v_0}$, respectively. The limits $\delta^{sto, loc}_{x_0}$ and $\delta^{sto, loc}_{v_0}$ can be estimated independently by repeatedly taking the mean estimated OIM for $x_0$ and $v_0$ based on two separate sets of 640 simulated datasets (one for each molecule) for mean photon counts ranging from 50 to 2250 (half of $N_{phot}$ each, given we have $\lambda_x = \lambda_v = \lambda$ under current settings). The distance $$H^{sto}_{N_{phot}}:=\sqrt{\left(\delta^{sto, loc}_{x_0}\right)^2+\left(\delta^{sto, loc}_{v_0}\right)^2}$$ between the limits of accuracy $\delta^{sto, loc}_{x_0}$ and $\delta^{sto, loc}_{v_0}$ of each individual object with various (cumulative) mean photon counts $N_{phot}$ is illustrated as horizontal lines in \cref{fig:smm:sepdist:nphot}, which appear to act as asymptotes, thus indicating that the relationship in \eqref{eq:smm:locd} can be generalised to stochastically moving molecules. While the introduction of diffusion leads to less accurate estimates, \cref{fig:smm:sepdist:nphot} displays a stronger decay in the limit of accuracy as the mean photon count $N_{phot}$ increases, thus indicating that increasing the mean photon count $N_{phot}$ improves those estimates, as was the case for static molecules in \cite{ram2006beyond}. This is reinforced in \cref{fig:smm:nphot:comp}, which also suggests that the relationship between $\delta_{D(\theta)}$ and $N_{phot}$ is an inverse square root. This is also a generalisation to the dynamic case of results in \cite{ram2006beyond} which showed an inverse square root relationship between $\delta_{D(\theta)}^{static}$ and $N_{phot}$ for two static molecules. \par
In summary, this experiment employs the numerical framework developed in this paper for estimating the FIM of parameters of dynamic molecules using SMC in order to gain insights into generalising results from \cite{ram2006stochastic, ram2013stochastic} about the effects of separation distance, measurement uncertainty and mean photon count to a context in which the two molecules considered follow a SDE rather than being static. These effects, as well at that of the measurement uncertainty, can all be observed by applying our methodology and are summarised in \cref{tab:smm:dynamic}. We also summarise in \cref{tab:smm:dynamic} the results on the limits of accuracy for the drift and diffusion coefficients of a single stochastically moving molecule observed via the 2D Gaussian, Airy profiles and the Born and Wolf model from \cref{sec:fim:gaussian-airy} and \cref{sec:fim:bw}. 
Note that the limits of accuracy for the mean locations of each molecule, denoted $\delta_\theta :=(\delta_{\theta_1}, \delta_{\theta_2}, \delta_{\theta_3}, \delta_{\theta_4})^\intercal$, can also be estimated as part of our methodology (as their FIM is required for \eqref{eq:smm:sepdist}) and return similar relationships with separation distance, mean photon count, diffusion coefficient and measurement uncertainty as $\delta_{D(\theta)}$ (not reported here).\par
In this section, results on the relationship between the limits of accuracy for various parameters and the mean photon count $N_{phot}$ have been extended from a single static \cite{ober2004localization, chao2016fisher, ober2020quantitativech19} or deterministically moving molecule \cite{wong2010limit} to a molecule whose trajectory follows an SDE. Additionally, insights have been gained into generalising results for the optical microscope resolution problem, which considers the separation distance between two static molecules \cite{ram2006beyond, ram2006stochastic}, to two stochastically diffusing molecules. The qualitative relationships observed and summarised in \cref{tab:smm:dynamic} are important in an experimental design context, as they provide information on how the accuracy of parameter estimates is affected by various experimental setups. For example, the $\mathcal{O}(N^{-1/2}_{phot})$ relationship between limits of accuracy and mean photon count indicates that quadrupling the number of photons can help halve the standard deviation of parameter estimates.
\begin{table}[htbp!]
	{\footnotesize
		\centering
		\begin{tabular}{cccc}	
			\toprule
			\textbf{Limit of accuracy $\delta$} & \multicolumn{2}{c}{\textbf{Qualitative Dependence}} & \textbf{Reference}\\
			$\delta_\vartheta = \text{std}(\hat{\vartheta})$& Parameter & Relationship & \\
			\midrule
			$\delta_{D(\theta)}, \delta_{\theta}$ & $D(\theta)$ separation distance  &  $\mathcal{O}\left(D(\theta)^{-1/2}\right)$ & \cref{fig:smm:sepdist}\\		
			$\delta_{D(\theta)}, \delta_{\theta}$ & $\sigma^2$ diffusion coefficient  &$\mathcal{O}\left(\sigma\right)$ & \cref{fig:smm:sig:comp} \\
			$\delta_{D(\theta)}, \delta_{\theta}$ & $\sigma_a^2$ measurement uncertainty  &$\mathcal{O}\left(\sigma_a\right)$ & \cite{ram2006beyond, ober2004localization} \\
			$\delta_{D(\theta)}, \delta_{\theta}$ & $N_{phot}$ mean photon count & $\mathcal{O}\left(N_{phot}^{-1/2}\right)$ & \cref{fig:smm:nphot:comp}\\
			$\delta_{\sigma^2}, \delta_{b}, \delta_{z_0}$ & $N_{phot}$ mean photon count  & $\mathcal{O}\left(N_{phot}^{-1/2}\right)$ & \cref{fig:fim:gaussian-airy}, \cref{fig:fim:bw} \\
			\bottomrule
		\end{tabular}
		\caption{Summary of the qualitative relationships between the limits of accuracy (or standard deviation of parameter estimates) $\delta_\theta :=(\delta_{\theta_1}, \delta_{\theta_2}, \delta_{\theta_3}, \delta_{\theta_4})^\intercal$ and $\delta_{D(\theta)}$ for the mean locations $\theta = (x_0, v_0) = (\theta_1, \theta_2, \theta_3, \theta_4)^\intercal$ and separation distance $D(\theta)$, respectively, of two stochastically diffusing molecules observed simultaneously. Also included in the table is the relationship between mean photon count and the limits of accuracy for the hyperparameters of the SDE trajectory (drift $b$ and diffusion $\sigma^2$ coefficients) and photon detection process (optical axis location $z_0$) of a single molecule. Note that when we increase the mean photon count $N_{phot}$, the observation interval length remains fixed.} \label{tab:smm:dynamic}
	}
\end{table}
\begin{figure}[htbp!]
	\centering
	\begin{subfigure}{.49\textwidth}
		\centering
		\includegraphics[width=\textwidth]{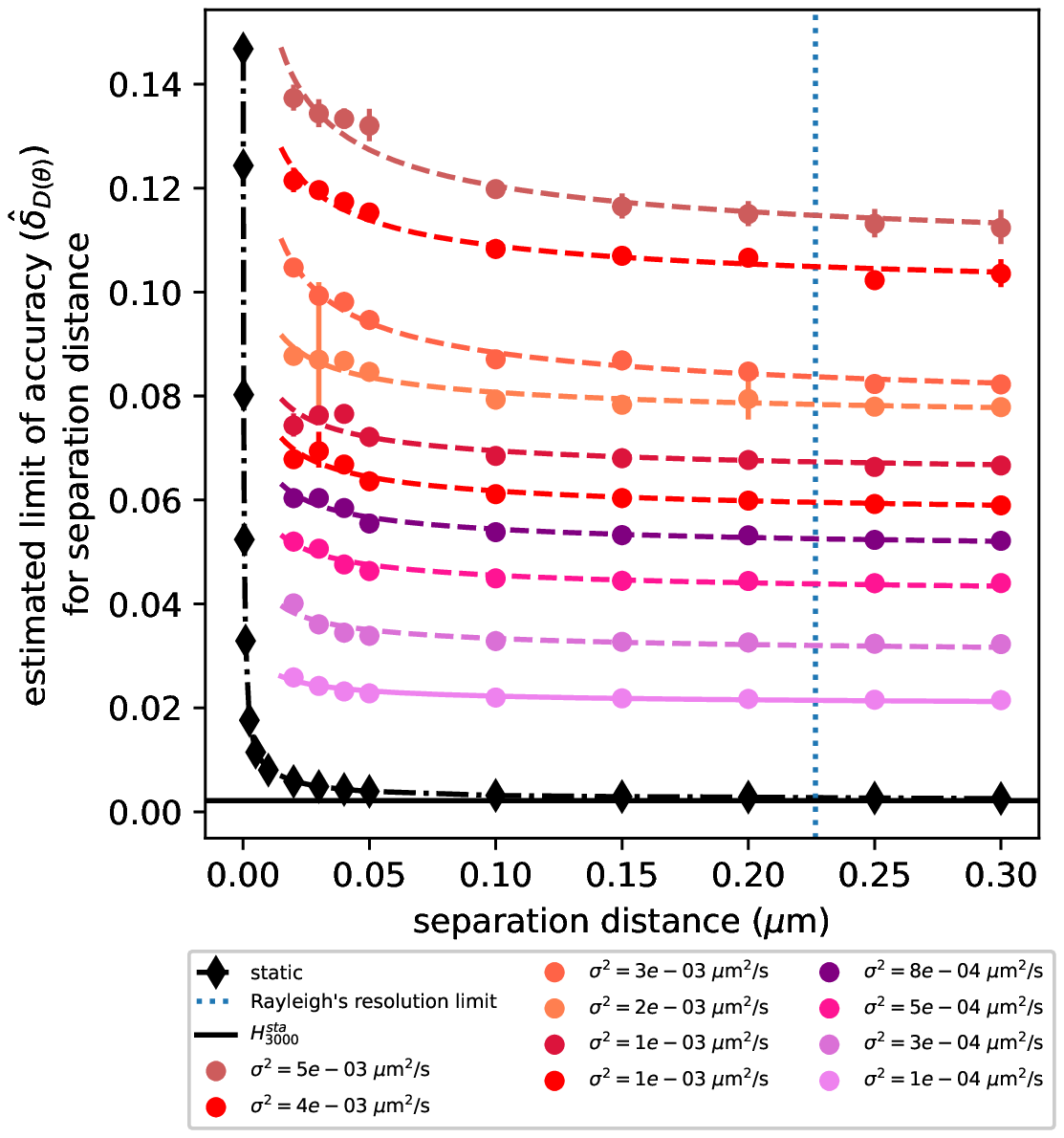} 
		\caption{}
		\label{fig:smm:sepdist:sig}
	\end{subfigure}
	\begin{subfigure}{.49\textwidth}
		\centering
		\includegraphics[width=\textwidth]{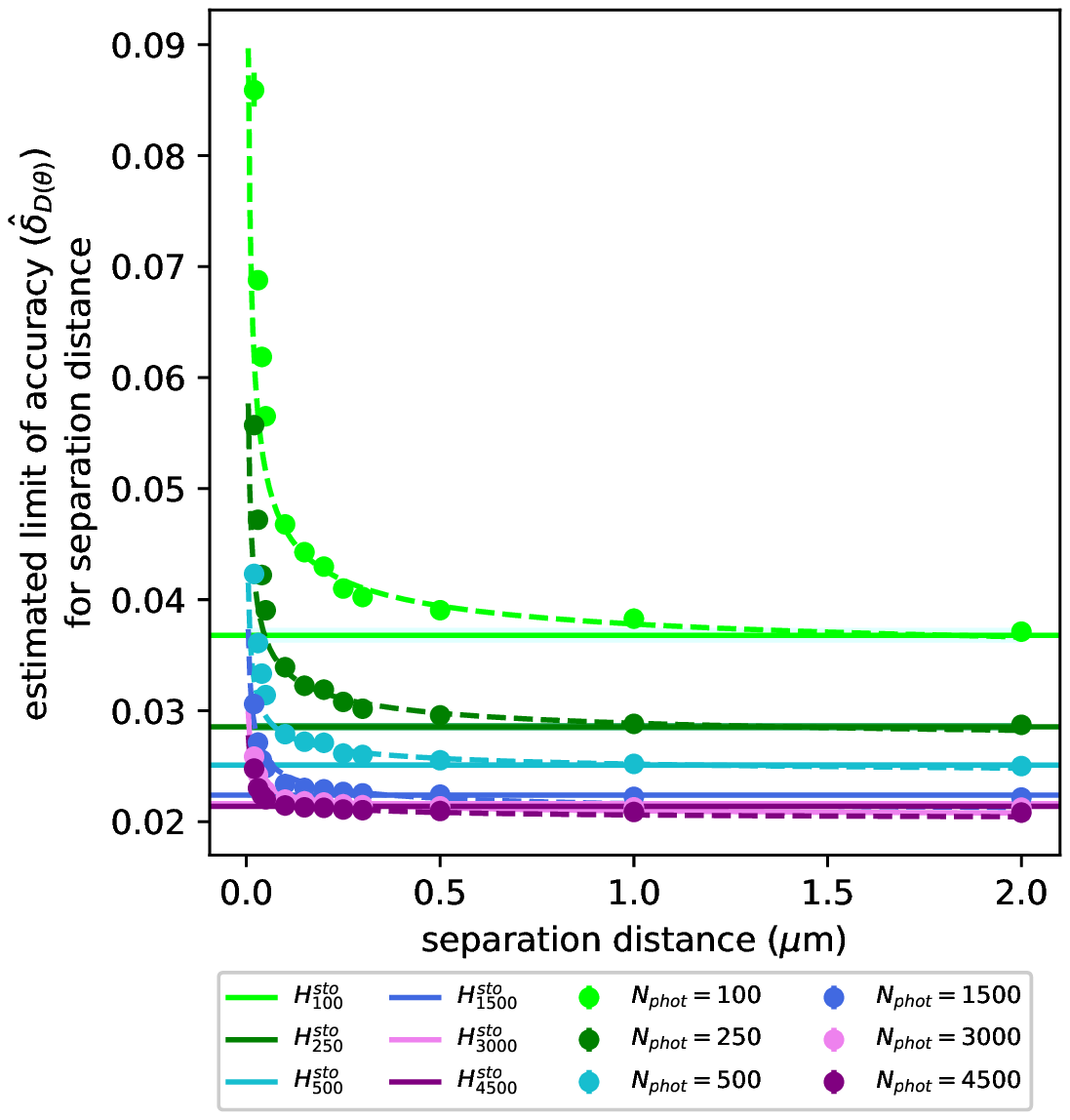} 
		\caption{}
		\label{fig:smm:sepdist:nphot}
	\end{subfigure}
	\caption[Estimated limit accuracy for the mean location ($\theta$) of two molecules with photon detection locations described by the Airy profile.]{\small Comparison of the evolution of the estimated limit accuracy for separation distances ranging from $20\times 10^{-3}$ to $2$ $\mu$m for various  \textbf{(a)} diffusion coefficient ($\sigma^2$) values \textbf{(b)} mean photon counts ($N_{phot}$). The limit of accuracy for the separation distance $\delta_{D(\theta)}$, where $\theta = (x_0, v_0) = (\theta_1, \theta_2, \theta_3, \theta_4)$, is estimated using the square root of the CRLB obtained using \eqref{eq:smm:sepdist} (in the dynamic case) and evaluated using analytical results from \cite{ram2006beyond} (in the static case). The estimates of $\mathcal{I}_n(\theta)$ in \eqref{eq:smm:sepdist} are obtained by running the SMC-FS algorithm with 500 particles for 640 to 1024 simulated datasets. For the dynamic case, the molecule trajectories are initialised at their respective mean locations $x_0$ and $v_0$ and each is propagated according to its corresponding SDE \eqref{eq:smm:sepdist:x} during an interval of $[0, 1]$ seconds with \textbf{(a)} fixed and mean photon count $N_{phot}=3000$ \textbf{(b)} fixed diffusion coefficient $\sigma^2 = 10^{-4}$ $\mu$m$^2$/s. The observations are generated according to a mixture of Airy profiles \eqref{eq:smm:mixtg:s} with parameters as in \cref{ex:smm:obs}. This is repeated for \textbf{(a)} $\sigma^2$ varying from $10^{-3}$ to $10^{-4}$ $\mu$m$^2$/s \textbf{(b)} $N_{phot}$ varying from $100$ to $4500$. Finally, an inverse square root curve is fitted to each of the resulting sets of estimated limits of accuracy for comparison purposes.  Note that the \textit{pink} set of estimates and their corresponding \textit{solid} fitted curve in \textbf{(a)} coincide with those in \textbf{(b)}. In \textbf{(b)}, the horizontal lines correspond to the equivalent mean photon counts and represent the distances $H^{sto}_{N_{phot}}$ between the limits of accuracy $\hat{\delta}^{sto, loc}_{x_0}$ and $\hat{\delta}^{sto, loc}_{v_0}$ for the mean locations $x_0$ and $v_0$ of each individual object, estimated independently for each molecule using the SMC-FS algorithm. Note that any variation in estimates for low separation distances is due to Monte Carlo error, and can be reduced by increasing the number of simulated datasets.}
	\label{fig:smm:sepdist}
\end{figure}
\begin{figure}[htbp!]
	\centering
	\begin{subfigure}{.39\textwidth}
		\centering
		\includegraphics[width=\textwidth]{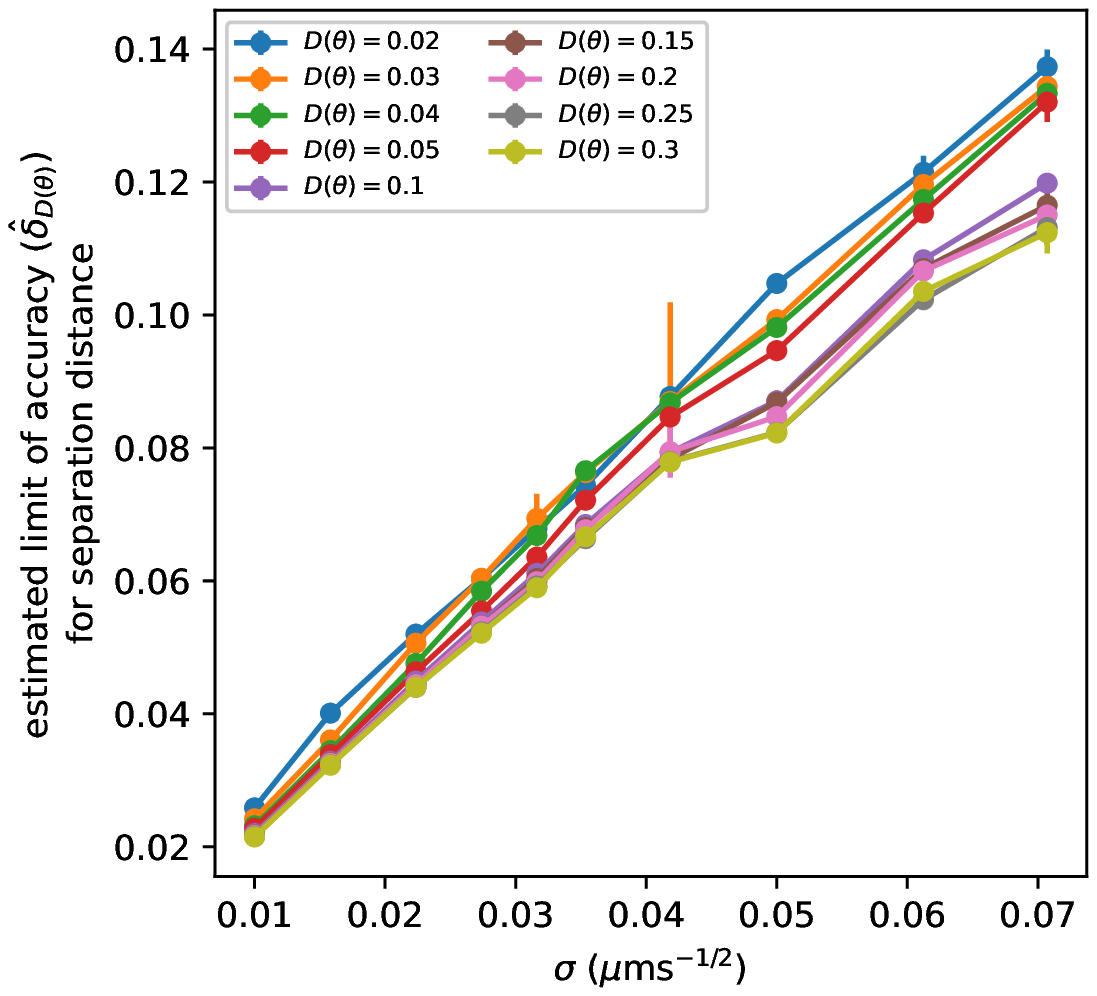} 
		\caption{}
		\label{fig:smm:sig:comp}
	\end{subfigure}
	\begin{subfigure}{.39\textwidth}
		\centering
		\includegraphics[width=\textwidth]{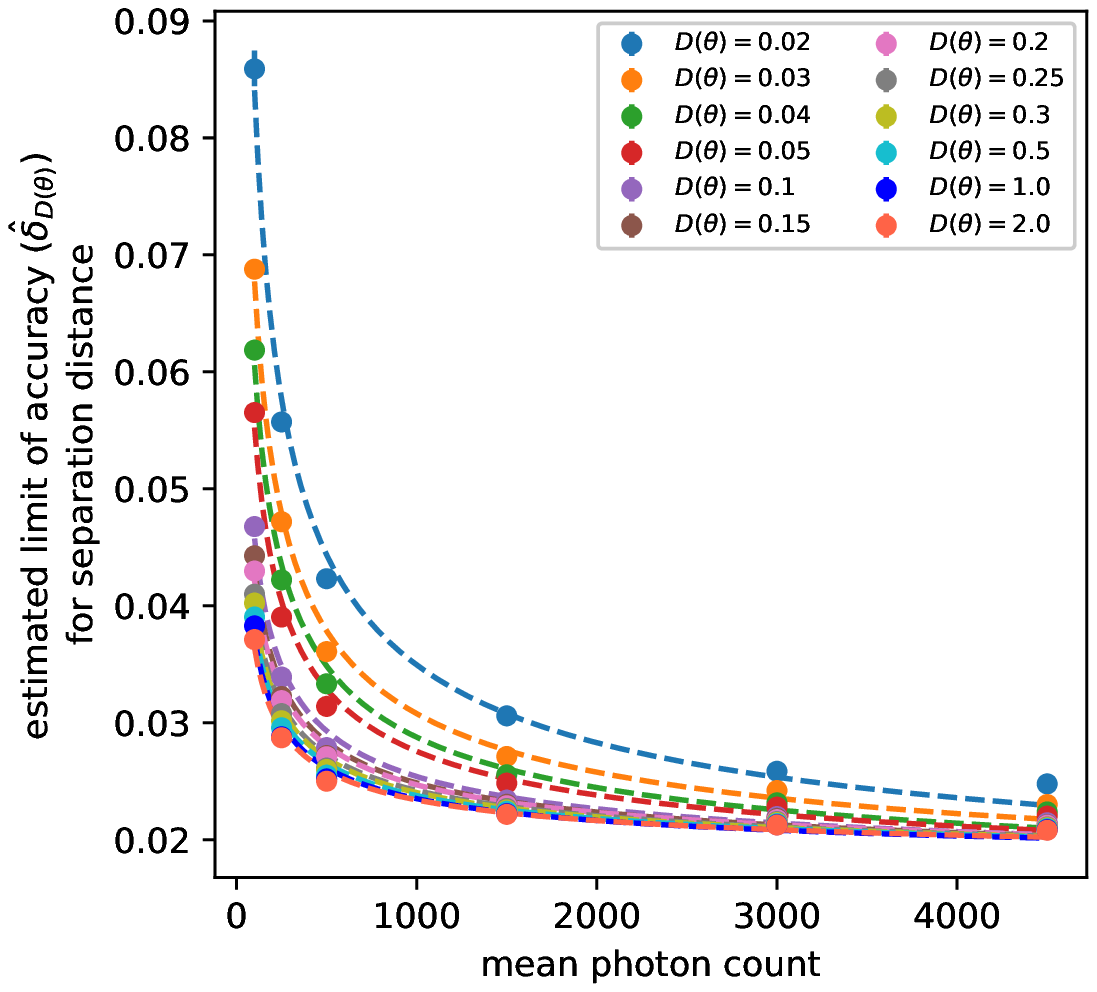} 
		\caption{}
		\label{fig:smm:nphot:comp}
	\end{subfigure}
	\caption[Evolution of the estimated limit of accuracy for the separation distance (obtained using \eqref{eq:smm:sepdist}) between two diffusing molecules for $\sigma$ ranging from $\sqrt{10^{-4}}$ to $\sqrt{5\times10^{-3}}$ $\mu$m s$^{-1/2}$.]{Evolution of the estimated limit of accuracy for the separation distance $\delta_{D(\theta)}$ (obtained using \eqref{eq:smm:sepdist}) between two stochastically moving molecules observed simultaneously for \textbf{(a)} $\sigma$ ranging from $\sqrt{10^{-4}}$ to $\sqrt{5\times10^{-3}}$ $\mu$m s$^{-1/2}$ \textbf{(b)} $N_{phot}$ ranging from 100 to 4500. Estimates are obtained through the same algorithm and parameters as in \textbf{(a)} \cref{fig:smm:sepdist:sig} \textbf{(b)} \cref{fig:smm:sepdist:nphot}, with separation distances ranging from $20\times 10^{-3}$ to $2$ $\mu$m. In \textbf{(b)}, inverse square root curves are fitted to the resulting estimates $\hat{\delta}_{D(\theta)}$ for comparison.}
	\label{fig:smm:comp}
\end{figure}
\section{Conclusion} \label{sec:smm:conc}
In this paper, we introduced an SMC approach to performing parameter inference when tracking a molecule with stochastic trajectory for a fixed time interval. The three main aspects of this fundamental model in single-molecule microscopy were the true location of the molecule in the object space, which follows a linear SDE, the Poisson distributed arrival process of the photons it emits on the detector in the image space, and the arrival location of those photons on the detector, which follows either a 2D Gaussian, Airy profile, or Born and Wolf model. \par 
First of all, we discretised the time interval in order to formulate the problem as a discrete-time state space model, in which all states are equally spaced in time, but a number of observations are marked as missing. From this, SMC methods were applied for parameter inference. A general forward smoothing algorithm was employed to estimate the score and OIM of the data regardless of the distribution of the photon locations. For the first time, this allowed for the estimation of the FIM and hence the limit of accuracy (square root of the CRLB), which could not be done before for the Airy profile and Born and Wolf model, and could only be achieved analytically for a specific set of photon detection times for the 2D Gaussian profile. The methodology was subsequently applied to characterise the precision limits for estimating the separation distance between two moving molecules, thus providing new insights into results for the static case from \cite{ram2013stochastic}. The outcome of our numerical work was summarised in \cref{tab:smm:dynamic}, which sums up the qualitative behaviours of the limits of accuracy  as functions of the mean photon count, separation distance, diffusion coefficient and measurement uncertainty. \par
Although for the first time a method has been described to estimate the limit of accuracy for the hyperparameters of dynamic single molecules with non-uniform observation times and complex measurement models, such as the Airy profile or Born and Wolf model, there is scope to use the techniques developed here to provide a wider range of more computationally efficient approaches. Indeed, an advantage of the straightforward state space model formulation of the problem is access to the vast range of filtering and smoothing algorithms available. While we employed forward smoothing, any kind of particle smoothing algorithm would be suitable, and indeed, the SMC-FS algorithm of \cite{del2010forward} employed for forward smoothing, even though it mitigates issues related to path degeneracy, is of $\mathcal{O}(N^2)$ complexity. For example, the PaRIS algorithm of \cite{olsson2017efficient} can reduce the complexity of the algorithm to linear. 
\bibliography{references}

\begin{appendices}
	\crefalias{section}{appendix}
\section{Validity of the time discretisation} \label{sec:discretisation}
Given a realisation of the observations $(t_1, y_{t_1}), \ldots, (t_n, y_{t_n})$ observed in the time interval $[t_0,T]$, we adopt a discrete time formulation in our methodology where $[t_0,T]$ is divided into segments of length $\Delta$. We assume the discretisation is fine enough so that an interval $(k\Delta,k\Delta+\Delta]$ contains at most one arrival time $t_{i}$. We now prove in \cref{prop:discret} that this discretisation is a valid approximation of the homogeneous Poisson process which is used in \cite{ober2004localization, ober2020quantitative} to describe photon detection.
\begin{proposition} \label{prop:discret}
	Let the photon detection process $\left\{N(t)\right\}_{t\geq t_0}$ be a homogeneous Poisson process with photon detection rate $\lambda>0$. The probability of observing $k$ photons during the time interval $[t, t+h]$ for $t\geq t_0$ and $h>0$ is 
	\begin{equation} \label{eq:poisson}
		\mathbb{P}\left(N(t+h)-N(t) = k\right) = \frac{\exp\left[-\lambda h\right](\lambda h)^k}{k!}.
	\end{equation}
	Discretise the interval $[t, t+h]$ into segments of length $\Delta$, so that a single segment contains at most one arrival time. Then as $\Delta \rightarrow 0$, the probability of observing $k$ is also given by \eqref{eq:poisson}.
\end{proposition}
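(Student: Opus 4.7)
The plan is to show that under the discretization into segments of length $\Delta$ with at most one arrival per segment, the count $N(t+h)-N(t)$ is approximated by a Binomial random variable, and then to invoke the classical Poisson limit theorem. Concretely, partition $[t, t+h]$ into $n := \lceil h/\Delta \rceil$ disjoint subintervals of length $\Delta$ (the final one possibly shorter, which contributes negligibly as $\Delta\to 0$). Within each subinterval, by the infinitesimal characterization of a homogeneous Poisson process recalled in the paper, namely
\begin{equation*}
	\mathbb{P}\bigl(N(s+\Delta)-N(s) = 1\bigr) = \lambda\Delta + o(\Delta), \qquad \mathbb{P}\bigl(N(s+\Delta)-N(s)\geq 2\bigr) = o(\Delta),
\end{equation*}
the assumption that at most one photon arrives per segment corresponds to neglecting the $o(\Delta)$ term for multi-arrivals, so that each segment is Bernoulli with success probability $p_\Delta := \lambda\Delta + o(\Delta)$.

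Next, because the Poisson process has independent increments on disjoint intervals, the Bernoulli indicators across the $n$ segments are independent, and the total number of observed photons under the discretization is
\begin{equation*}
	N_\Delta(t,t+h) \sim \mathrm{Binomial}(n, p_\Delta), \qquad n = \lceil h/\Delta \rceil.
\end{equation*}
Writing $p_\Delta = \lambda h /n + o(1/n)$, one has $n p_\Delta \to \lambda h$ as $\Delta \to 0$ (equivalently $n\to\infty$). The classical Poisson limit theorem then gives, for each fixed $k \in \mathbb{N}$,
\begin{equation*}
	\mathbb{P}\bigl(N_\Delta(t,t+h) = k\bigr) = \binom{n}{k} p_\Delta^{k} (1-p_\Delta)^{n-k} \;\longrightarrow\; \frac{e^{-\lambda h}(\lambda h)^k}{k!},
\end{equation*}
which is exactly \eqref{eq:poisson}. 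This establishes the claim.

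The only slightly delicate point, and the place I would spend the most care, is handling the $o(\Delta)$ remainder uniformly across the $n = O(1/\Delta)$ segments so that the aggregate error still vanishes. This can be done by writing $(1-p_\Delta)^{n} = \exp\bigl(n \log(1-p_\Delta)\bigr)$ and expanding $\log(1-p_\Delta) = -p_\Delta + O(p_\Delta^2)$, then noting that $n \cdot O(p_\Delta^2) = O(\Delta) \to 0$, while $n p_\Delta \to \lambda h$; the binomial coefficient $\binom{n}{k}$ behaves as $n^k/k!$ and combines with $p_\Delta^k$ to give $(\lambda h)^k/k!$ in the limit. I would also briefly remark that the assumption of at most one arrival per segment is automatically satisfied with probability tending to one as $\Delta\to 0$, since the probability that any of the $n$ segments contains two or more arrivals is bounded by $n \cdot o(\Delta) = o(1)$, so nothing is lost by conditioning on this event.
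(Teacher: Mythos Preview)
Your proposal is correct and follows essentially the same route as the paper: model the discretized count as $\mathrm{Binomial}(\lceil h/\Delta\rceil, p_\Delta)$ and pass to the Poisson limit as $\Delta\to 0$. The only cosmetic differences are that the paper takes $p_\Delta = \lambda\Delta$ exactly (treating it as the definition of the discretized model rather than deriving it from the infinitesimal characterization, so no $o(\Delta)$ bookkeeping is needed) and computes the limit by a direct squeeze argument on $\lceil h/\Delta\rceil$ rather than invoking the Poisson limit theorem by name; your extra remarks on the $o(\Delta)$ remainder and on the multi-arrival event having vanishing probability are sound additions but not present in the paper's version.
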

\begin{proof}
	Discretising the interval into segments of length $\Delta$ as such, we now have for this interval a Binomial random variable $M_{\Delta} \sim \text{Binomial}(\left\lceil\frac{h}{\Delta}\right\rceil, \lambda \Delta)$ with $\left\lceil\frac{h}{\Delta}\right\rceil$ trials, with probability of success (i.e. a photon is observed) $\lambda \Delta$ and probability of failure (i.e. no photon is observed) $1-\lambda \Delta$. The probability observing $k$ photons in the interval $[t, t+h]$ is
	\begin{equation*} 
		\mathbb{P}\left(M_{\Delta}  = k\right) = \begin{pmatrix}
			\left\lceil \frac{h}{\Delta}\right\rceil \\ k
		\end{pmatrix}
		(\lambda \Delta)^{k} (1-\lambda \Delta)^{\left\lceil\frac{h}{\Delta}\right\rceil-k}.
	\end{equation*}
	Taking the limit as $\Delta \rightarrow 0$, we have
	\begin{align*}
		\lim_{\Delta \rightarrow 0} \mathbb{P}\left(M_{\Delta}  = k\right)
		&= \lim_{\Delta \rightarrow 0} \frac{\left\lceil\frac{h}{\Delta}\right\rceil!}{\left(\left\lceil\frac{h}{\Delta}\right\rceil-k\right)!k!}
		(\lambda \Delta)^{k} (1-\lambda \Delta)^{\left\lceil\frac{h}{\Delta}\right\rceil-k} \nonumber \\
		&= \lim_{\Delta \rightarrow 0} \frac{\left\lceil\frac{h}{\Delta}\right\rceil^{k} + O\left(\left\lceil\frac{h}{\Delta}\right\rceil^{k-1}\right)}{k!}
		(\lambda \Delta)^{k} (1-\lambda \Delta)^{\left\lceil\frac{h}{\Delta}\right\rceil-k} \nonumber \\
		&= \lim_{\Delta \rightarrow 0} \frac{\Delta^k\left\lceil\frac{h}{\Delta}\right\rceil^{k}}{k!}
		\lambda^{k} (1-\lambda \Delta)^{\left\lceil\frac{h}{\Delta}\right\rceil-k}. \nonumber
	\end{align*}
	Employing the following property of the ceiling function
	\begin{equation*}\label{eq:ceil}
		\frac{h}{\Delta} \leq\left\lceil \frac{h}{\Delta}\right\rceil < \frac{h}{\Delta}+1, 
	\end{equation*}
	we have
	\begin{align*}
		&\lim_{\Delta \rightarrow 0} \frac{\Delta^{k}\left(\frac{h}{\Delta}\right)^{k}}{k!}
		\lambda^{k} (1-\lambda \Delta)^{\frac{h}{\Delta}-k}
		&\leq \lim_{\Delta \rightarrow 0} \mathbb{P}\left(M_{\Delta}  = k\right) 
		&< \lim_{\Delta \rightarrow 0} \frac{\Delta^{k}\left(\frac{h}{\Delta}+1\right)^{k}}{k!}
		\lambda^{k} (1-\lambda \Delta)^{\frac{h}{\Delta}+1-k} \nonumber\\
		&\implies \lim_{\Delta \rightarrow 0} \frac{h^{k}}{k!}
		\lambda^{k} (1-\lambda \Delta)^{\frac{h}{\Delta}-k}
		&\leq \lim_{\Delta \rightarrow 0} \mathbb{P}\left(M_{\Delta}  = k\right)  
		&< \lim_{\Delta \rightarrow 0} \frac{\left(h+\Delta\right)^{k}}{k!}
		\lambda^{k} (1-\lambda \Delta)^{\frac{h}{\Delta}+1-k} \nonumber \\ 
		&\implies \lim_{\Delta \rightarrow 0} \frac{\left(\lambda h\right)^{k}}{k!}
		(1-\lambda \Delta)^{\frac{h}{\Delta}-k}
		&\leq \lim_{\Delta \rightarrow 0} \mathbb{P}\left(M_{\Delta}  = k\right) 
		&< \lim_{\Delta \rightarrow 0} \frac{\left(\lambda h\right)^{k}}{k!}
		(1-\lambda \Delta)^{\frac{h}{\Delta}-(k-1)}. \nonumber 
	\end{align*}
	Finally, employing the following results
	\begin{align*} 
		&\lim_{\Delta \rightarrow 0} (1-\lambda \Delta)^{\frac{h}{\Delta}} = \lim_{\frac{h}{\Delta} \rightarrow \infty} \left(1-\frac{\lambda h}{ \frac{h}{\Delta}}\right)^{\frac{h}{\Delta}} = \exp(-\lambda h), \\	
		&\lim_{\Delta \rightarrow 0} (1-\lambda \Delta)^{-k} = \lim_{\Delta \rightarrow 0} (1-\lambda \Delta)^{-(k-1)} = 1,
	\end{align*}
	yields the desired probability
	\begin{equation*} 
		\lim_{\Delta \rightarrow 0} \mathbb{P}\left(M_{\Delta}  = k\right)   = \frac{\left(\lambda h\right)^{k}}{k!} \exp(- \lambda h).
	\end{equation*}
\end{proof}
This paper mainly considers the situation in which $\lambda$ is a scalar, but this proof can be generalised to the situation where the Poisson process is inhomogeneous. As suggested in \cref{prop:discret}, the approximation of the Poisson process becomes increasingly more accurate as the discrete segment length $\Delta$ becomes smaller.
\section{Particle filtering in single-molecule microscopy}
\label{sec:supp:pf}
Given our reformulation of the fundamental model as a discrete state space model, a particle filter, summarised in \cref{alg:smc:PF}, can be applied to track the state of stochastically moving particles.
\begin{algorithm}[htbp]
	\caption{Particle filter}\label{alg:smc:PF}
	\begin{algorithmic}[1]
		\Statex \textit{Where $(i)$ appears, the operation is performed for all $i \in \{1, \ldots, N\}$.}
		\Statex Initialise at $k=1$: 
		\State Sample $X_1^{(i)} \sim \eta_1$ where $\eta_1$ is the initial user-defined proposal density.
		\State Initialise importance weights $w_1\left(X_1^{(i)}\right) = \frac{G^\theta_1\left(X_1^{(i)}\right)\nu_\theta\left(X_1^{(i)}\right)}{\eta_1\left(X_1^{(i)}\right)}$ and normalise to obtain $\omega_1^{(i)}$.
		\Statex Given weighted particle sample $\left(X_{k-1}^{(1:N)}, \omega_{k-1}^{(1:N)}\right)$,
		\For{$k=2, \ldots, n$} 
		\State (\textsc{Resample}) $\left(\iota_{k-1}^{(1:N)}, \omega_{k-1}^{(1:N)}\right) := \textsf{resample}\left(\omega_{k-1}^{(1:N)}\right)$. \label{alg:step:resamp}
		\State (\textsc{Propagate}) Sample $X^{(i)}_{k} \sim \eta_k\left(\cdot | X_{k-1}^{(\iota_{k-1}^{(i)})} \right)$, where $\eta_k$ is the proposal density.
		\State (\textsc{Weight}) Compute the incremental weights
		$$\tilde{w}\left(X^{(\iota_{k-1}^{(i)})}_{k-1}, X^{(i)}_{k}\right) = \frac{G^\theta_k\left(X^{(i)}_{k}\right)f^\theta_\Delta\left(X^{(i)}_{k}|X^{(\iota_{k-1}^{(i)})}_{k-1}\right)}{\eta_k\left(X^{(i)}_{k}|X^{(\iota_{k-1}^{(i)})}_{k-1}\right)},$$ then update and normalise the importance weights to obtain $\omega_k^{(i)}$. \label{alg:step:weight}
		\EndFor
	\end{algorithmic}
\end{algorithm}
There are several approaches to resampling, studied in \cite{douc2005comparison, DelMoral2012a, del2012adaptive}. In this paper, we refer to the resampling step of algorithms as
$$\left(\iota_k^{(1:N)}, \omega_k^{(1:N)}\right) := \textsf{resample}\left(\omega_{k}^{(1:N)}\right).$$
An example of particle filtering for for stochastically moving molecules observed through the 2D Gaussian, Airy and Born and Wolf models is available in \cref{ex:smm:pf}.
\begin{exmp} \label{ex:smm:pf}
	Let the trajectory of a molecule be given by the SDE in \cref{ex:mol:1} and simulated three times (one for each measurement model) using the same parameters and for the same time interval. Observations are generated as per in \cref{ex:smm:obs} for the 2D Gaussian, Airy profiles and Born and Wolf model and the molecules are tracked using the bootstrap filter. The resulting estimated trajectories for each measurement model are given in \cref{fig:fim:pf}.
	\begin{figure}[htbp!]
		\centering
		\begin{subfigure}{.32\textwidth}
			\centering
			\includegraphics[width=\textwidth]{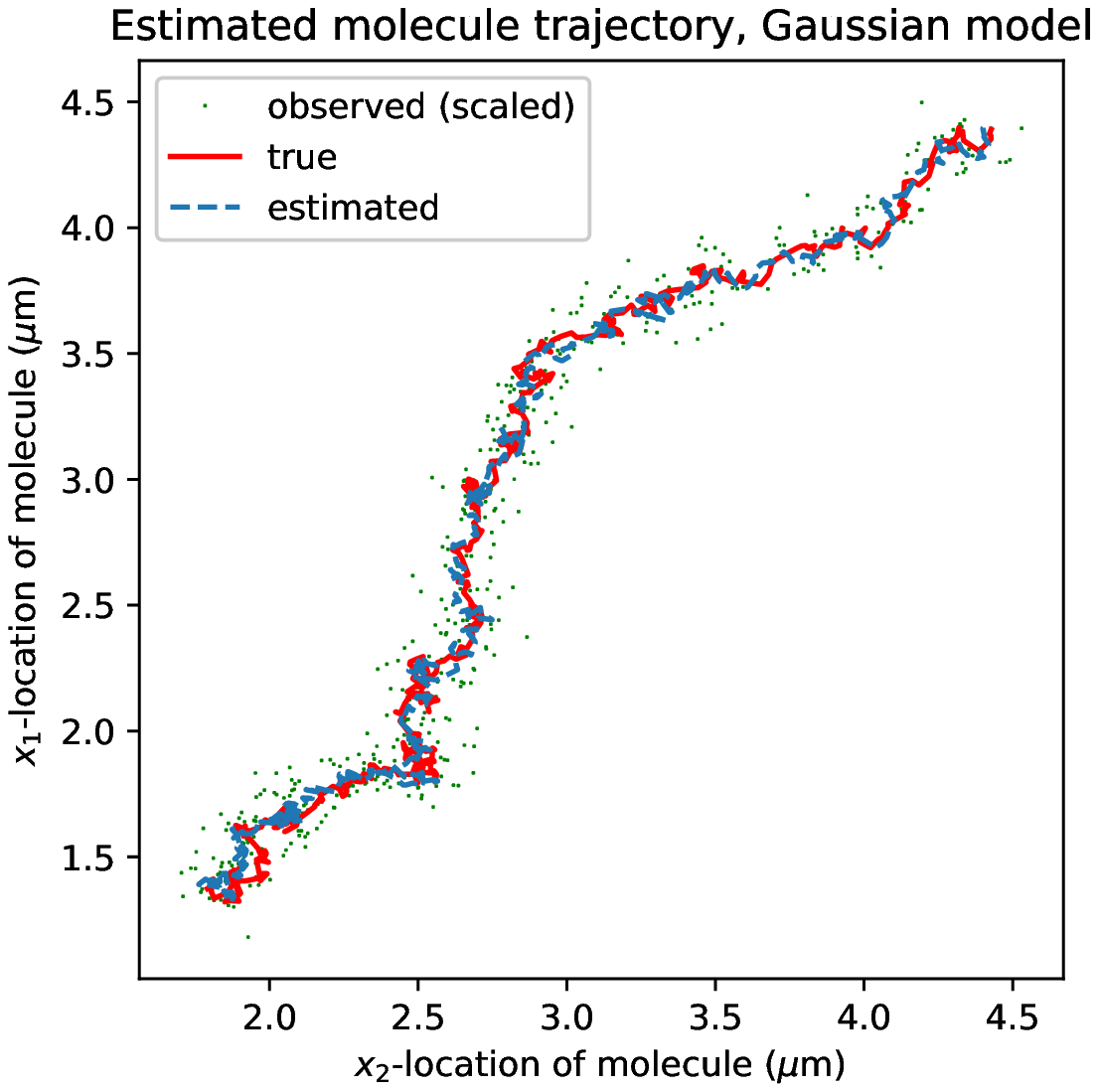} 
			\label{fig:sub-pf-gaussian}
		\end{subfigure}
		\begin{subfigure}{.32\textwidth}
			\centering
			\includegraphics[width=\textwidth]{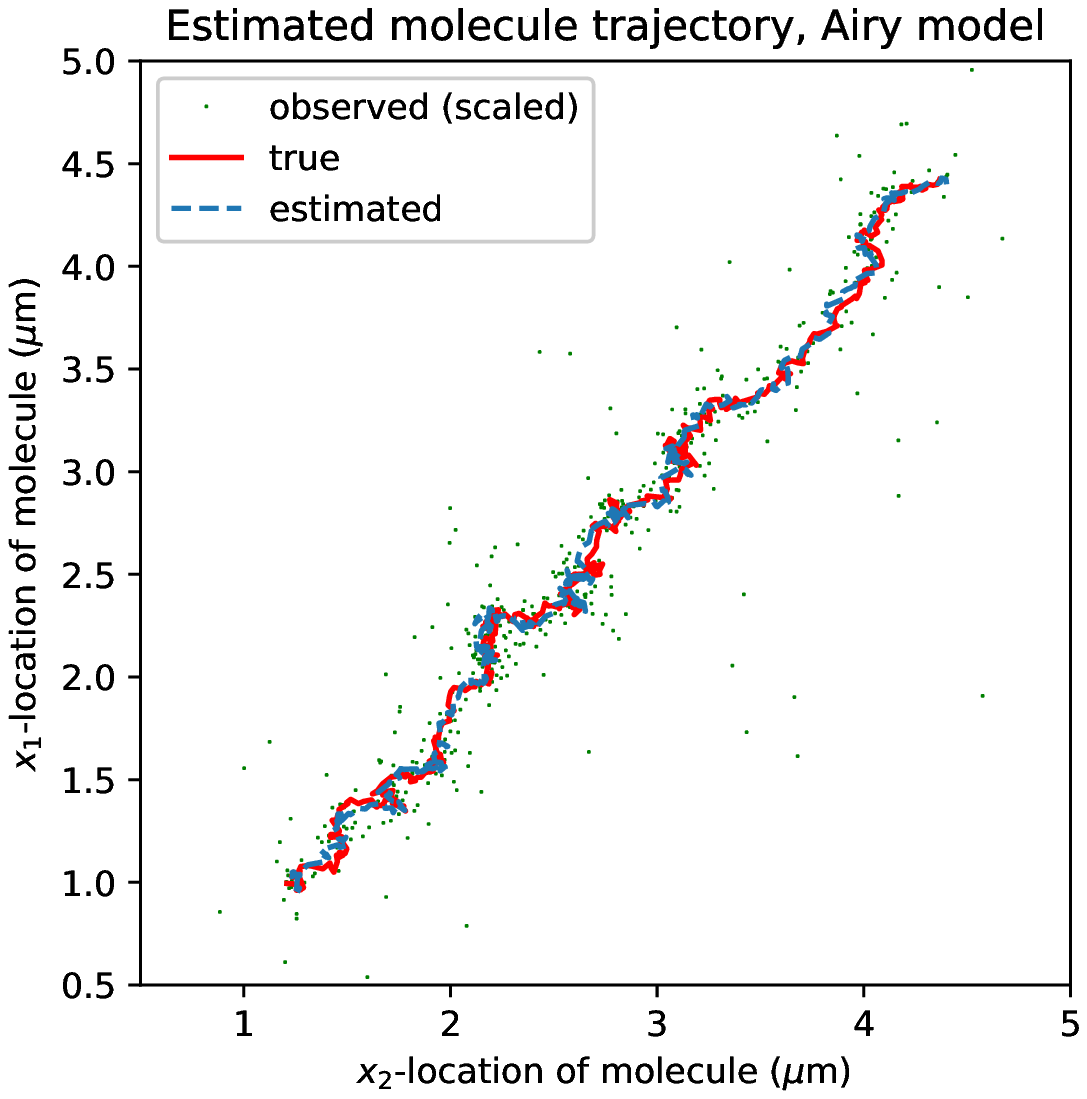} 
			\label{fig:sub-pf-airy}
		\end{subfigure}
		\begin{subfigure}{.34\textwidth}
			\centering
			\includegraphics[width=\textwidth]{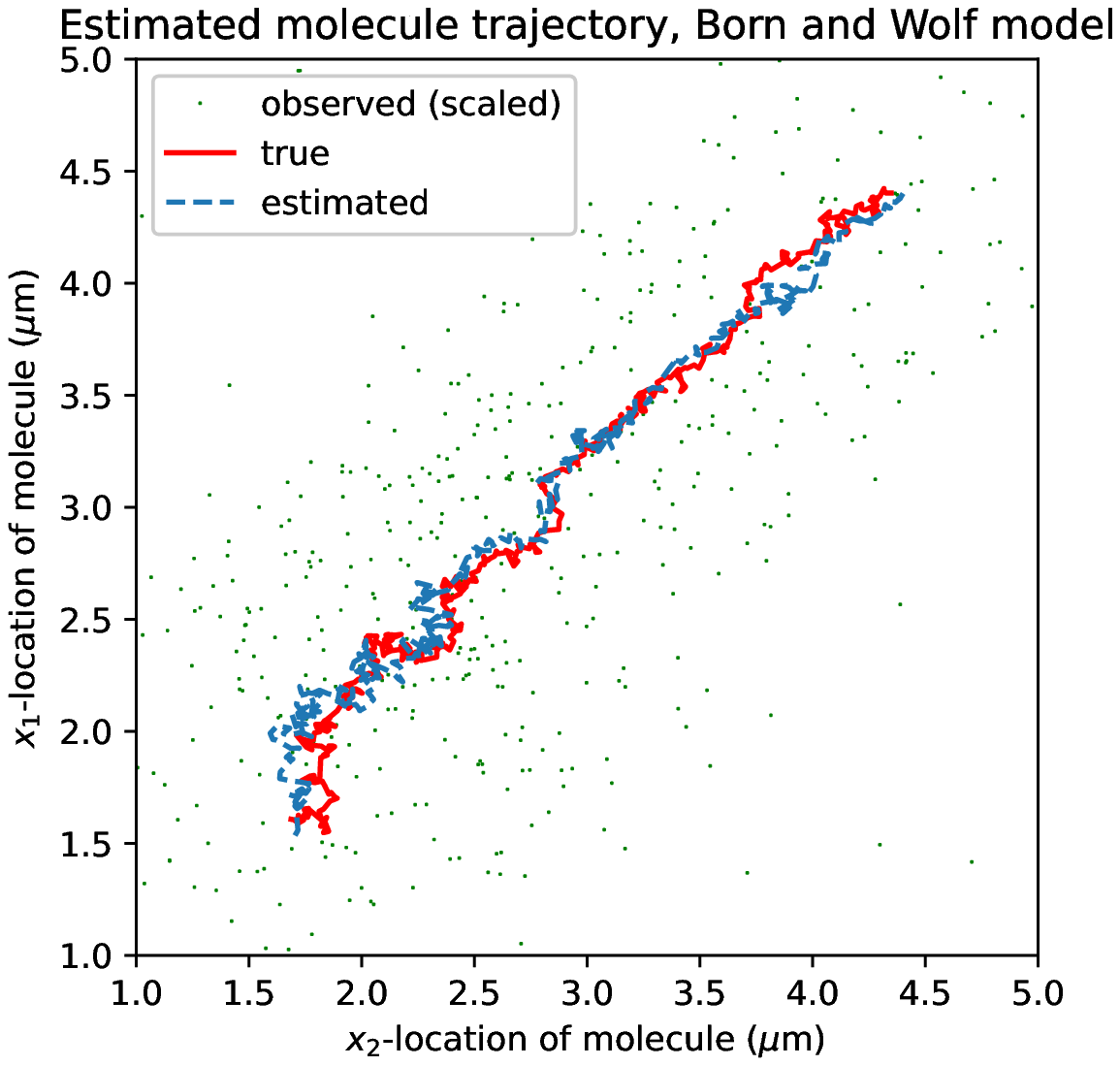} 
			\label{fig:sub-pf-bw}
		\end{subfigure}
		\caption[Estimated molecule trajectories for the 2D Gaussian, Airy profiles and Born and Wolf model.]{Estimated molecule trajectories for the 2D Gaussian (\textit{left}), Airy (\textit{middle}) profiles and Born and Wolf model (\textit{right}). The scaled observations (i.e. divided by $m$) for each measurement model are displayed for information.}
		\label{fig:fim:pf}
	\end{figure}
\end{exmp}
\section{Sufficient statistic for estimating the OIM by forward smoothing} \label{app:smm:suffstat}
In \cref{ex:score}, recall that the molecule trajectory is described by the following SDE in $d$-dimensional space
\begin{equation*}
	\text{d}X_t = b\mathbb{I}_{d\times d}X_t\text{d}t + \sqrt{2}\sigma\text{d}B_t,
\end{equation*}
where in the drift term, $b\neq 0$, in the diffusion term, $\sigma>0$, and $(\text{d}B_t)_{t_0\leq t\leq T}$ is a Wiener process. The log transition density can be written as 
\begin{equation}
	\label{eq:app3:logtrans}
	\log{f_{\Delta}^{\theta}(x_k| x_{k-1})}= -\frac{d}{2}\log\left(2\pi\sigma^2\right)+\frac{d}{2} \log(b)-\frac{d}{2}\log\left(e^{2\Delta b}-1\right)-\frac{b\norm{x_{k}-e^{\Delta b} x_{k-1}}^2}{2\sigma^2(e^{2\Delta b}-1)},
\end{equation}
where $\norm{x}^2 = x^\intercal x = \sum_{i=1}^d x_i^2$ for $d$-dimensional vector $x$. To obtain the sufficient statistics in \eqref{eq:ex:rec:alpha} and \eqref{eq:ex:rec:beta}, if the photon location process is distributed according to the Airy or 2D Gaussian profiles, it suffices to take the gradient and Hessian of the log transition density in \eqref{eq:app3:logtrans} with respect to the diffusion $\sigma^2$ and drift $b$ coefficients, i.e.
\begin{align*}
	\nabla \log{f_{\Delta}^{\theta}(x_k| x_{k-1})} = \begin{pmatrix} \textsf{g}_1 \\
		\textsf{g}_2 \\
	\end{pmatrix},
	\qquad 
	\nabla^2 \log{f_{\Delta}^{\theta}(x_k| x_{k-1})} = \begin{pmatrix} \textsf{H}_{11} & \textsf{H}_{12} \\
		\textsf{H}_{21} & \textsf{H}_{22} \\
	\end{pmatrix},
\end{align*}
where
\begin{itemize}
	\item Gradient w.r.t $\sigma^2$
	\begin{align*}
		\textsf{g}_1 &:= 
		-\frac{d}{2\sigma^2}+\frac{b\norm{x_k-e^{\Delta b}x_{k-1}}^2}{2\sigma^4\left(e^{2\Delta b}-1\right)}.
	\end{align*}
	\item Gradient w.r.t $b$
	\begin{align*}
		\textsf{g}_2 &:= \frac{d}{2b} -\frac{d\Delta e^{2\Delta b}}{\left(e^{2\Delta b}-1\right)} -\frac{\norm{x_{k}-e^{\Delta b}x_{k-1}}^2}{2\sigma^2(e^{2\Delta b}-1)}\\
		&+\frac{\Delta be^{\Delta b}(x_k- e^{\Delta b}x_{k-1})^{\intercal}x_{k-1}}{\sigma^2(e^{2\Delta b}-1)}
		+\frac{\norm{x_k - e^{\Delta b}x_{k-1}}^2\Delta be^{2\Delta b}}
		{\sigma^2(e^{2\Delta b}-1)^2}.
	\end{align*}
	\item Hessian w.r.t $\sigma^2$ then $\sigma^2$
	\begin{align*}
		\textsf{H}_{11} &:=
		\frac{d}{2\sigma^4}-\frac{b\norm{x_k-e^{\Delta b}x_{k-1}}^2}{\sigma^6\left(e^{2\Delta b}-1\right)}.
	\end{align*}
	\item Hessian w.r.t $b$ then $\sigma^2$ and vice versa
	\begin{align*}
		\textsf{H}_{12} = \textsf{H}_{21} &:=  \frac{\norm{x_{k}-e^{\Delta b}x_{k-1}}^2}{2\sigma^4(e^{2\Delta b}-1)}\\
		&-\frac{\Delta be^{\Delta b}(x_k- e^{\Delta b}x_{k-1})^{\intercal}x_{k-1}}{\sigma^4(e^{2\Delta b}-1)}
		-\frac{\norm{x_k - e^{\Delta b}x_{k-1}}^2\Delta be^{2\Delta b}}
		{\sigma^4(e^{2\Delta b}-1)^2}.
	\end{align*}
	\item Hessian w.r.t $b$ then $b$
	\begin{align*}
		\textsf{H}_{22}&:= -\frac{d}{2b^2} -\frac{2d\Delta^2 e^{2\Delta b}}{e^{2\Delta b}-1}
		+\frac{2d\Delta^2 e^{4\Delta b}}{\left(e^{2\Delta b}-1\right)^2}\\ 
		&+\frac{x_{k}^{\intercal}x_k}{\sigma^2}\left[\frac{2\Delta e^{2\Delta b} + 2\Delta^2 be^{2\Delta b}}{(e^{2\Delta b}-1)^2}
		-\frac{4\Delta^2 be^{4\Delta b}}{(e^{2\Delta b}-1)^3}\right]\\
		&+\frac{x_k^{\intercal}x_{k-1}}{\sigma^2}
		\left[\frac{2\Delta e^{\Delta b} + \Delta^2 b e^{\Delta b}}{e^{2\Delta b}-1}
		-\frac{4\Delta e^{3\Delta b}+8\Delta^2 be^{3\Delta b}}{(e^{2\Delta b}-1)^2}
		+\frac{8\Delta^2 be^{5\Delta b}}{(e^{2\Delta b}-1)^3}\right]\\
		&-\frac{x_{k-1}^{\intercal}x_{k-1}}{\sigma^2}
		\left[\frac{2\Delta e^{2\Delta b} + 2\Delta^2 be^{2\Delta b}}{e^{2\Delta b}-1}
		-\frac{2\Delta e^{4\Delta b} + 6\Delta^2 be^{4\Delta b}}{(e^{2\Delta b}-1)^2} 
		+ \frac{4\Delta^2 be^{6\Delta b}}{(e^{2\Delta b}-1)^3}\right].
	\end{align*}
\end{itemize}
\section{Parameter estimation} \label{sec:smm:em}
Being able to estimate the biophysical parameters of the molecular interactions is very important in single-molecule tracking. In this section, we present two \textit{maximum likelihood} (ML) estimation methods that make use of smoothed additive functionals.
\subsection{By gradient ascent}
Given observations $y_{1:n}$ of size $n \in \mathbb{N}$, the marginal log-likelihood of the observations $p_\theta(y_{1:n})$ may be maximised via the \textit{steepest ascent algorithm} \cite{lemarechal2012cauchy, cauchy1847methode}:
\begin{equation}
	\label{eq:mle:sa}
	\theta_{i+1} = \theta_{i}+\gamma_{i+1}\mathcal{G}_n(\theta_i),
\end{equation}
where $\mathcal{G}_n(\theta_i) = \nabla\log p_{\theta}(y_{1:n})|_{\theta=\theta_i}$ is the score vector evaluated at the current estimate $\theta_i$, and the step-size sequence $\left\{\gamma_i\right\}_{i=1}^{\infty}$ consists of small positive numbers and satisfies $\sum_i\gamma_i=\infty$ and $\sum_i\gamma_i^2<\infty$; for example, take $\gamma_i = i^{-a}$ where $0.5<a<1$. 
One can also include the observed information matrix in order to follow the \textit{Newton-Raphson algorithm} described in \cite{nocedal2006numerical}. In this case, \eqref{eq:mle:sa} becomes
\begin{equation*}
	\theta_{i+1} = \theta_{i}-\gamma_{i+1}\mathcal{H}_n(\theta_i)^{-1}\mathcal{G}_n(\theta_i),
\end{equation*}
where $\mathcal{H}_n(\theta_i) = \nabla^2\log p_{\theta}(y_{1:n})|_{\theta=\theta_i}$ is the observed information matrix evaluated at the current estimate $\theta_i$.
\subsection{By expectation-maximization (EM)}
Another approach to obtaining maximum likelihood estimates of the hyperparameters $\theta$ is to use the \textit{expectation-maximization} (EM) algorithm by \cite{dempster1977maximum, wu1983convergence} defined as follows:
\begin{itemize}
	\item \textsl{Expectation} step: given the current parameter estimate $\theta_{i}$ and observations $y_{1:n}$,
	\begin{equation*}
		\mathcal{Q}(\theta,\theta_i) = \mathbb{E}_{\theta_i}\left[\log p_{\theta}(X_{1:n}, y_{1:n})|y_{1:n}\right],
	\end{equation*}
	where the joint density $p_{\theta}(x_{1:n}, y_{1:n})$ is defined in \eqref{eq:joint} and the expectation is with respect to the posterior $p_{\theta_i}(x_{1:n}|y_{1:n})$.
	\item \textsl{Maximisation} step:
	\begin{equation*}
		\theta_{i+1} = \argmax_{\theta \in \Theta} \mathcal{Q}(\theta, \theta_i).
	\end{equation*}
\end{itemize}
Recall that the \textsl{Expectation} step cannot be done exactly when using the Airy or Born and Wolf profile. In this case, the posterior expectation can be estimated using particle approximations of smoothed additive functionals. First of all, let $S_{k}^{\theta}(x_{1:k}):=\log p_{\theta}(x_{1:k}, y_{1:k})$ denote the additive functionals of interest at step $k$. Their corresponding sufficient statistics such that $S_{k}^{\theta}(x_{1:k}) = \sum_{k=1}^ns^{\theta}_{k}(x_{k-1}, x_k)$ are given by
\begin{equation*}
	s^{\theta}_{k}(x_{k-1}, x_k) := \log f_{\Delta}^{\theta}(x_{k}|x_{k-1}) + \log G_k^{\theta}(x_{k}),
\end{equation*}
where for notational simplicity, $f_{\Delta}^{\theta}(x_{1}|x_{0}) := \nu_{\theta}(x_{1})$.
In the \textsl{Maximisation} step,  define the function $\Lambda$ to obtain the maximising argument of $\mathcal{Q}(\theta,\theta_i)$, 
\begin{equation*}
	\theta_{i+1} = \Lambda\left(n^{-1}\mathbb{E}\left[S_n^{\theta_i}(X_{1:n})|y_{1:n}\right]\right),
\end{equation*}
An example of parameter estimation of the drift and diffusion coefficients based on \cref{ex:score} using EM is available in \cref{ex:smm:me}.

\begin{exmp} \label{ex:smm:me}	
	Building on \cref{ex:score}, note that given the model specification in \eqref{eq:ex:score}, it is impossible to compute the maximum of $\mathcal{Q}(\theta, \theta_i)$ for the parameter $b\neq0$ directly. However, as seen previously, the equation can also be written such that we simply have
	\begin{equation*}
		X_k = \varphi_{\theta} X_{k-1} + W_x, 
		\quad W_x \sim \mathcal{N}\left(0, r_{\theta}\mathbb{I}_{2\times2}\right),
	\end{equation*}	
	where the \textit{auxiliary parameters} are given by
	\begin{equation*}
		\varphi_{\theta} := e^{\Delta b} 
		\qquad\text{and}\qquad r_{\theta} := 
		\frac{\sigma^2}{b}\left(e^{2\Delta b}-1\right).
	\end{equation*}
	It is straightforward to maximise $\mathcal{Q}(\theta, \theta_i)$ for the auxiliary parameters $\varphi_{\theta}$ and $r_{\theta}$ as follows: let $\left\{S_{l, k}(x_{1:k})\right\}_{l=1}^3$ denote the additive functionals of interest at time $k$ and $\left\{s_{l, k}(x_{k-1}, x_k)\right\}_{l=1}^3$ their corresponding sufficient statistics. Luckily, the sufficient statistics are easily obtained, since for the Gaussian and Airy profiles, the likelihood $G_k$ does not depend on $\theta$:
	\begin{align*}
		s_{1, k}(x_{k-1}, x_k)= x_k^{\intercal}x_{k-1}, \quad s_{2, k}(x_{k-1}, x_k)= x_{k-1}^{\intercal}x_{k-1}, \quad
		s_{3, k}(x_{k-1}, x_k)= x_k^{\intercal}x_{k}. 
	\end{align*}
	The maximisation function is given by
	\begin{equation*}
		\Lambda(c_1, c_2, c_3) = \left(\frac{c_3}{2} - \frac{c_1^2}{2 c_2}, \frac{c_1}{c_2}\right).
	\end{equation*}
	Finally, to obtain maximum likelihood estimates for $b$ and $\sigma^2$, simply use the following transformation:
	\begin{equation*}
		b = \Delta^{-1}\log{\varphi_{\theta}}\quad \text{and } \quad
		\sigma^2 = \frac{r_{\theta} \log{\varphi_{\theta}}}{\Delta(\varphi_{\theta}^2-1)}.
	\end{equation*}
	Note that when dealing with measurements distributed according to the Born and Wolf model, we must also estimate the optical axis location parameter $z_0$, which is done via gradient ascent.	
	\begin{figure}[htbp!]
		\centering
		\includegraphics[width=\textwidth]{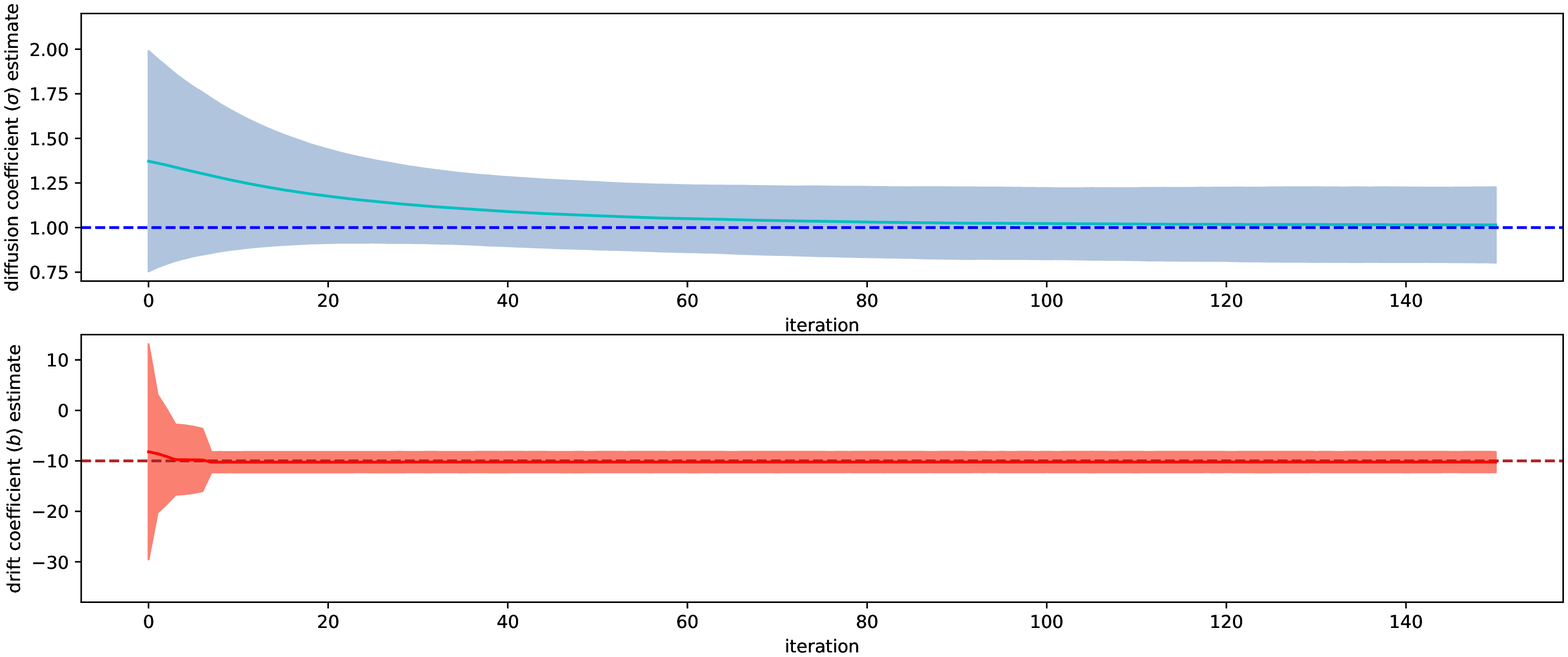} 
		\caption[Estimates of the diffusion ($\sigma^2$) and drift ($b$) coefficient over $150$ EM iterations or passes through the data.]{Estimates of the diffusion ($\sigma^2$) and drift ($b$) coefficient over $150$ EM iterations or passes through the data. The blue and red \textit{dashed lines} represent the true parameter values $\sigma^2=1$ $\mu$m$^2/$s and $b=-10$ s$^{-1}$, respectively. The red and blue \textit{solid lines} and \textit{bands} correspond to the mean estimates and their corresponding $95 \%$ confidence intervals over 50 datasets generated during the time interval $[0, 0.2]$ seconds, with initial location $x_0=(5.5, 5.5)^\intercal$ $\mu$m and a mean photon count of $1000$. The observations were generated according to the Airy profile with parameters as in \cref{ex:smm:obs} and the sufficient statistics were estimated using the PaRIS algorithm \cite{olsson2017efficient}.}
		\label{fig:smm:em}
	\end{figure}
\end{exmp}
\section{Score and OIM for a static molecule observed via the Airy profile} \label{app:smm:airy-static}
In \cref{ex:smm:fimest}, we consider the problem of estimating the FIM for the location parameters $(x_{1}, x_{2})$ of an in-focus static molecule. This is achieved by computing the score and OIM for the observed data. If the photon detection locations are described by the 2D Gaussian profile, the differentiation is straightforward, but in the case of the Airy profile \eqref{eq:airy}, the computations are more involved. \par 
Given observation $y\in\mathbb{R}^2$ and invertible lateral magnification matrix $M\in\mathbb{R}^{2\times2}$, for notational simplicity let $v:= M^{-1}y$, $r := \sqrt{(v_1-x_1)^2 + (v_2-x_2)^2}$ and $\alpha := \frac{2\pi n_{\alpha}}{\lambda_e}$. The log photon distribution profile \eqref{eq:image} is given by  
\begin{equation*}
	\log g(y) = -\log\left(|M|\right) + \log q(y),
\end{equation*}
where the image function is
$$q(y) = \frac{J_1^2(\alpha r)}{\pi r^2}.$$
First of all, use the relation $\frac{\partial}{\partial x} x^{-n}J_n(x) = -x^{-n}J_{n+1}(x)$ for $n \in \mathbb{N}$ in order to obtain the gradient and hessian of $q(y)$. Where the subscript $i$ appears, the result is valid for $i=1, 2$
\begin{align*}
	\frac{\partial q(y)}{\partial x_i}
	&= \frac{2\alpha}{\pi} \left(v_i-x_i\right) \frac{J_1(\alpha r)}{r} \frac{J_2(\alpha r)}{r^2},\\
	\frac{\partial^2 q(y)}{\partial x_i^2} 
	&= \frac{2\alpha^2}{\pi r^4} \left(v_i-x_i\right)^2 
	\left[J_1(\alpha r) J_3(\alpha r) 
	+ J_2^2(\alpha r)\right]
	-\frac{2\alpha}{\pi} \frac{J_1(\alpha r)}{r} \frac{J_2(\alpha r)}{r^2},\\
	\frac{\partial^2 q(y)}{\partial x_1\partial x_2} &= \frac{2\alpha^2}{\pi r^4} \left(v_1-x_1\right)\left(v_2-x_2\right)
	\left[J_1(\alpha r) J_3(\alpha r) 
	+ J_2^2(\alpha r)\right].
\end{align*}
To derive the components of the gradient and hessian of $\log q(y)$, we make use of the following identities:
\begin{align*}
	\nabla \log q(y) = \frac{\nabla q(y)}{q(y)} , \qquad 
	\nabla^2 \log q(y) = \frac{\nabla^2 q(y)}{q(y)} - \left[\nabla \log q(y)\right]^2.
\end{align*}
Therefore, for $i=1, 2$, the components of the log gradient are given by
\begin{align*}
	\frac{\partial \log q(y)}{\partial x_i}
	= \frac{2\alpha}{r} \frac{J_2(\alpha r)}{J_1(\alpha r)}\left(v_i-x_i\right),
\end{align*}
and the diagonal components of the log hessian are 
\begin{align*}
	\frac{\partial [\log q(y)]^2}{\partial x_i^2}
	= \frac{2\alpha^2}{r^2} \left(v_i-x_i\right)^2
	\left[\frac{J_3(\alpha r)}{J_1(\alpha r)} -
	\frac{J_2^2(\alpha r)}{J_1^2(\alpha r)}  \right]
	-\frac{2\alpha}{r} \frac{J_2(\alpha r)}{J_1(\alpha r)}.
\end{align*}
And finally, the cross terms are given by
\begin{align*}
	\frac{\partial [\log q(y)]^2}{\partial x_1 x_2}
	= \frac{2\alpha^2}{r^2} \left(v_1-x_1\right)\left(v_2-x_2\right)
	\left[\frac{J_3(\alpha r)}{J_1(\alpha r)} -
	\frac{J_2^2(\alpha r)}{J_1^2(\alpha r)}  \right].
\end{align*}
To summarise, the log gradient and negative log hessian for the Airy profile are
\begin{align*}
	\nabla \log g(y) = \gamma (M^{-1}y-x), &\qquad 
	\gamma =  \frac{2\alpha}{r} \frac{J_2(\alpha r)}{J_1(\alpha r)}, \\
	-\nabla^2 \log g(y) = \chi (M^{-1}y-x)(M^{-1}y-x)^{\intercal} + \gamma \mathbb{I}_{2\times2} , &\qquad  
	\chi =  -\frac{2\alpha^2}{r^2}
	\left[\frac{J_3(\alpha r)}{J_1(\alpha r)} -
	\frac{J_2^2(\alpha r)}{J_1^2(\alpha r)}\right].
\end{align*}
\end{appendices}

\end{document}